\documentclass[12pt]{article}
\usepackage{hyperref}
\marginparwidth 0pt
\oddsidemargin  0pt
\evensidemargin  0pt
\marginparsep 0pt
\topmargin -1.5truecm
\textheight 24.0 truecm
\textwidth 16.0truecm

\newtheorem{theorem}{Theorem}
\newtheorem{corollary}{Corollary}
\newtheorem{assumption}{Assumption}
\newtheorem{lemma}{Lemma}
\newtheorem{proposition}{Proposition}
\newtheorem{definition}{Definition}

\newtheorem{remark}{Remark}
\usepackage[toc,page]{appendix}
\newenvironment{proof}{\noindent{\bf Proof:}}{\hfill\fbox{}\vspace*{1mm}}
\usepackage[usenames]{color}
\usepackage{amsmath}
\usepackage{amssymb}
\usepackage{pdfpages}
\usepackage{graphicx}

\usepackage{subfigure}
\usepackage{float}
\usepackage{booktabs}
\usepackage[table]{xcolor}
\definecolor{lightgray}{gray}{0.9}

\RequirePackage[normalem]{ulem} 
\providecommand{\DIFdeltex}[1]{{\protect\color{red}\sout{#1}}}                      
\usepackage{xcolor}
\usepackage{float}
\usepackage{tikz}
\usepackage{changebar}\setcounter{changebargrey}{30}
\newif\ifdiff
\difftrue
\ifdiff
  \newcommand{\del}[1]{\DIFdeltex{#1}}

\else
  \newcommand{\del}[1]{}

\fi

\begin{document}
\title{\bf Optimal   Liquidation Problems     in a  Randomly-Terminated   Horizon }
\author{Qing-Qing Yang
\thanks{ Advanced Modeling and Applied Computing Laboratory,
Department of Mathematics, The University of Hong Kong, Pokfulam
Road, Hong Kong. E-mail: kerryyang920910@gmail.com.}
\and Wai-Ki Ching
\thanks{
Advanced Modeling and Applied Computing Laboratory,
Department of Mathematics,
The University of Hong Kong, Pokfulam Road, Hong Kong.
Hughes Hall, Wollaston Road, Cambridge, U.K.
School of Economics and Management,
Beijing University of Chemical Technology,
North Third Ring Road, Beijing, China.
E-mail: wching@hku.hk. }
\and Jia-Wen Gu
\thanks{Corresponding author. Department of Mathematics, Southern University of Science and Technology, Shenzhen, China. E-mail: jwgu.hku@gmail.com.}
\and Tak Kwong Wong
\thanks{ Department of Mathematics, The University of Hong Kong, Pokfulam
Road, Hong Kong. E-mail: takkwong@maths.hku.hk}
}
\date{\today}
\maketitle
\maketitle
\begin{abstract}
In this paper, we study  optimal liquidation problems in a randomly-terminated horizon.
We consider the liquidation of a large single-asset portfolio with the aim of minimizing a combination of volatility risk and transaction costs arising from permanent and temporary market impact.
Three different scenarios are analyzed  under Almgren-Chriss's market impact model  to explore the relation between optimal liquidation strategies and potential inventory risk arising from the uncertainty of the liquidation horizon.
For cases where no closed-form solutions can be obtained,
we verify comparison principles for viscosity solutions
and characterize the value function as the unique viscosity solution of the associated
Hamilton-Jacobi-Bellman (HJB) equation.
\end{abstract}

\noindent
{\bf Keywords:}
Dynamic Programming (DP) Principle; Hamilton-Jacobi-Bellman Equation;
Randomly-terminated;  Optimal Liquidation Strategies;  Stochastic Control;  Viscosity Solution.

\noindent
{\bf AMS Subject Classifications:} 35Q90, 49L20, 49L25, 91G80, 65M06.

\section{Introduction}

Understanding trade execution strategies is a key issue for financial market practitioners and has attracted growing attention from the academic researchers. An important problem faced by equity traders is how to liquidate large  orders.  Different from small orders, an immediate execution of large orders is often impossible or at a very high cost due to insufficient liquidity.
A slow liquidation process, however,  is often costly, since it may involve undesirable  inventory risk.
Almgren and Chriss \cite{AC01} provided one of the early studies on the optimal execution strategy  of large trades, taking into account the volatility risk and liquidation  costs.
In order to produce tractable and analytical results, they set  the market impact cost per share to be linear in the rate of  trading.
Schied and Schoneborn \cite{SS09} considered the infinite-horizon optimal portfolio liquidation problem for a von Neumann-Morgenstern investor under the liquidity model of Almgren \cite{A03},
in which a power law cost function was introduced to determine optimal  trading strategies.
However, most of the literature on optimal liquidation strategies mainly considered
a known pre-determined time horizon or infinite horizon.
The case of unknown (or more precisely, randomly-terminated) time horizon is not fully addressed.
In some situation, it is more realistic to assume that the liquidation horizon  depends on some   stochastic factors of the model.
For example, some financial markets adopt the circuit-breaking mechanism,
which makes the horizon of the investor  subject to the  stock price movement.
Once the stock price hits the daily limits, all transactions of the stock will be suspended.

In this paper,  we consider a randomly-terminated time horizon under three different scenarios that an  agent might encounter in a financial market.
 Almgren-Chriss's market impact model is employed  to describe  the underlying asset price:
 $$
  \left\{
  \begin{array}{lll}
 \displaystyle  dS_t= f(\theta_t) dt+\sigma dW^S_t,\\
  \displaystyle \widetilde{S}_t=g(\theta_t)+S_t,\\
  \end{array}
  \right.
$$
where the constant   $\sigma>0$ is the absolute volatility of the asset price $S_t$,  $W^S_t$ is an one-dimensional standard Brownian motion,  $\widetilde{S}_t$ is the actual transaction price, $\{\theta_t, t\ge 0\}$ is an admissible control process,  $f(\theta_t)$ and $g(\theta_t)$ represent, respectively, the permanent and temporary components of the market impact.
We consider the liquidation problem of a large single-asset portfolio with the aim of minimizing a combination of volatility risk and transaction costs arising from permanent and temporary market impact.

We first  consider the case with a pre-determined time horizon $T$, which can be used as a  benchmark for  other cases with randomly-terminated time horizon.
In general, it is required that a liquidation strategy $\theta_t$
should satisfy  the {\it hands-clean} condition:
$$
X_T=X_0-\int_0^T\theta_tdt=0,
$$
where $X_t$ is  the number of shares held by the trader at the time $t$.
We  first work on a subclass of {\it deterministic} controls,
which do not allow for inter-temporal updating, satisfying the {\it hands-clean} condition.
Obviously, the deterministic strategy obtained in the subclass might be no longer optimal
when taking into account the entire class of admissible controls.
We then temporarily relax the {\it hands-clean} condition, and allow an immediate final liquidation (if necessary) so that the number of shares owned at the time
$t=T$ is $X_T=0$.
We employ the  dynamic programming (DP) approach to solve the stochastic control
problems and prove that the optimal liquidation strategy actually  converges to the deterministic strategy when the transaction cost involved by
liquidating the outstanding position $X_{T-}$  approaches to infinity.

We then move to analyze the randomly-terminated cases.
Two different scenarios are analyzed to shed light on the relationship
between  liquidation strategies and potential position risk arising from
the uncertainty of the time horizon.
First, we consider the scenario where the  liquidation process  is
terminated by an exogenous trigger event.
We model the occurrence time of a trigger event to be random and its hazard rate process is given by $\{l(t),t\ge0\}$.
Once this event occurs, all liquidation processes will be forced to suspend.
Compared with the case  without trigger event, agents facing the scenario  that  an exogenous trigger event might occur during the trading horizon would like to accelerate the rate of liquidating to reduce their exposure to potential position risk and eventually in a smaller position when the trigger event occurs.
Their strategy has  a steeper   gradient and is more ``convex'' when compared with those who are not threatened by this trigger event.
Second, we consider  the case when the liquidation process is subject to  counterparty risk.
Different from the exogenous trigger event setting, information set available to the counterparty risk modeler is more refined in terms of predictability.
To model counterparty risk, 
we adopt the structural firm value approach, originated from Black and Scholes \cite{BS1},
and Merton \cite{Merton}, and   let    the firm's asset  value follow a geometric Brownian motion:
$$
\displaystyle \frac{dY_t}{Y_t}=\beta dt+\xi dW^Y_t.
$$
The incorporation of counterparty risk into the study of optimal liquidation does not come without cost.
In order to examine its impact on optimal trading strategies,
we have to introduce and employ viscosity solutions.
By verifying the comparison principles for viscosity solutions,
we characterize the value function as the unique viscosity solution of the associated
Hamilton-Jacobi-Bellman (HJB) equation.
This equation can be numerically solved.
We further analyze the effectiveness of the numerical method  and
illustrate that the computational error is sufficiently small.

The remainder of this paper is structured as follows.
The background and basic models of an agent's liquidation problem are introduced  in Section 2.
Section 3 discusses typical liquidating problems  under the  benchmark model. 
In Sections 4--5, we discuss two different scenarios with randomly-terminated time horizons.
Viscosity solution approach is adopted in these sections to study in great generality stochastic control problems. By combining these results with comparison principles, we characterize the value function as the unique viscosity solution of the associated
dynamic programming  equation, and this can then be used to obtain further results.
Finally, concluding remarks are given in Section 6. For the sake of self-containedness, we provide the technical proofs in the Appendix.

\section{Problem Setup }

In this section, we first describe the market environment of the agent.
We then present a market impact model 
to discuss the optimal  liquidating  problem.

\subsection{The Market Environment and Market  Impact Model}

The agent starts at $t=0$ and  has to  liquidate a large position in a risky asset
by  time $T$.
This terminal time can be either deterministic or random,
depending on the scenario that   the agent is facing.
For simplicity,  we assume that the agent withholds  the liquidation proceeds.
In other words,  he/she does  not  deposit the liquidation proceeds in his/her
money market account.
At any time $t\in[0,T]$, we adopt the following notations for the agent's portfolio:
\begin{description}
\item[(i)] $V_t=C_t+X_tS_t, \mbox{portfolio value}$;
\item[(ii)] $C_t, \mbox{balance of risk free bank account}$;
\item [(iii)] $X_t, \mbox{number of shares of underlying asset}$;
\item[(iv)] $S_t, \mbox{price of the underlying risky asset}$.
\end{description}
The initial conditions are   $ C_0=0$, $S_0=s$,  and $X_0=Q$.\\

Suppose the risky asset can be continuously liquidated during the trading horizon,
namely,  there is always sufficient liquidity for their execution\footnote{For simplicity, the  transaction fees   will not be considered in this paper.}.
Let  $\{\theta_t\}_{t\in[0,T]}$ denote the liquidation process.
The shares held by the trader at any time $t\in[0,T]$ can be written as follows:
$$
X_t=Q-\int_{0}^t\theta_{u}du.
$$
We
consider a probability space $(\Omega, \mathcal{F}, \mathcal P)$  endowed with a filtration $\{\mathcal{F}_t\}_{t\ge0}$.

\begin{definition}
A stochastic process $\theta(\cdot)=\{\theta_u, 0\le u\le T\}$   is called an admissible control process if  all of the following conditions hold:
\begin{description}
\item[(i)] {\bf(Adaptivity)} For each $t\in[0,T]$, $\theta_t$ is $\mathcal F_t$-adapted;

\item[(ii)]{\bf(Non-negativity)} $\theta_t\in \mathbb R_+$, where $\mathbb R_+$ is the set of nonnegative real values;

\item[(iii)]{\bf(Consistency)}
$$
\displaystyle  \int_0^T \theta_tdt\le Q;
$$
\item[(iv)] {\bf(Square-integrability)}
$$
\displaystyle \mathbb E\left[\int_0^T|\theta_t|^2dt\right]<\infty;
$$
\item[(v)] {\bf($L_\infty$-integrability)}
$$
\mathbb E\left[\max_{0\le t\le T}|\theta_t|\right]<\infty.
$$
\end{description}
Furthermore, denote  $\Theta_t$ as the collection of admissible controls with respect to the initial time $t \in[0,T)$ and $\widehat\Theta_t$ as the collection of controls only satisfying condition (i), (iv) and (v).
\end{definition}

We assume that the risky  asset exhibits a price impact due to the feedback effects
of the agent's liquidation strategy.
For any given admissible control $\theta(\cdot)\in\Theta_0$,
the market mid-price  of the stock  is assumed to follow the   dynamics:
\begin{equation}\label{underlying0}
  dS_t=  f(\theta_t) dt+\sigma dW^S_t,\\
\end{equation}
where $\{W^S_t\}$ is a standard Brownian motion with filtration $\{\mathcal F_t\}$,  the constant $\sigma>0$ is the absolute volatility of the asset  price, and  $f(\cdot)$ is  the permanent component of the market impact.
For simplicity, we  further assume that $f$ is   time homogeneous, namely,   $f(\cdot)$ is independent of $t$.

Generally speaking, the actual transaction price $\widetilde{S}_t$ is
not always the same as  the market  mid-price $S_t$, since the market is not perfectly liquid, see, for example, Almgren and Chriss  \cite{AC01}.
We assume $\widetilde{S}_t= S_t + g(\theta_t)$ and call $g(\theta)$ the temporary price impact.
Intuitively, the function $g(\cdot)$ captures quantitatively how the limit order books available in the market are eaten up at different levels of trading speeds. \\

\noindent
{\bf Assumption 0.} The price dynamics follow a simple Almgren-Chriss linear market impact  model
(see, Almgren and Chriss  \cite{AC01}):
$$
\begin{array}{l}
\displaystyle f(\theta)=-\eta\cdot  \theta\quad \mbox{and}\quad
\displaystyle g(\theta)=-\nu\cdot \theta,\\
\end{array}
$$
where $\eta$ and $\nu$ are  positive constants.

An agent who holds the stock receives the capital gain or loss due to stock price movements.
Thus, if   the agent's position is marked to market using the book value,
ignoring market impact that would be incurred in converting these shares into cash,
at any time $t$, the agent's portfolio value  $V_t=C_t+X_tS_t$ satisfies
\begin{equation}\label{dd}
\left\{
\begin{array}{l}
 V_0=Q\cdot s\\
\displaystyle  dV_t=
\displaystyle  (\widetilde{S}_t-S_t)\theta_tdt+ X_tdS_t.
\end{array}
\right.
\end{equation}
At any time $t\in[0,T)$ before the end of trading,
$$
 \begin{array}{lll}
V_t&=&V_0+\displaystyle \int_0^t  (\widetilde{S}_u-S_u)\theta_u  du+\int_0^tX_udS_u\\
 &=&\displaystyle V_0+ \int_0^t \left[(\widetilde{S}_u-S_u)\theta_u  +X_u f(\theta_u)\right]du +\int_0^t\sigma X_udW^S_u.
 \end{array}
 $$

 \subsection{Hands-clean Condition}

Let us recall that  our task is to liquidate a large-size position  by the  time $T$.
Generally speaking, it is required that the {\it hands-clean} condition should be satisfied:
\begin{equation}\label{hand}
X_T=X_0-\int_0^T\theta_tdt=0.
\end{equation}
This technical condition, however,  introduces some unexpected properties to the  stochastic control problem. To tackle this problem,  we    temporary relax the {\it hands-clean} condition and allow an inmmediate final liquidation (if necessary) so that the number of shares owned at $t=T$ equals zero. That is,
given the state variables $(S_t, C_t, X_t)$ at  the instant before the end of trading $t=T-$,  if $X_{T-}\ne 0$, then we will  have an immediate final liquidation so that $X_T=0$. The liquidation proceeds $C_T$ after this final trade is
$$
C_T=C_{T-}+X_{T-}\left(S_{T-}-\mathcal C^o(X_{T-})\right),
$$
where $\mathcal C^o(X_{T-})=\phi X_{T-}$, for some  constant $\phi>0$,
is  the  cost involved from liquidating the outstanding position $X_{T-}$.
Thus, we have
$$
V_T=V_{T-}-\phi X^2_{T-}.
$$
The gain/loss from liquidating  the outstanding    position, $R_T=V_T-V_0$, is given by
\begin{equation}\label{1p}
R_T= \int_{[0, T)} \left[(\widetilde{S}_t-S_t)\theta_t +X_tf(\theta_t)\right]dt +\int_{[0, T)}\sigma X_tdW^S_t-\phi X^2_{T-}.
\end{equation}

\subsection{Performance Criterion}

Under the normal circumstance, investors are risk averse and demand a higher return for a riskier investment.
The mean-variance criterion  is popular for    taking both return and risk into account.
However, the mean-variance criterion  may induce a potential problem of time-inconsistency, namely, planned and implemented policies are different.
As mentioned in Rudloff et al. \cite{Rudloff},
a major reason for developing dynamic models instead of static ones is the fact that
one can incorporate the flexibility of dynamic decisions to improve the objective function.
Time-inconsistent criteria are generally not favorable to introduce in the study,
since the associated policies are sub-optimal.

To take both return and risk into account, instead of adopting
the mean-variance criterion,  we are most interested in the mean-quadratic
optimal agency execution strategies,
as they are proved to be time-consistent in \cite{AC01, FKT12,PA1}.
In this section, we will introduce the quadratic variation and the corresponding objective function as follows.

\subsubsection{Quadratic Variation}

Formally, the {\it quadratic variation} of  the portfolio value $V$ on $[0,T)$ is defined to be
\begin{equation}\label{qd}
[V, V]([0,T))=\int_{[0, T)}\sigma^2X^2_tdt.
\end{equation}
From the interpretation of Eq. (\ref{qd}), minimizing quadratic variation corresponds to minimizing volatility in the portfolio value process.

\subsubsection{Objective Function }
Let  $\gamma>0$ be a constant corresponding to the risk aversion.
Then the  agent's objective is  to find the optimal control for
\begin{equation}\label{ob}
\begin{array}{lll}
&&\displaystyle \max_{\theta(\cdot)\in \Theta_0}\mathbb E[R_T-\gamma [V, V]([0,T))]\\
&=&\displaystyle \max_{\theta(\cdot)\in \Theta_0}\mathbb E\bigg[\int_{[0, T)} \left[(\widetilde{S}_t-S_t)\theta_t  +X_t f(\theta_t)-\gamma\sigma^2X_t^2\right]dt -\phi X^2_{T-}\bigg].
\end{array}
\end{equation}

\section{The Benchmark Model for Optimal Liquidation (Model 1)}

\begin{assumption}\label{assumption1}
The liquidation horizon $T$ is a finite-valued, pre-determined, and positive  constant.
\end{assumption}

 In this section, we present our benchmark model   under  Assumption 0 for the optimal liquidation problem.
We first work on a subclass of {\it deterministic} controls\footnote{Controls that do not allow for inter-temporal updating.} satisfying the {\it hands-clean} condition (\ref{hand}), and then move to  the
dynamic programming (DP) approach considering over the entire class of admissible controls.
We prove that when the  transaction cost involved by liquidating the outstanding position $X_{T-}$ approaches to infinity, the optimal liquidation strategy obtained from DP approach  converges to the deterministic one.

\subsection{Deterministic Control}

Let us first consider the case in which $\theta(\cdot)$ ranges only over the sub-class $\Theta^{det}_0$ of {\it deterministic} strategies in $\Theta_0$ satisfying the {\it hands-clean} condition
$$
\int_0^T\theta_tdt=Q.
$$
That is,   $X_{T-}=0$, and   the agent's objective  is to    find the optimal strategy for
\begin{equation}\label{determine}
\begin{array}{lll}
&&\displaystyle \max_{\theta(\cdot)\in \Theta^{det}_0}\mathbb E \int_{[0, T)} \left[(\widetilde{S}_t-S_t)\theta_t +X_t f(\theta_t)-\gamma\sigma^2X_t^2\right]dt\\
&=&\displaystyle \max_{\theta(\cdot)\in \Theta^{det}_0}\mathbb E \int_{[0, T)} \left[g(\theta_t)\theta_t +X_t f(\theta_t)-\gamma\sigma^2X_t^2\right]dt.
\end{array}
\end{equation}
The  cost function  of  the deterministic control problem  (\ref{determine}) is
$$
\mathcal H(X_t,\theta_t,\Lambda_t,t)\equiv  g(\theta_t)\theta_t+f(\theta_t)X_t-\gamma \sigma^2 X_t^2 -\Lambda_t\theta_t,
$$
where $\Lambda_t$ is the {\it Lagrange multiplier} (also called the
{\it adjoint state}). 
The differential equation for the deterministic system is:
$$
\frac{dX_t}{dt}=-\theta_t \quad {\rm with} \quad X_0=Q.
$$
We assume the Hamiltonian $\mathcal H$ has continuous first-order derivatives in
state, adjoint state, and control variables, namely, $\{ X_t, \Lambda_t, \theta_t\}$.
Then the necessary conditions (also called {\it Hamilton's equation}) for having an interior optimum of the Hamiltonian
$\mathcal H$ at  $\{X_t^{det, *}, \Lambda_t^{det,*}, \theta_t^{det,*}\}$,  are given by
\begin{equation}\label{kl00}
\left\{
\begin{array}{rll}
\displaystyle \frac{d X_t^{det,*}}{dt}&=&\displaystyle \frac{\partial \mathcal H}{\partial \Lambda}\bigg|_{(X_t^{det,*},\theta_t^{det,*},\Lambda_t^{det,*},t)}\\
\displaystyle -\frac{d\Lambda_t^{det,*}}{dt}&=&\displaystyle \frac{\partial \mathcal H}{\partial X}\bigg|_{(X_t^{det,*},\theta_t^{det,*},\Lambda_t^{det,*},t)}\\
\displaystyle 0&=&\displaystyle  \frac{\partial \mathcal H}{\partial \theta}\bigg|_{(X_t^{det,*},\theta_t^{det,*},\Lambda_t^{det,*},t)}.
\end{array}
\right.
\end{equation}
It follows from the critical conditions in Eq. (\ref{kl00}) and Assumption 0 that
\begin{equation}\label{er0}
\left\{
\begin{array}{l}
\displaystyle   \ddot{X}_t^{det,*}=   \frac{\gamma \sigma^2 }{\nu}X_t^{det,*},\\
X_0^{det,*}=Q,\\
X_T^{det,*}=0.
\end{array}
\right.
\end{equation}
An explicit solution, which is unique according to Lasota and Opial \cite{LO67},
is given by
\begin{equation}\label{op12}
\left\{
\begin{array}{l}
 \displaystyle X_t^{det,*}=Q\frac{\sinh\big(\sqrt{\frac{\gamma\sigma^2}{\nu}}(T-t)\big)}{\sinh\big(\sqrt{\frac{\gamma\sigma^2}{\nu}}T\big)},\\

\displaystyle   \theta_t^{det,*}= Q\sqrt{\frac{\gamma\sigma^2}{\nu}}\frac{\cosh\big(\sqrt{\frac{\gamma\sigma^2}{\nu}}(T-t)\big)}{\sinh\big(\sqrt{\frac{\gamma\sigma^2}{\nu}}T\big)}.
  \end{array}
	\right.
\end{equation}
\quad\\

There is  a very interesting phenomenon in the {\it deterministic} control problem: the solution (\ref{op12}) has nothing to do with the permanent price impact $\eta$.
If a position of size $Q$ units with initial market price $s$ is fully liquidated by time
$T$, i.e. $X_T=0$, the expected value of the resulting cash becomes
$$
\begin{array}{lll}
\displaystyle\mathbb E\left[ \int_0^T\widetilde S_t\theta_tdt\right]
&=&\displaystyle \mathbb E\left[\int_0^TS_t\theta_tdt-\nu\int_0^T\theta_t^2dt\right]\\
&=&\displaystyle Q\cdot s+\mathbb E\left[\int_0^TX_tdS_t-\nu\int_0^T\theta_t^2dt\right]\\
&=&\displaystyle Q\cdot s-  \underbrace{\mathbb E\bigg[\nu\int_0^T\theta_t^2dt\bigg]}_{ \mbox{(temporary impact cost)}}-\underbrace{\frac{1}{2}\eta Q^2.}_{\mbox{(permanent  impact cost)}}\\
\end{array}
$$
Clearly,  the permanent  impact cost
is independent of the time taken or strategy used to execute the liquidation.

\subsection{Dynamic Programming Approach}

Obviously, if  we are allowed  to  update dynamically, namely,
replacing $\Theta^{det}_0$ by the entire class of admissible strategies $\Theta_0$,
then one will be able to further  improve his/her performance.
In this section, we consider a stochastic  approach.
We  employ the  DP method to   solve  the stochastic control problem (\ref{ob}).
This approach yields a   Hamilton-Jacobi-Bellman (HJB)  equation.
When this HJB equation  can be solved by an explicit   smooth solution,
the verification theorem then validates the optimality of the candidate solution to the HJB equation.
For more details about the verification theorem, we refer interested readers to
Pham  \cite{Pham09} (Chapter 3), \O ksendal  \cite{O1} (Chapter 11),
and \O ksendal and Sulem  \cite{O2} (Chapter 3).\\


Let $U(t, q)$  be the optimal value function beginning at  a time $t\in[0, T)$
with  initial value  $X_t=q$, namely\footnote{It is worth noting that the value function $U$ does not depend explicitly on the stock price $S_t$.},
\begin{equation}\label{e1}
U(t,q)=\max_{\theta(\cdot)\in \Theta_t}\mathbb E\left[ \int_{[t, T)} \big[-\nu\theta^2_u- \eta X_u\theta_u -\gamma \sigma^2X_u^2\big]du-\phi X_{T-}^2\Big|\mathcal F_t\right].
 \end{equation}
Temporarily assuming  that $U(t, q)\in \mathcal C^{1,2}([0,T)\times(0,+\infty) ).$\footnote{$\mathcal C^{1,2}([0,T)\times  (0,+\infty))$ is the space of functions $f(t, q)$ which is continuously differentiable in $t$, and twice continuously differentiable in   $q$.}
 From the DP principle, $U$ must satisfy   the following  HJB equation:
\begin{equation}\label{HJB}
\left\{
\begin{array}{lll}
 \displaystyle \partial_t U   -\gamma  \sigma^2  q^2  -\min_{\theta_t\in\Theta_t}\Big\{ \nu\theta^2_t+(\eta q  +\partial_{q}U)\cdot \theta_t\Big\}=0\\
\displaystyle U(T-, q)=-\phi q^2.
\end{array}
\right.
\end{equation}

We remark that the optimization problem included in Eq.
(\ref{HJB}) is a constrained optimization problem with constraints:
(a1) $\theta_t\in \mathbb R_+$; and  (a2) $\int_{[0, T)} \theta_tdt\le Q$.
Generally speaking,  there is no straightforward method to solve  this kind of problems.
One  simple way to handle this problem  is to
consider the corresponding unconstrained optimization problem:
 \begin{equation}\label{HJB-1}
\left\{
\begin{array}{lll}
 \displaystyle \partial_t U   -\gamma  \sigma^2  q^2  -\min_{\theta_t\in\widehat\Theta_t}\Big\{ \nu\theta^2_t+(\eta q  +\partial_{q}U)\cdot \theta_t\Big\}=0\\
\displaystyle U(T-, q)=-\phi q^2,
\end{array}
\right.
\end{equation}
 and then verify that the obtained result indeed satisfies all the constraints. 
From the HJB equation, Eq. (\ref{HJB-1}), the optimal trading strategy without constraints is given by
$$
\theta_{t}^{\phi,*}=-\frac{1}{2\nu}\big(\partial_{q}U+\eta q \big).
$$
Thus the value function $U$ solves the following Ordinary  Differential Equation (ODE):
\begin{equation}\label{0}
\left\{
\begin{array}{lll}
  \displaystyle \partial_t U    -\gamma  \sigma^2  q^2  +\frac{1}{4\nu}  \big(\partial_{q}U+\eta q \big)^2=0\\
  \displaystyle U(T-,q)=-\phi q^2.
\end{array}
\right.
\end{equation}

\begin{theorem}\label{theorem1}
 There is at most one  $\mathcal C^{1,2 }([0,T)\times(0, \infty))$  solution to  Eq. (\ref{0}).
\end{theorem}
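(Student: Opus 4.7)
The PDE in (\ref{0}) is actually a first-order Hamilton--Jacobi equation in $U(t,q)$ (no second $q$-derivative really appears, despite the ambient $\mathcal{C}^{1,2}$ regularity being assumed), so my plan is to reduce uniqueness to a linear transport equation analysed along characteristics. Let $U_1, U_2 \in \mathcal{C}^{1,2}([0,T) \times (0, \infty))$ be two solutions of (\ref{0}) with the common terminal datum $-\phi q^2$, and set $W := U_1 - U_2$. Subtracting the two instances of (\ref{0}) and using the algebraic identity $(\partial_q U_1 + \eta q)^2 - (\partial_q U_2 + \eta q)^2 = (\partial_q U_1 + \partial_q U_2 + 2\eta q)(\partial_q U_1 - \partial_q U_2)$, the quadratic nonlinearity linearises into
$$\partial_t W + b(t, q)\, \partial_q W = 0, \qquad W(T-, q) = 0,$$
with drift $b(t, q) := \tfrac{1}{4\nu}\bigl(\partial_q U_1(t,q) + \partial_q U_2(t,q) + 2\eta q\bigr)$.

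Because $U_1, U_2 \in \mathcal{C}^{1,2}$, the drift $b$ is continuous in $t$ and $\mathcal{C}^1$ in $q$, hence locally Lipschitz in $q$. By Picard--Lindel\"of, for each $(t_0, q_0) \in [0,T) \times (0, \infty)$ the characteristic ODE $\dot q(s) = b(s, q(s))$ with $q(t_0) = q_0$ admits a unique local solution, along which
$$\frac{d}{ds}\, W\bigl(s, q(s)\bigr) = \bigl(\partial_t W + b\, \partial_q W\bigr)(s, q(s)) = 0,$$
so $W$ is constant on each characteristic. Once the characteristic is propagated up to time $T-$ while remaining inside the domain, the terminal condition yields $W(t_0, q_0) = W(T-, q(T-)) = 0$, and uniqueness follows.

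The main obstacle is exactly this extension step: ruling out finite-time blow-up of the characteristic and controlling its approach to the boundary $q = 0$. To handle the blow-up issue, I would exploit the quadratic terminal data: a short backward integration of (\ref{0}) from $T$, together with continuity of $\partial_q U_i$, delivers a linear-in-$q$ a priori bound $|\partial_q U_i(t,q)| \le C(1+q)$ in a left-neighbourhood of $T$, so that $|b(t,q)| \le C(1+q)$ and Gronwall prevents escape in the remaining time; iterating this bound backwards in time propagates the estimate to all of $[0,T)$. For the boundary, note that $b$ remains bounded as $q \downarrow 0$ (since $\partial_q U_i$ is continuous up to $q=0$ in the spatial variable) and that both the equation and the terminal datum are even in $q$, so one may extend $U_i$ symmetrically across $q = 0$ and carry out the characteristic argument on $[0,T) \times \mathbb{R}$ if a characteristic would otherwise exit. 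Combining these two ingredients closes the argument and gives $U_1 \equiv U_2$.
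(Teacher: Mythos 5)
Your proof takes essentially the same route as the paper: subtract the two solutions, factor the difference of squares to obtain a linear first-order transport equation for $W=U_1-U_2$ with drift $\tfrac{1}{4\nu}\bigl(\partial_q U_1+\partial_q U_2+2\eta q\bigr)$ and zero terminal data, and conclude $W\equiv 0$ by the method of characteristics. The paper states this in three lines and does not address the global existence of characteristics or the behaviour near $q=0$ at all, so your additional (if somewhat sketchy) discussion of those points only makes the argument more careful than the original.
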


\begin{proof}
Let  $f_1$ and $f_2$ be two  $\mathcal C^{1,2 }([0,T)\times(0, \infty))$  solutions to  Eq. (\ref{0}). Define  $\widetilde{f} =f_1-f_2$.  Then the new function $\widetilde {f}$  satisfies the following Partial Differential Equation
(PDE):
$$
\left\{
\begin{array}{lll}
 \displaystyle \partial_t \widetilde{f}
+\frac{1}{4\nu}  \big[ \partial_{q}(f_1+f_2) +2\eta q \big]\partial_q \widetilde{f}=0\\
 \displaystyle \widetilde{f} (T-, q)=0.
\end{array}
\right.
$$
Since the evolution equation for $\widetilde {f}$ is linear and first-order, one can solve the above problem explicitly  by the method of characteristics, and find that $\widetilde{f}\equiv 0$ is the unique solution to this problem. As a result, $f_1\equiv f_2$.
\end{proof}
 \quad\\

To solve Eq. (\ref{0}), we consider an ansatz that is quadratic in the variable $q$:
$$
U(t,q)=a(t)+ b(t)q +c(t)q^2.
$$
\noindent
According to Theorem 1, if  the above ansatz  is a solution of Eq. (\ref{0}),
then it must be  the unique solution.
Under this setting, the optimal liquidating strategy takes the following form:
$$
\theta_t^{\phi,*} = -\frac{1}{2\nu}\left\{ b(t)+[2c(t)+\eta]q\right\}.
$$
A direct substitution yields that the coefficients  $a(t)$,  $b(t)$ and $c(t)$
must satisfy the following ODEs:
\begin{equation}\label{ep}
\left\{
\begin{array}{l}
 \displaystyle \dot{c}(t)=\gamma\sigma^2-\frac{1}{4\nu}[2c+\eta]^2\\
 \displaystyle \dot{b}(t)= -\frac{1}{2\nu}b(t)[2c+\eta]\\
 \displaystyle \dot{a}(t)=-\frac{1}{4\nu}b^2
 \end{array}
 \right.
 \end{equation}
with  terminal conditions: $a(T)=0$, $b(T)=0$ and  $c(T)=-\phi$.
 Since System (\ref{ep}) is partially decoupled, we can  find the exact solution via direct integrations. As a result,
they are given by
\begin{equation}\label{r1}
\left\{
\begin{array}{l}
 \displaystyle c(t)=\frac{1}{2\xi}\left[\frac{\zeta e^{-4\gamma\xi\sigma^2(T-t)}-1}{\zeta e^{-4\gamma\xi\sigma^2(T-t)}+1}\right]-\frac{\eta}{2}\\
 \displaystyle
\displaystyle   b(t)=0\\
 \displaystyle
 a(t)=0
 \end{array}
 \right.
\end{equation}
where  the constants $\zeta$ and $\xi$ are  given  by
$$
\zeta=\frac{1-\xi(2\phi-\eta)}{1+\xi(2\phi-\eta)}
\quad {\rm and} \quad \xi=\frac{1}{2\sigma\sqrt{\gamma\nu}}.
$$
 \quad\\

It is worth noting that
\begin{equation}\label{opl}
\dot{X}^{\phi,*}_t=-\theta^{\phi,*}_t=\frac{1}{2\nu}[2c(t)+\eta]X^{\phi, *}_t,
\end{equation}
and that   $X^{\phi, *}_0=Q$.
Therefore,
$$
X^{\phi,*}_t=Q \cdot \exp\left(\frac{1}{2\nu}\int_0^t[2c(u)+\eta]du\right).
$$
As to the results obtained in this section, we have the following proposition.
\begin{proposition}\label{proposition1}
It is assumed that model parameters satisfy the condition:
\begin{equation}\label{cond1}
 2\phi>\eta+2\sigma\sqrt{\gamma\nu}.
\end{equation}
That is, market liquidity risk  dominates the potential arbitrage opportunity introduced by permanent impact and potential position risk involved by price fluctuations.
Then, $c(t)$ is a strictly  decreasing function in $t$ and $c(t)+\eta\le 0$ for any $t\in[0,T)$. Furthermore, we have that
\begin{description}
\item[(b1)] $\theta^{\phi, *}_t\ge0$, for any time $t\in[0,T)$; and that
\item[(b2)] $\int_{[0, T)}\theta^{\phi, *}_tdt\le Q$.
\end{description}
The obtained optimal trading strategy (\ref{opl}) is also the optimal trading strategy for the constrained problem.
\end{proposition}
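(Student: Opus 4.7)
The plan is to leverage the closed form (\ref{r1}) for $c(t)$ together with the structural hypothesis (\ref{cond1}) to pin down the sign and the magnitude of the combination $2c(t)+\eta$ on $[0,T)$, from which assertions (b1) and (b2) will follow at once. The natural first move is to rewrite (\ref{cond1}) as $\xi(2\phi-\eta)>1$, using $\xi=1/(2\sigma\sqrt{\gamma\nu})$. This shows immediately that the numerator of $\zeta$ is strictly negative while its denominator is strictly positive and strictly larger in absolute value, so $\zeta\in(-1,0)$.

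The key step is then to introduce $h(t):=\zeta\,e^{-4\gamma\xi\sigma^2(T-t)}$ and note that, since $\zeta\in(-1,0)$ and $e^{-4\gamma\xi\sigma^2(T-t)}\in(0,1]$ for $t\in[0,T]$, one has $h(t)\in(-1,0)$ throughout. Rewriting (\ref{r1}) in terms of $h$ gives
\[
2c(t)+\eta \;=\; \frac{1}{\xi}\cdot\frac{h(t)-1}{h(t)+1},
\]
which is strictly negative because $h(t)-1<0$ and $h(t)+1>0$. Since $\theta_t^{\phi,*}=-\frac{1}{2\nu}[2c(t)+\eta]\,X_t^{\phi,*}$ and
\[
X_t^{\phi,*}=Q\exp\!\Bigl(\tfrac{1}{2\nu}\textstyle\int_0^t[2c(u)+\eta]\,du\Bigr)>0,
\]
this immediately yields (b1). (The inequality $c(t)+\eta\le 0$ in the statement should, I believe, read $2c(t)+\eta\le 0$; either way it follows.)

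For the strict monotonicity of $c$, I would go back to the Riccati ODE $\dot c=\gamma\sigma^2-\frac{1}{4\nu}(2c+\eta)^2=\gamma\sigma^2\bigl(1-\xi^2(2c+\eta)^2\bigr)$ and use the formula above to get $\xi^2(2c(t)+\eta)^2=\bigl(\tfrac{1-h(t)}{1+h(t)}\bigr)^2>1$, where the last inequality follows from $h(t)<0$ via $1-h(t)>1+h(t)>0$. Hence $\dot c(t)<0$ strictly. Assertion (b2) is then a one-liner: since $\theta_t^{\phi,*}\ge 0$, the trajectory $X_t^{\phi,*}$ is non-increasing and remains nonnegative, so $\int_0^T\theta_t^{\phi,*}\,dt=Q-X_{T-}^{\phi,*}\le Q$. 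Finally, combining (b1) and (b2), the maximizer of the unconstrained HJB equation (\ref{HJB-1}) satisfies every admissibility constraint of (\ref{HJB}), so by the verification argument already invoked in the paper it is also the optimizer for the constrained problem.

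The only delicate point is the sign bookkeeping in the second paragraph: one must track not just that $\zeta\ne 0$ but that $\zeta\in(-1,0)$ strictly, that the exponential damping factor stays in $(0,1]$ uniformly on $[0,T]$, and hence that $h(t)\in(-1,0)$ uniformly. Once that uniform two-sided bound on $h$ is secured, the signs of $(2c+\eta)$ and of $\dot c$ drop out algebraically, and the rest of the proposition is automatic.
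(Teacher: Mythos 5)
Your proposal is correct and follows essentially the same route as the paper: condition (\ref{cond1}) forces $\zeta\in(-1,0)$, hence $2c(t)+\eta<0$ on $[0,T)$, from which (b1), (b2), and the admissibility of the unconstrained optimizer all follow. You additionally supply the Riccati-equation computation showing $\xi^2(2c(t)+\eta)^2>1$ to justify $\dot c<0$ (which the paper merely asserts), and you correctly flag that the inequality $c(t)+\eta\le 0$ in the statement should read $2c(t)+\eta\le 0$, consistent with the $c(t)+\eta/2\le 0$ appearing in the paper's own proof.
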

\begin{proof}
Notice  that the graph of the function $c(t)$ depends on the coefficient $\zeta=(1-x)/(1+x)$, with $x=(2\phi-\eta)/2\sigma\sqrt{\gamma\nu}$.
Under Assumption (\ref{cond1}),    $x>1$, and hence $-1<\zeta<0$.
 Therefore,
  $$
  \frac{\partial c(t)}{\partial t}<0,
 $$
 i.e., $c(t)$ is a strictly  decreasing function in $t$, and
 $c(t)+\eta/2\le 0$ always holds for any $t\in[0,T)$.  Thus, we conclude that
$$
 \theta^{\phi, *}_t=-Q\frac{1}{2\nu}[2c(t)+\eta]e^{\frac{1}{2\nu}\int_0^t[2c(u)+\eta]du}\ge0,
$$
for any time $t\in[0,T)$, and that
$$
\int_{[0, T)}\theta_t^{\phi, *}dt=Q\Big[1-e^{\frac{1}{2\nu}\int_0^t[2c(u)+\eta]du}\Big]\le Q.
 $$
 \end{proof}
\quad\\

  Let $U_T(t,q)$ denote the  value function of the optimization problem (\ref{e1})  with time horizon $T$, then  for any $T_1>T_2>t$,  we have
 \begin{equation}\label{red}
 U_{T_1}(t,q)>U_{T_2}(t,q),
 \end{equation}
provided that the condition (\ref{cond1}) holds. This  is consistent  with the fact that  an investor's ability  to bear risk  relates to his/her  time horizon for investment\footnote{The ability to bear risk is measured mainly in terms of objective factors, such as time horizon,  expected income, and the level of wealth relative to liability.}.
\quad\\

\subsection{Relation between Deterministic  and Stochastic  Control}

\begin{theorem}\label{theorem2}
When the transaction fees  involved by liquidating the outstanding position $X_{T-}$  approaches to infinity,  the limit of the  optimal stochastic control process $(\theta^{\phi,*}_t)_{t\in[0,T)}$  satisfies the {\it hands-clean} condition and it converges (point-wise) to the optimal deterministic control process $(\theta^{det,*}_t)_{t\in[0,T)}$. Meanwhile, the optimal trajectory $X_t^{\phi,*}$     converges (point-wise) to the one determined   in the deterministic system $X_t^{det, *}$. That is,   as $\phi\to \infty$,  we have
\begin{enumerate}
\item $\displaystyle X^{\phi,*}_{T-}\to 0$;
\item $\displaystyle \lim_{\phi\to \infty} \theta^{\phi,*}_t= \theta^{det,*}_t$\quad point-wise;
\item $\displaystyle \lim_{\phi\to \infty}X_t^{\phi,*}= X_t^{det,*}$ \quad point-wise.
\end{enumerate}
\end{theorem}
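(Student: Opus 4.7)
The plan is to exploit the explicit closed-form expressions already derived---Eq.~(\ref{op12}) for the deterministic optimum and Eqs.~(\ref{r1})--(\ref{opl}) for the stochastic optimum parameterized by the terminal penalty $\phi$---and to take $\phi\to\infty$ pointwise in these formulas, verifying each of the three claims separately. First I would note that $\xi$ does not depend on $\phi$, while a direct computation gives $\zeta+1=2/(1+\xi(2\phi-\eta))=O(1/\phi)\to 0$, so $\zeta\to -1$; the quantitative rate $O(1/\phi)$ is what will eventually drive claim~(1).

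Next I would plug $\zeta=-1$ formally into (\ref{r1}) and simplify via the identity $(1+e^{-x})/(1-e^{-x})=\coth(x/2)$ to obtain the candidate pointwise limit
\[
c^*(t)=-\frac{\eta}{2}-\sigma\sqrt{\gamma\nu}\,\coth\!\bigl(\alpha(T-t)\bigr),\qquad \alpha:=\sqrt{\gamma\sigma^2/\nu},\ t\in[0,T),
\]
using the convenient identity $4\gamma\xi\sigma^2=2\alpha$. For any $t_0<T$, the denominator $\zeta e^{-2\alpha(T-u)}+1$ stays uniformly bounded away from zero on $[0,t_0]$ for all large $\phi$, so $c_\phi\to c^*$ uniformly on $[0,t_0]$. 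Using $X^{\phi,*}_t=Q\exp\!\bigl(\tfrac{1}{2\nu}\int_0^t[2c_\phi(u)+\eta]\,du\bigr)$ from (\ref{opl}), together with $\int\coth(\alpha s)\,ds=\alpha^{-1}\ln|\sinh(\alpha s)|$ and the identity $2\nu\xi\alpha=1$, I would deduce $X^{\phi,*}_t\to Q\sinh(\alpha(T-t))/\sinh(\alpha T)=X_t^{det,*}$, which is claim~(3). Claim~(2) then follows from $\theta_t^{\phi,*}=-\tfrac{1}{2\nu}(2c_\phi(t)+\eta)X_t^{\phi,*}$: the prefactor tends to $\alpha\coth(\alpha(T-t))$, and multiplying by $Q\sinh(\alpha(T-t))/\sinh(\alpha T)$ reproduces exactly $\theta_t^{det,*}$ of (\ref{op12}).

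The delicate part is claim~(1), since the limiting integrand $-\tfrac{1}{\xi}\coth(\alpha(T-u))$ has a non-integrable singularity at $u=T$, so dominated convergence on the full interval $[0,T]$ is unavailable. I would therefore carry out the integral for each fixed $\phi$ by the substitution $y=\zeta e^{-2\alpha(T-u)}+1$ followed by partial fractions $\tfrac{y-2}{y(y-1)}=\tfrac{2}{y}-\tfrac{1}{y-1}$, obtaining the closed form
\[
X^{\phi,*}_{T-}=Q\cdot\frac{(\zeta+1)\,e^{-\alpha T}}{\zeta e^{-2\alpha T}+1}.
\]
Since $\zeta+1=O(1/\phi)$ while the denominator tends to the positive constant $1-e^{-2\alpha T}$, the right-hand side vanishes as $\phi\to\infty$, recovering the \emph{hands-clean} condition in the limit. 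The main obstacle is precisely this quantitative rate argument for claim~(1); claims~(2) and~(3) then reduce to elementary manipulations of the explicit formulas on compact subintervals of $[0,T)$.
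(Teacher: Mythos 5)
Your proposal is correct and follows essentially the same route as the paper: both take $\phi\to\infty$ in the explicit closed-form expressions, using $\zeta\to -1$ and the exact evaluation of $\exp\bigl(\tfrac{1}{2\nu}\int_0^t[2c(u)+\eta]\,du\bigr)$ (the paper's identity (\ref{proof1}), which you re-derive by substitution) so that claim (1) reduces to $X^{\phi,*}_{T-}=Q(\zeta+1)/(\zeta e^{-\alpha T}+e^{\alpha T})\to 0$. Your added care about uniform convergence on compact subintervals of $[0,T)$ and the non-integrability of the limiting integrand at $u=T$ is a welcome refinement of the same argument, not a different method.
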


\begin{proof} We complete the proof by the following two steps:
\begin{description}
\item[Step 1 (Hands-clean condition)]
We first  prove that,  as   $\phi \to \infty$, $X_{T-}^{\phi,*}\to 0$.
We note that
 $$
  X_{t}^{\phi, *}= Q\cdot \exp\left(\int_0^{t}\frac{1}{2\nu}[2c(u)+\eta]du\right).
 $$
 A simple calculation yields
\begin{equation}\label{proof1}
\begin{array}{l}
\displaystyle e^{ \int_u^t\frac{1}{2\nu}[2c(r)+\eta]dr}=\frac{\zeta e^{-4\gamma\xi\sigma^2(T-t)}+1}{\zeta e^{-4\gamma\xi\sigma^2(T-u)}+1}e^{-2\gamma\xi\sigma^2(t-u)}.
\end{array}
\end{equation}
As $\phi\to\infty$,  $\zeta\to -1$, and hence
$$
\begin{array}{rll}
X_{T-}^{\phi,*}&=& \displaystyle \frac{Q(\zeta+1)}{\zeta e^{-2\gamma\xi\sigma^2T}+e^{2\gamma\xi\sigma^2T}}\rightarrow 0.
  \end{array}
$$
\item[Step 2 (Convergence)]
We then prove that  as $\phi\to\infty$,
 \begin{itemize}
\item $\displaystyle \lim_{\phi\to \infty} \theta^{\phi,*}_t= \theta^{det,*}_t$\quad point-wise; and
\item $\displaystyle \lim_{\phi\to \infty}X_t^{\phi,*}= X_t^{det,*}$\quad point-wise.
\end{itemize}
 First, we have
 $$
 \lim_{\phi\to\infty} X_t^{\phi, *}= \lim_{\phi\to \infty}Qe^{ \int_0^t\frac{1}{2\nu}[2c(u)+\eta]du}=Q\frac{e^{2\gamma\xi\sigma^2(T-t)}-e^{-2\gamma\xi\sigma^2(T-t)}}{e^{2\gamma\xi\sigma^2T}-e^{-2\gamma\xi\sigma^2 T}}=X_t^{det,*}.
 $$
For any time $t\in[0,T)$,
 $$
 \lim_{\phi\to \infty} [2c(t)+\eta]=\frac{1}{\xi}\frac{e^{-4\gamma\xi\sigma^2(T-t)}+1}{e^{-4\gamma\xi\sigma^2(T-t)}-1}.
  $$
Thus,  we have
 $$
 \lim_{\phi\to\infty}\theta_t^{\phi,*} = \lim_{\phi\to\infty}-\frac{1}{2\nu}[2c(t)+\eta] X_t^{\phi,*}=
\displaystyle  \frac{Q}{2\nu\xi} \frac{e^{-2\gamma\xi\sigma^2(T-t)}+e^{2\gamma\xi\sigma^2(T-t)} }{e^{2\gamma\xi\sigma^2 T}-e^{-2\gamma\xi\sigma^2 T}  }=\theta^{det,*}_t.
$$
\end{description}
\end{proof}

 \quad\\

  In Figure \ref{ds}, we   illustrate how the transaction fees involved  by liquidating the outstanding position $X^{\phi, *}_{T-}$, $\phi |X^{\phi,*}_{T-}|^2$,   affects the agent's liquidating speed.
We chose the following values of the model parameters:
$T=1\; day$, $Q=100\;units$,  $\gamma=0.1$,  $\sigma=0.2$, $\eta=0.001$ and $\nu=0.003$.

\begin{figure}[H]
\begin{center}
\includegraphics[width=6in, height=3.5in]{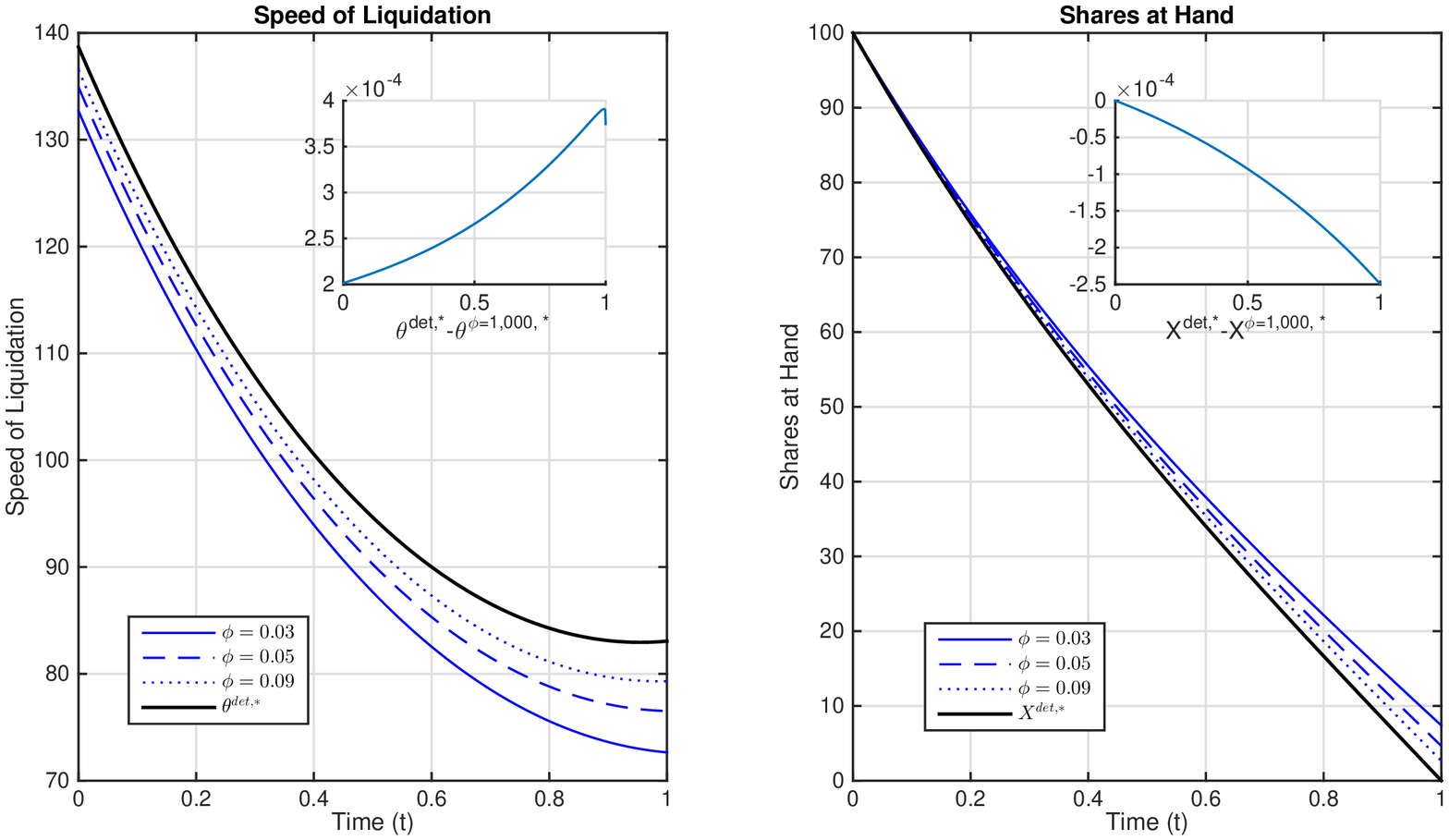}
\caption{Optimal deterministic control vs. Optimal stochastic controls
(model parameters: $T=1\; day$, $Q=100\;units$,  $\gamma=0.1,  \sigma=0.2,\eta=0.001,\nu=0.003$). }
\label{ds}
\end{center}
\end{figure}

Figure \ref{ds} illustrates that the speed of liquidation  which is  free of  
{\it hands-clean} condition is always slower than that under  the constraint of {\it hands-clean} condition.
As  the transaction fees involved  by liquidating the outstanding position $X^{\phi,*}_{T-}$
increases (namely, as $\phi$ increases),   the agent's liquidating speed increases, indicating that the optimal stochastic control moves closer to the optimal deterministic control.
The embedded subfigures in Figure 1 show, respectively,  the differences between the deterministic and stochastic liquidating strategies and the corresponding trajectories with
$\phi=1,000$.
Both of them  are of magnitude $10^{-4}$.

\section{Optimal Liquidation Strategy Subject to an Exogenous Trigger Event (Model 2)  }

In this section, we extend our results to models with an  exogenous event, which does not depend on
the information structure $\{\mathcal F_t\}_{t\ge0}$.

\begin{assumption}\label{assumption2}
The liquidation process will be suspended, if an exogenous trigger event occurs.
\end{assumption}

We model the occurrence time  of a trigger event, denoted by $\kappa$,
to be random, and the hazard rate is given by $l( t)$.
The survival probability at time $t$ is
\begin{equation}\label{su1}
P(t)=\exp\Big(-\int_0^tl(u)du\Big).
\end{equation}
The liquidation horizon is then defined by
\begin{equation}\label{stopping1}
\tau=\min\{T, \kappa\},
\end{equation}
where the constant $T\in(0,\infty)$ is a pre-determined time horizon. A direct computation yields the following proposition:
 \begin{proposition} \label{proposition2}
For $t<T$, the   density function of $\tau$  is
 $$
 f_{\tau}(t)= \displaystyle l(t)\exp\Big(-\int_0^tl(u)du\Big).
 $$
The probability that $\tau$ takes the value of $T$ is $\mathbb P(\tau=T)=P(T)$.
 \end{proposition}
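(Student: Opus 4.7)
The plan is to unpack the relation between the hazard rate $l(t)$, the survival probability $P(t)$, and the distribution of $\kappa$, and then translate these to statements about $\tau=\min\{T,\kappa\}$ by splitting the real line at the deterministic horizon $T$.

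First, I would observe that the given survival probability $P(t)=\exp(-\int_0^t l(u)\,du)$ encodes the distribution of $\kappa$ via $\mathbb{P}(\kappa>t)=P(t)$, so $\kappa$ is absolutely continuous on $(0,\infty)$ with cumulative distribution function $F_\kappa(t)=1-P(t)$ and density
$$
f_\kappa(t) \;=\; \frac{d}{dt}F_\kappa(t) \;=\; -P'(t) \;=\; l(t)\,P(t).
$$
This is the standard hazard-rate identity and follows by direct differentiation once we note that $\frac{d}{dt}\int_0^t l(u)\,du=l(t)$ (assuming $l$ is integrable, which is implicit from the setup).

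Next, for the density part, I would fix $t<T$ and use the pathwise identity $\{\tau\le t\}=\{\min\{T,\kappa\}\le t\}=\{\kappa\le t\}$, valid precisely because $T>t$ forces $T\not\le t$. Therefore the CDF of $\tau$ restricted to $[0,T)$ coincides with $F_\kappa$, and differentiating yields $f_\tau(t)=f_\kappa(t)=l(t)\exp(-\int_0^t l(u)\,du)$, as claimed.

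For the atom at $T$, I would compute $\mathbb{P}(\tau=T)$ by noting $\{\tau=T\}=\{\kappa\ge T\}$, since on this event the minimum is attained at $T$. Because $\kappa$ is absolutely continuous, $\mathbb{P}(\kappa=T)=0$, so $\mathbb{P}(\kappa\ge T)=\mathbb{P}(\kappa>T)=P(T)$, which gives the stated value. No real obstacle arises here; the only subtlety worth flagging explicitly is that $\tau$ is a mixed random variable with a density on $[0,T)$ and a point mass at $T$, and one should verify the total mass: $\int_0^T l(u)P(u)\,du + P(T)=[1-P(T)]+P(T)=1$, which serves as a consistency check on the computation.
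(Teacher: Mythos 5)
Your proposal is correct and is precisely the ``direct computation'' the paper invokes without writing out: read off the law of $\kappa$ from the survival function $P$, identify the law of $\tau=\min\{T,\kappa\}$ with that of $\kappa$ on $[0,T)$, and assign the remaining mass $P(T)$ to the atom at $T$. The total-mass check $\int_0^T l(u)P(u)\,du+P(T)=1$ is a sensible addition but not needed.
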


Denote by $\mathcal G_t$ the event
$\{\tau>t\}=\{\mbox{the trigger event has {\it not}  occurred by the time $t$}\}$.
At any time $t<\tau$, i.e., the trigger event has not occurred prior to time $t$,  the agent's  objective is to find the optimal control for
\begin{equation}\label{op2}
\begin{array}{lll}
 \displaystyle \max_{\theta(\cdot)\in\Theta_t}\mathbb E\left[ \int_t^{\tau-} \Pi(\theta_u, X_u)du-\phi X_{\tau-}^2 \big|\mathcal F_t\lor \mathcal G_t\right]
\end{array}
\end{equation}
where
$$
\Pi(\theta_t, X_t)= g(\theta_t)\theta_t+f(\theta_t)X_t-\gamma \sigma^2 X_t^2,
$$
and $\mathcal F_t$ is the information structure available to the  agent  up to and including time $t$.  If  the trigger event occurs at time $t$, all market transactions will be    suspended at that time.   The agent will end up with  an  outstanding position $X_t$.

It is worth noting  that
\begin{equation}\label{ko}
\begin{array}{lll}
\displaystyle E\Big[ \int_t^{\tau-} \Pi(\theta_u, X_u)du  \big|\mathcal F_t\lor \mathcal G_t\Big] &=&\displaystyle E\Big[ \int_{[t, T)} \mathbb I_{\{u< \tau\}} \Pi(\theta_u, X_u)du  \big|\mathcal F_t\lor \mathcal G_t\Big]\\
&=&\displaystyle E\Big[ \int_{[t, T)} \mathbb P( \tau> u|\mathcal G_t) \Pi(\theta_u, X_u)du  \big|\mathcal F_t\Big],
\end{array}
\end{equation}
and that
$$
 \mathbb P( \tau> u|\mathcal G_t)= \mathbb P( \tau> u|\tau> t)=e^{-\int_t^ul(r)dr}.
$$
Here, the indicator function $\mathbb I_{\{\cdot\}}$ takes the value 1 when its argument is true and the value $0$, otherwise.
 The last equality in Eq. (\ref{ko}) follows from the assumption that the trigger event is exogenous and does not depend on information structure $\mathcal F_t$.

Therefore,  we have
\begin{equation}\label{op212}
\begin{array}{lll}
&& \displaystyle \max_{\theta(\cdot)\in\Theta_t}\mathbb E\left[ \int_t^{\tau-} \Pi(\theta_u, X_u)du-\phi X_{\tau-}^2 \big|\mathcal F_t\lor \mathcal G_t\right]\\
&=&\displaystyle \max_{\theta(\cdot)\in\Theta_t} \mathbb E\Big[\int_{[t, T)} e^{-\int_t^ul(r)dr} \left[\Pi(\theta_u,  X_u)-\phi \cdot  l(u) X^2_u\right]du-\phi e^{-\int_{[t, T)}l(r)dr}X^2_{T-}\Big|\mathcal F_t\Big].
\end{array}
\end{equation}
That is, the optimal liquidating  problem with a random   horizon $\tau$ defined in
Eq. (\ref{stopping1}) is equivalent to an optimal liquidating  problem with a finite horizon
$T$,  a consumption process $\{\Pi(\theta_t,X_t)-\phi\cdot l(t) X_t^2\}_{t\ge0}$,
a discount process $\{l(t),t\ge0\}$,  and a  terminal condition $-\phi X_{T-}^2$.

\subsection{ Deterministic Control}

Let us first consider the case in which $\theta(\cdot)$ ranges only over the subclass
$\Theta^{det}_0$ of {\it deterministic} strategies in $\Theta_0$ satisfying the {\it hands-clean} condition (\ref{hand})\footnote{The {\it hands-clean} condition only makes sense for  the equivalent  problem  (\ref{op212}).
While  considering the original optimization  problem  (\ref{op2}), where the terminal time is a stopping time, the {\it hands-clean} condition is no longer valid.},  namely,
  $X_{T-}=0$. Thus, the agent's objective, before the trigger event occurs,    is to find the optimal control for
 $$
 \max_{\theta(\cdot)\in\Theta^{det}_0} \mathbb E\int_{[0, T)} e^{-\int_0^tl(u)du} \left[\Pi(\theta_t,  X_t)-\phi \cdot  l(t) X^2_t\right]dt.
 $$
 The cost function of the deterministic control problem is
 $$
\mathcal H(X_t,\theta_t,\Lambda_t,t)\equiv P(t)\big[\Pi(\theta_t,X_t)-\phi\cdot l(t)X_t^2\big]-\Lambda_t\theta_t,
$$
where $\Lambda_t$ is the {\it Lagrange multiplier}, and $P(t)$ is the survival probability defined in Eq. (\ref{su1}).
The differential equation for the deterministic system dynamics is
$$
\frac{dX_t}{dt}=-\theta_t \quad {\rm and} \quad  X_0=Q.
$$
  We assume that the Hamiltonian $\mathcal H$ has continuous first-order derivatives in the state, adjoint state, and  the control variable, namely,  $\{ X_t, \Lambda_t, \theta_t\}$.
Then the necessary conditions for having  an interior point optimum of the Hamiltonian
$\mathcal H$ at
$\{X_t^{det, *}, \Lambda_t^{det,*}, \theta_t^{det,*}\}$  are given by
\begin{equation}\label{kl0}
\left\{
\begin{array}{rll}
\displaystyle \frac{d X_t^{det,*}}{dt}&=&\displaystyle \frac{\partial \mathcal H}{\partial \Lambda}\bigg|_{(X_t^{det,*}, \theta_t^{det,*},\Lambda_t^{det,*},t)}\\
\displaystyle -\frac{d\Lambda_t^{det,*}}{dt}&=&\displaystyle \frac{\partial \mathcal H}{\partial X}\bigg|_{(X_t^{det,*}, \theta_t^{det,*},\Lambda_t^{det,*},t)}\\
\displaystyle 0&=&\displaystyle \frac{\partial \mathcal H}{\partial \theta}\bigg|_{(X_t^{det,*}, \theta_t^{det,*},\Lambda_t^{det,*},t)}.
\end{array}
\right.
\end{equation}
It follows from   Eq. (\ref{kl0}) that
\begin{equation}\label{er}
\left\{
\begin{array}{l}
\displaystyle   \ddot{X}_t^{det,*}=  l(t)\dot{X}_t^{det,*}+\frac{\gamma \sigma^2+(\phi-\frac{\eta}{2}) \cdot l(t)}{\nu}X_t^{det,*} \\
X_0^{det,*}=Q\\
X_T^{det,*}=0.
\end{array}
\right.
\end{equation}
Regarding  this linear second-order  boundary value problem (BVP), its existence and uniqueness are  standard. Interested readers can refer to, for example Hwang \cite{Hwang03}, for more details.

%
Consider the case  when   $l(t)\equiv \lambda\ne 0$, which corresponds to the case of  constant hazard rate, an  explicit solution is given by
$$
\left\{
\begin{array}{l}
\displaystyle  X_t^{det,*}=Qe^{\frac{\lambda}{2}t}\frac{\sinh(\alpha(T-t))}{\sinh(\alpha T)}, \\
 \displaystyle  \theta_t^{det,*}=-Qe^{\frac{\lambda}{2}t}\frac{\left[\frac{\lambda}{2}\sinh(\alpha(T-t))-\alpha\cosh(\alpha(T-t) )\right]}{\sinh(\alpha T)},
  \end{array}
\right.
$$
where
$$
\alpha=\sqrt{\frac{\lambda^2}{4}+\frac{\gamma\sigma^2+(\phi-\frac{\eta}{2})\lambda}{\nu}}.
$$
\quad\\

It is worth noting that
(i) when $\lambda=0$, the model degenerates to Model 1;
(ii) as $\phi\to\infty$, $\lim_{ \phi\to \infty} \theta_t^{det, *}=0$ and  $\lim_{ \phi\to \infty} X_t^{det, *}=0$,\ for all $t\in(0,T]$; and $\lim_{ \phi\to \infty} \theta_0^{det, *}=\infty$. That is, as the final liquidation fee, $\phi$ per share, approaches infinity, the trader would  immediately complete the transaction  at the beginning of the trading horizon.

\subsection{Dynamic Programming Approach}

Let us consider the case of allowing    dynamic updating, i.e., replacing $\Theta^{det}_0$ by the entire class of admissible strategies  $\Theta_0$.
Let $F(t,q)$ denote the optimal value function of Eq. (\ref{op212}) at any time  prior to the occurrence of the trigger event.  Under appropriate  regularity  assumptions, $F$   satisfies the following HJB equation:
 \begin{equation}\label{model2}
\displaystyle  l(t)  F=\partial_t F -[\gamma\sigma^2+\phi\cdot l(t)]q^2- \min_{\theta_t\in\Theta_t}\Big\{\nu\theta^2_t+(\partial_qF+\eta q)\cdot \theta_t\Big\}
 \end{equation}
subject to the terminal condition: $F(T,q)=-\phi q^2$. Here,  $l(t)$ is the given hazard rate.
Similarly,  we  consider relaxing the constraints associated with the HJB equation
and solve the unconstrained optimization problem.
We then prove that the obtained optimal control does satisfy all the constraints.
The associated optimal trading  strategy is
$$
\theta_{t}^{\phi, *}=-\frac{1}{2\nu}\big(\partial_{q}F+\eta q \big),
$$
and hence the value function satisfies
\begin{equation}\label{01}
\left\{
\begin{array}{lll}
 \displaystyle \partial_t F      -[\gamma\sigma^2+\phi\cdot l(t)]q^2  +\frac{1}{4\nu}  \big(\partial_{q}F+\eta q \big)^2-l(t) F=0 \\
 F(T-,q)=-\phi q^2.
\end{array}
\right.
\end{equation}
Regarding Eq. (\ref{01}), we have  the following theorem for  the uniqueness of  classical  solutions.

\begin{theorem}\label{theorem3}
There is at most one  $\mathcal C^{1,2}([0,T)\times(0,\infty))$   solution to  Eq. (\ref{01}).
\end{theorem}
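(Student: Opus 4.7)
The plan is to mimic the argument used in the proof of Theorem 1, namely to reduce the uniqueness question to a homogeneous linear first-order PDE for the difference of two putative solutions, then invoke the method of characteristics. The new feature relative to Theorem 1 is the zeroth-order term $-l(t)F$, which will translate into a multiplicative exponential factor along characteristics but, crucially, will not disturb the conclusion since the terminal datum for the difference is identically zero.

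Concretely, let $f_{1},f_{2}\in\mathcal{C}^{1,2}([0,T)\times(0,\infty))$ both solve Eq.~(\ref{01}), and set $\widetilde{f}:=f_{1}-f_{2}$. Subtracting the two copies of Eq.~(\ref{01}) and using the algebraic identity $a^{2}-b^{2}=(a+b)(a-b)$ on the nonlinear term, I obtain
\begin{equation*}
\partial_{t}\widetilde{f}+\frac{1}{4\nu}\bigl[\partial_{q}(f_{1}+f_{2})+2\eta q\bigr]\partial_{q}\widetilde{f}-l(t)\widetilde{f}=0,\qquad \widetilde{f}(T-,q)=0.
\end{equation*}
This is linear and first-order in $\widetilde{f}$, with a \emph{known} (though a priori non-explicit) coefficient $A(t,q):=\tfrac{1}{4\nu}[\partial_{q}(f_{1}+f_{2})+2\eta q]$ that is continuous on $[0,T)\times(0,\infty)$ and continuously differentiable in $q$ by the regularity hypothesis.

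Next, I would run the method of characteristics backwards from $t=T-$. For each $q_{0}\in(0,\infty)$, the characteristic ODE $\dot{q}(t)=A(t,q(t))$ with terminal condition $q(T-)=q_{0}$ has a unique local solution by the local Lipschitz continuity of $A$ in $q$. Along any such characteristic the restriction $\widetilde{f}(t,q(t))$ satisfies the linear scalar ODE $\tfrac{d}{dt}\widetilde{f}(t,q(t))=l(t)\,\widetilde{f}(t,q(t))$ with terminal value zero, whose unique solution is $\widetilde{f}\equiv 0$ along the characteristic. Sweeping $q_{0}$ over $(0,\infty)$ gives $\widetilde{f}\equiv 0$ on the union of all such characteristics, and hence $f_{1}\equiv f_{2}$.

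The main obstacle, as in Theorem 1, is technical rather than conceptual: one must verify that the backward characteristics emanating from $\{T-\}\times(0,\infty)$ actually cover the whole domain $[0,T)\times(0,\infty)$ and do not exit through the boundary $q=0$ or blow up before reaching $t$. This is addressed under the stated $\mathcal{C}^{1,2}$ regularity by a standard localisation argument on compact subdomains $[0,T-\varepsilon]\times[\delta,M]$, followed by $\varepsilon,\delta\downarrow 0$ and $M\uparrow\infty$; since $\widetilde{f}$ is continuous and vanishes on a dense set, it vanishes identically. This completes the uniqueness proof, and the theorem follows.
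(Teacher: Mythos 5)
Your proposal is correct and follows essentially the same route as the paper: subtract the two equations, factor the quadratic nonlinearity to obtain a linear first-order transport equation for $\widetilde{f}=f_{1}-f_{2}$ with the extra zeroth-order term $-l(t)\widetilde{f}$ and zero terminal data, and conclude $\widetilde{f}\equiv 0$ by the method of characteristics. Your additional remarks on the coverage of the domain by backward characteristics go slightly beyond what the paper writes out, but the argument is the same.
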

\begin{proof}
Suppose $f_1$ and $f_2$ are  two $\mathcal C^{1,2}([0,T)\times(0,\infty))$  solutions to Eq. (\ref{01}). Define  $\widetilde {f}=f_1-f_2$.  Then the new function $\widetilde{f}$ satisfies  the following problem:
$$
\left\{
\begin{array}{lll}
 \displaystyle \partial_t \widetilde{f}
+\frac{1}{4\nu}\big[ \partial_{q}(f_1+f_2) +2\eta q \big]\partial_q \widetilde{f}   -l(t)\widetilde{f} =0\\
 \displaystyle \widetilde{f} (T-, q)=0.
\end{array}
\right.
$$
Since the evolution equation for $\widetilde {f}$ is linear and first-order, one can solve the above problem  explicitly by the method of characteristics, and find that $\widetilde{f}\equiv 0$ is the unique solution to this problem. As a result, $f_1\equiv f_2$.
\end{proof}
\quad\\

Similar to Section 3.2, we consider an ansatz that is  quadratic in  the variable $q$:
$$
F(t,q)=\widetilde{a}(t)+\widetilde{ b}(t)q +\widetilde{c}(t)q^2.
$$
Substituting the ansatz into
  Eq. (\ref{01}),  we know that the coefficients   $\widetilde{a}(t)$,  $\widetilde{b}(t)$ and $\widetilde{c}(t)$  must satisfy  the following partially decoupled  system:
\begin{equation}\label{ep2}
\left\{
\begin{array}{l}
 \displaystyle \dot{\widetilde c}(t)=l(t)\widetilde c(t)+\gamma\sigma^2+\phi\cdot l(t)-\frac{1}{4\nu}[2\widetilde c(t)+\eta]^2\\
 \displaystyle   \dot{\widetilde b}(t)=l(t)\widetilde b(t) -\frac{1}{2\nu}\widetilde b(t)[2\widetilde c(t)+\eta]\\
 \displaystyle \dot{\widetilde a}(t)=l(t)\widetilde a(t)-\frac{1}{4\nu}\widetilde b^2(t)
\end{array}
\right.
 \end{equation}
with  terminal conditions: $\widetilde a(T-)=0$, $\widetilde b(T-)=0$ and $\widetilde c(T-)=-\phi$.

It is straightforward to verify that
$$
\widetilde{b}(t)\equiv 0\quad  {\rm and}\quad  \widetilde{a}(t)\equiv 0.
$$
However, the equation satisfied by  $\widetilde c(t)$
is a Riccati equation, which can be reduced to a second-order linear ODE:
  \begin{equation}\label{second}
  u^{\prime\prime}-l(t)u^\prime-\frac{\gamma\sigma^2+(\phi-\frac{\eta}{2})l(t)}{\nu}u=0,
  \end{equation}
where $u$ is defined implicitly via  $\widetilde c(t)=\frac{\nu u^\prime}{u}-\frac{\eta}{2}$.
For this  second-order linear  ODE, its existence and uniqueness are  standard.
Even though we know the existence and uniqueness of the solution,
it is still difficult to solve it in a closed-form for a general hazard rate $l(t)$.
The above   second-order linear ODE  can be easily solved in two cases: (i) its coefficients are constant;
or (ii) its coefficients adopt particular forms.

If closed-form solutions cannot be obtained, finite difference method can be applied
to solving the   BVP numerically.
For more details, see, for example, Hwang \cite{Hwang03}.

\begin{theorem}\label{theorem4}
{\bf(Constant hazard rate).} When  the hazard rate is a constant, i.e., $l(t)\equiv \lambda$, the unknown function $\widetilde c(t)$ can be explicitly solved.
It is given by
$$
\widetilde c(t)=\frac{1}{2\hat\xi}\left[\frac{\hat\zeta e^{-2 \alpha(T-t)}-1}{\hat \zeta e^{-2 \alpha(T-t)}+1}\right]+\frac{\lambda\nu-\eta}{2}
$$
with
$$
\alpha=\sqrt{\frac{\lambda^2}{4}+\frac{\gamma\sigma^2+(\phi-\frac{\eta}{2})\lambda}{\nu}},
\quad
\hat\zeta=\frac{1-\hat\xi(2\phi+\lambda\nu-\eta)}{1+\hat\xi(2\phi+\lambda\nu-\eta)}
\quad {\rm and} \quad \hat\xi=\frac{1}{2 \alpha\nu}.
$$
\end{theorem}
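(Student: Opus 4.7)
The plan is to follow the reduction already set up in the paragraph preceding the theorem: the Riccati ODE for $\widetilde c(t)$ in system (\ref{ep2}) becomes, under the substitution $\widetilde c(t)=\nu u'/u-\eta/2$, the linear second-order equation (\ref{second}). When $l(t)\equiv\lambda$ this is a constant-coefficient linear ODE, so the whole exercise is reduced to (i) solving the characteristic equation, (ii) inverting the substitution, and (iii) fixing the one free parameter from the terminal condition $\widetilde c(T-)=-\phi$.

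Concretely, with $l(t)=\lambda$ the ODE (\ref{second}) reads
$$u''-\lambda u'-\frac{\gamma\sigma^{2}+(\phi-\eta/2)\lambda}{\nu}\,u=0,$$
whose characteristic polynomial $r^{2}-\lambda r-\tfrac{\gamma\sigma^{2}+(\phi-\eta/2)\lambda}{\nu}=0$ has the two real roots $r_{\pm}=\tfrac{\lambda}{2}\pm\alpha$, with $\alpha$ as defined in the statement. Hence the general solution is $u(t)=A\,e^{(\lambda/2-\alpha)t}+B\,e^{(\lambda/2+\alpha)t}$. Only the ratio $B/A$ matters for $\widetilde c$, and it is convenient to set $B/A=\hat\zeta e^{-2\alpha T}$, so that the factor $(B/A)e^{2\alpha t}$ simplifies to $\hat\zeta e^{-2\alpha(T-t)}$. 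A direct computation of $\nu u'/u-\eta/2$ then yields
$$\widetilde c(t)=\frac{\lambda\nu-\eta}{2}+\nu\alpha\cdot\frac{\hat\zeta e^{-2\alpha(T-t)}-1}{\hat\zeta e^{-2\alpha(T-t)}+1},$$
which, after substituting $\nu\alpha=1/(2\hat\xi)$, is exactly the expression stated in the theorem up to the determination of $\hat\zeta$.

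To close the argument I would impose the terminal condition $\widetilde c(T-)=-\phi$. Evaluating the displayed formula at $t=T$ gives
$$\frac{1}{2\hat\xi}\cdot\frac{\hat\zeta-1}{\hat\zeta+1}+\frac{\lambda\nu-\eta}{2}=-\phi,$$
which rearranges to $\frac{\hat\zeta-1}{\hat\zeta+1}=-\hat\xi(2\phi+\lambda\nu-\eta)$. Solving this linear equation for $\hat\zeta$ produces the stated $\hat\zeta=(1-\hat\xi(2\phi+\lambda\nu-\eta))/(1+\hat\xi(2\phi+\lambda\nu-\eta))$. Uniqueness of the candidate is automatic: the substitution is well-defined as long as $u$ does not vanish on $[0,T)$ (which one checks a posteriori from the sign of $\hat\zeta$, just as in the analogous check in the proof of Proposition \ref{proposition1}), and the Riccati equation has a unique local solution from the terminal data, so the formula found is in fact the solution.

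The only place where any care is needed is the bookkeeping in step two: one must pair the constant $\hat\zeta$ with the root $\lambda/2+\alpha$ rather than $\lambda/2-\alpha$ in order to match the sign conventions of the theorem, and absorb the factor $e^{-2\alpha T}$ into $\hat\zeta$ so that the exponent appearing in the final expression is the backward time $T-t$. Apart from this algebraic alignment, there is no real obstacle: the theorem is a direct corollary of the reduction-to-linear-ODE already carried out in the paper, combined with a standard constant-coefficient solve.
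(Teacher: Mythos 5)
Your proposal is correct and follows exactly the route the paper intends (the paper states Theorem~\ref{theorem4} without printing a proof, relying on the reduction of the Riccati equation for $\widetilde c$ to the constant-coefficient linear ODE (\ref{second}) via $\widetilde c=\nu u'/u-\eta/2$). Your computation of the characteristic roots $\lambda/2\pm\alpha$, the inversion of the substitution, and the determination of $\hat\zeta$ from $\widetilde c(T-)=-\phi$ all check out.
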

\quad\\

A direct verification yields  that when $\lambda=0$, $\widetilde c(t)=c(t)$.
The results derived under Model 2  coincide with  those  derived under Model 1.
The optimal liquidating strategy  for the unconstrained problem can then be derived through the following relation:
\begin{equation}\label{opl14}
\theta_t^{\phi, *}=-\frac{1}{2\nu}[2\widetilde c(t)+\eta]q,
\end{equation}
and hence,
$$
X^{\phi, *}_t=  Q\cdot \exp\left(\int_0^t\frac{1}{2\nu}[2\widetilde c(u)+\eta]du\right).
$$
The following theorem provides us the relation between the optimal deterministic control and  the optimal stochastic control.

\begin{theorem}\label{theorem5}
When the transaction fees  involved by liquidating the outstanding position $X_{T-}$  approaches to infinity, the optimal stochastic control  process $(\theta^{\phi,*}_t)_{t\in[0,T)}$ before the trigger event occurs    converges (point-wise)  to the optimal deterministic control process $(\theta^{det,*}_t)_{t\in[0,T)}$. Meanwhile, the optimal trajectory $X_t^{\phi, *}$ converges (point-wise) to the one determined in the deterministic system $X_t^{det,*}$: as  $\phi\to \infty$,  for any time $t\in[0,T)$,
\begin{enumerate}
\item $\displaystyle \lim_{\phi\to \infty} \theta^{\phi,*}_t= \lim_{\phi \to\infty} \theta^{det,*}_t$;
\item $\displaystyle \lim_{\phi\to \infty}X_t^{\phi,*}= \lim_{\phi\to\infty} X_t^{det,*}$.
\end{enumerate}

\end{theorem}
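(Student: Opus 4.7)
My plan is to mimic the proof of Theorem 2 in the Model 2 setting, working in the constant hazard rate case where Theorem 4 supplies a closed form for $\widetilde c$. The structural backbone is identical: I derive an explicit formula for $X^{\phi,*}_t$, let $\phi\to\infty$, and then match the leading asymptotics with those of $X^{det,*}_t$ and $\theta^{det,*}_t$.

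First I would integrate $X^{\phi,*}_t = Q\exp\bigl(\int_0^t [2\widetilde c(u)+\eta]/(2\nu)\,du\bigr)$. Plugging in the formula from Theorem 4 yields an integrand of the form $\alpha\bigl[\hat\zeta e^{-2\alpha(T-u)}-1\bigr]\big/\bigl[\hat\zeta e^{-2\alpha(T-u)}+1\bigr] + \lambda/2$, which is precisely the shape handled by Eq.~(16) in the proof of Theorem 2 (with $\zeta,\xi$ replaced by $\hat\zeta,\hat\xi$). The substitution $v=2\alpha(T-s)$ integrates it explicitly and gives
\begin{equation*}
X^{\phi,*}_t \;=\; Q\, e^{\lambda t/2}\cdot\frac{e^{\alpha(T-t)}+\hat\zeta\,e^{-\alpha(T-t)}}{e^{\alpha T}+\hat\zeta\,e^{-\alpha T}}.
\end{equation*}
Next I would record the joint asymptotics as $\phi\to\infty$. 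Since $\alpha=\sqrt{\lambda^2/4+(\gamma\sigma^2+(\phi-\eta/2)\lambda)/\nu}$, one has $\alpha\sim\sqrt{\phi\lambda/\nu}$, so $\hat\xi=1/(2\alpha\nu)\to 0$ yet $\hat\xi(2\phi+\lambda\nu-\eta)\sim \phi/(\alpha\nu)\to\infty$, which forces $\hat\zeta\to -1$ in analogy with $\zeta\to -1$ in Theorem 2.

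The final step is the pointwise comparison. At $t=0$, $X^{\phi,*}_0=Q=X^{det,*}_0$ identically, so there is nothing to check. For $t\in(0,T)$, factoring $e^{\alpha T}$ out of the denominator and $e^{\alpha(T-t)}$ out of the numerator of the closed form gives $X^{\phi,*}_t = Qe^{\lambda t/2}e^{-\alpha t}\cdot [1+\hat\zeta e^{-2\alpha(T-t)}]\big/[1+\hat\zeta e^{-2\alpha T}]$; the bracketed ratio converges to $1$ while $e^{-\alpha t}\to 0$, so $X^{\phi,*}_t\to 0$. Applying the same factorization to $X^{det,*}_t = Qe^{\lambda t/2}[e^{\alpha(T-t)}-e^{-\alpha(T-t)}]/[e^{\alpha T}-e^{-\alpha T}]$ exposes the same leading order $Qe^{\lambda t/2}e^{-\alpha t}$, so $X^{det,*}_t\to 0$ as well, in agreement with the remark following Theorem 4. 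For the controls, I would combine $\theta^{\phi,*}_t=-[2\widetilde c(t)+\eta]/(2\nu)\,X^{\phi,*}_t$ with $\theta^{det,*}_t=-\dot X^{det,*}_t$; each carries the same leading asymptotics $\alpha\, Q\, e^{\lambda t/2}e^{-\alpha t}$, tending to $0$ for $t\in(0,T)$ and diverging at rate $\alpha Q$ at $t=0$, so both limits coincide (either as $0$ or, on the extended real line, as $+\infty$).

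The main obstacle is handling the coupled limits $\alpha\to\infty$ and $\hat\zeta\to-1$ simultaneously, i.e.\ verifying that the $\hat\zeta$-dependent corrections in the closed form for $X^{\phi,*}_t$ sit in subleading orders that are absorbed once the dominant exponentials are factored out. This is the exact counterpart of the manipulation in Eq.~(16) of Theorem 2's proof, so the technique transfers almost verbatim once the explicit integration from the first step is in hand.
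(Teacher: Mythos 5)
Your proposal is correct and follows essentially the same route as the paper: integrate the feedback form using the Theorem~4 closed form to get $X^{\phi,*}_t = Q e^{(\lambda/2-\alpha)t}\bigl[\hat\zeta e^{-2\alpha(T-t)}+1\bigr]\big/\bigl[\hat\zeta e^{-2\alpha T}+1\bigr]$, observe $\hat\zeta\to-1$ and $\alpha\to\infty$, and match leading asymptotics against $X^{det,*}_t$ and $\theta^{det,*}_t$ (the paper's own proof is only an outline of exactly this computation). Your explicit handling of the coupled limits and of the degenerate endpoint behaviour at $t=0$ is a welcome bit of extra care, not a deviation.
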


\begin{proof}
The proof of this theorem  is very similar to that of Theorem 2.
Therefore we will not provide all  the details; instead we will
just outline  the proof as follows.
First, a simple calculation yields that
 $$
 \begin{array}{lll}
\displaystyle \lim_{\phi\to\infty} X_t^{\phi,*}&=&\displaystyle  \lim_{\phi\to\infty} \displaystyle Q e^{-(\alpha-\frac{\lambda}{2})t}\frac{\hat\zeta e^{-2\alpha(T-t)}+1}{\hat\zeta e^{-2\alpha T}+1}\\
&=&\displaystyle \lim_{\phi\to \infty}Q e^{-(\alpha-\frac{\lambda}{2})t}\frac{  e^{-2\alpha(T-t)}-1}{  e^{-2\alpha T}-1}\\
&=&\displaystyle \lim_{\phi\to\infty}X_t^{det,*}.
\end{array}
 $$
Following the relations
 $$
\left\{
 \begin{array}{lll}
\displaystyle  \theta^{\phi, *}_t=-\frac{1}{2\nu} [2\widetilde c(t)+\eta]X_t^{\phi,*},\\
\displaystyle \lim_{\phi\to\infty} \frac{1}{2\nu}[2\widetilde{c}(t)+\eta]=\lim_{\phi\to\infty} \Big[\alpha\frac{e^{-2\alpha(T-t)}+1}{e^{-2\alpha(T-t)}-1}+\frac{\lambda}{2}\Big],
\end{array}
\right.
$$
we can further verify that  $\lim_{\phi\to\infty} \theta^{\phi,*}_t=\lim_{\phi\to\infty}\theta^{det,*}_t$.
\end{proof}
\quad\\

We remark that if condition (\ref{cond1}) is satisfied, then
$$
\frac{d\widetilde c(t)}{dt}<0\quad{\rm and}\quad 2\widetilde c(0)+\eta<\lambda\nu-\frac{1}{\hat\xi}<0.
$$
 Therefore,   $2\widetilde c(t)+\eta<0$ always holds for any time $t\in[0,T)$.
We can further verify that
(a1) $\theta^{\phi,*}_t\ge0$ holds for any time $t\in[0,T)$; and
(a2) $\int_{[0, T)}\theta^{\phi,*}_tdt\le Q$.
That is, the obtained optimal trading strategy in Eq. (\ref{opl14}) is also the optimal trading strategy for the constrained problem.

\subsection{Numerical Results}

In this section, we provide some numerical results to illustrate the effects of exogenous  trigger event on the agent's  liquidating strategy.
Suppose the size of the target order to be liquidated is $Q=100\; units$, the liquidation time $T=1\; day$, and the hazard rate at which the trigger event occurs is $\lambda=1$.
The model parameters' values are set as follows:
$$
\gamma=0.1,  \;\; \sigma=0.2,\;\; \eta=0.001,\;\; \nu=0.003,\;\;  \phi=0.1.
$$

Figure \ref{model12} provides a comparison of liquidation strategies under  two different settings: one without  trigger event (Model 1), and the other with
trigger event (Model 2).
In the upper-panel and middle-panel plots given in Figure \ref{model12},
we can observe that  an exogenous trigger event occurs at time $t=0.46$.
At that time, all trades are suspended in Model 2:
$$
\theta_t|_{t\in (0.46,1]}=0\quad \mbox{and}\quad X_t|_{t\in (0.46,1]}=X_{t=0.46}.
$$
Since our    objective is to liquidate a large position before time $T=1$ (Model 1) or time $\tau=\min\{0.46, 1\}$ (Model 2), agents facing the scenario that an exogenous trigger event  might  occur during the trading horizon (Model 2) would like to  accelerate the rate of liquidating
to reduce their exposure to potential position risk and eventually in a smaller position when the trigger event occurs.
Their strategy  is more ``convex'' compared with those who are not threatened by this trigger event, as can be seen from the upper-panel plot given in  Figure \ref{model12}.

\begin{figure}[H]
\begin{center}
\includegraphics[width=5.5in, height=4.3in]{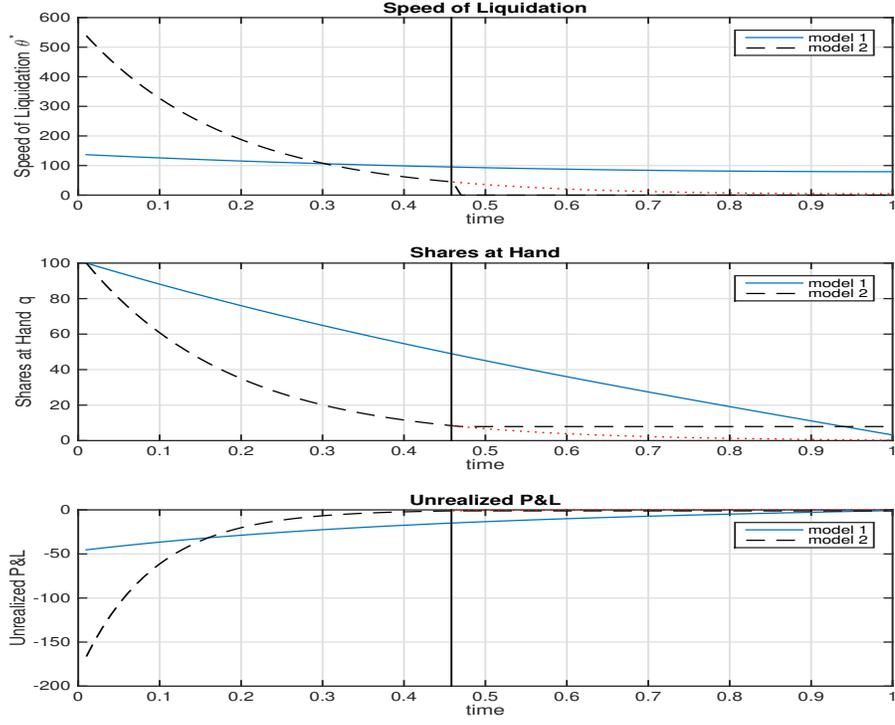}
\caption{ A comparison of liquidation strategies under two different settings:
one without  trigger event (Model 1), and the other with trigger event (Model 2).}
\label{model12}
\end{center}
\end{figure}

The lower-panel plot given in Figure \ref{model12} depicts the updated unrealized Profit  and Loss (P\&L) profile of the  DP problems as a function of time $t$:
$$
 U(t,q)= c(t)q^2 \quad {\rm and} \quad
 F(t,q)= \widetilde c(t)q^2.
$$
Notice that at any time $t\in[0,T]$, according to the DP principle,
the value function at time $t=0$ can be written as follows:
$$
U(0,Q)= \underbrace{ R^*_t-\gamma [V^*,V^*]([0,t))}_{\mbox{realized P\&L}} + \underbrace{U(t,X_t)}_{\mbox{unrealized P\&L}}.
$$
Here $R^*_t-\gamma [V^*,V^*]([0,t))$ can be regarded as the realized
 P\&L, and $U(t,X_t)$ can be regarded as the unrealized P\&L.
As we can see from Figure \ref{model12}, at the very beginning,
due to the potential position risk incurred by  exogenous trigger events,
the unrealized  P\&L under the second setting is significantly smaller
than that under the first setting.
This gap would eventually be narrowed through the adjustment of  the trading strategy,
and at time $t=0.15$, before the occurrence of the trigger event,  this situation is completely reversed.

\begin{figure}[H]
\begin{center}
\includegraphics[width=5.5in, height=3.5in]{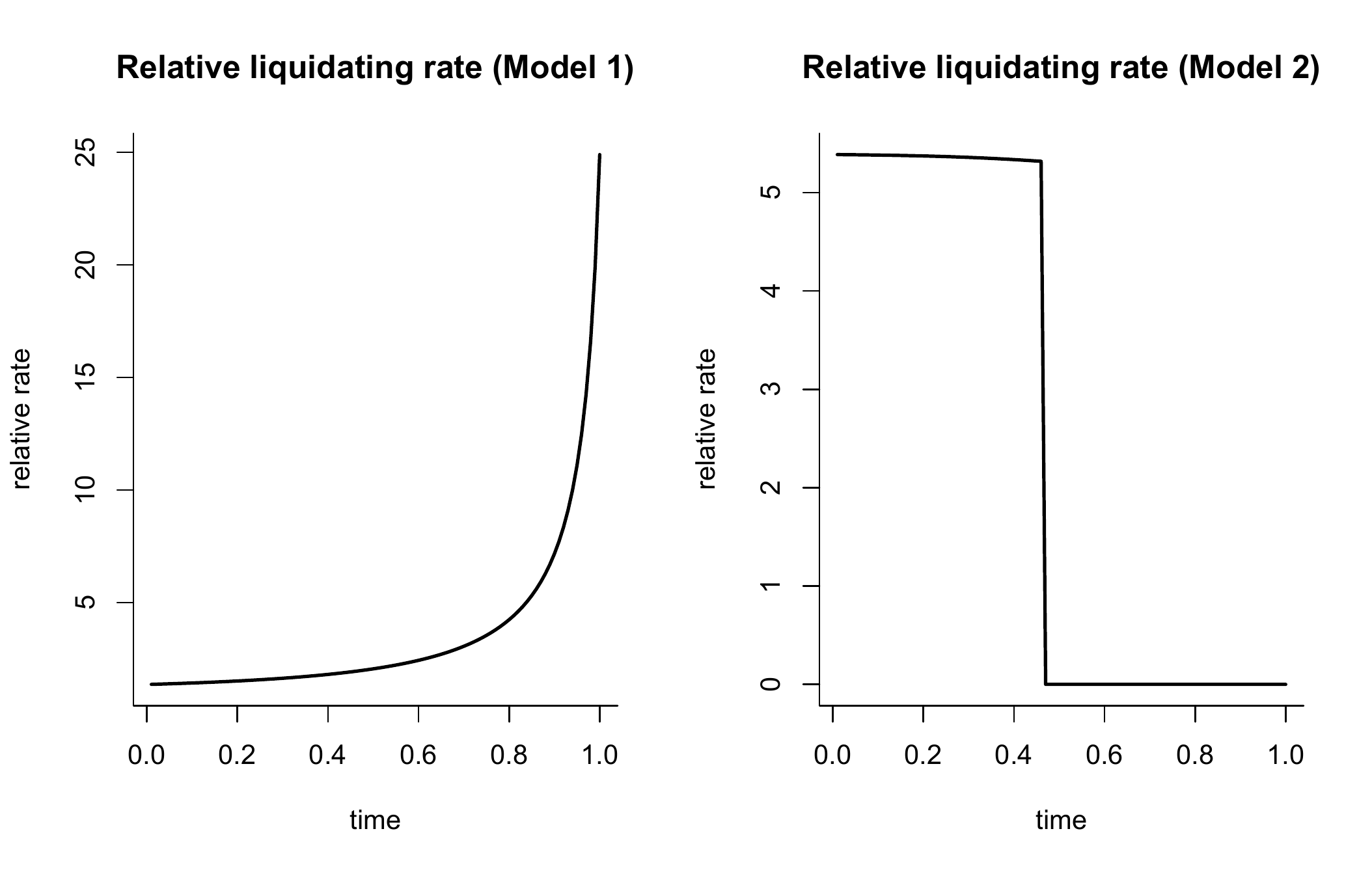}
\caption{ A comparison of  relative liquidating speed $\theta/q$   under two different settings:
one without  trigger event (Model 1), and the other with trigger event (Model 2).}
\label{bty}
\end{center}
\end{figure}

Figure \ref{bty} displays the relative liquidating speeds $\theta_t/X_t$ under the two different  settings.
We clearly see that the relative liquidating rate  under the first setting (Model 1) depends on
time-to-maturity in a monotonic way.
Indeed, as  $t$ approaches to the time horizon  $T$, there is a real need for an agent  to liquidate
because the liquidation cost $\phi X_{T-}^2$ at time $T$  is high.
However,  if the agent faces an additional risk that an exogenous trigger event might occur during the liquidating horizon, when the related risk is high, he/she needs to trade faster to reduce this risk.
As in the above figure, the relative liquidating rate under the second setting  shows a nearly   constant  during the  period  $[0,\tau)$.

\section{Optimal Liquidation Strategy Subject to Counterparty Risk  (Model 3) }
	
In this section, we assume that  the trigger event is not exogenous.
It is incurred  by the  evolvement of the market value of the stock issuer.

\subsection{The Hitting Time}
Let the stock issuer's market value $Y_t$ evolves over time  according to
 $$
\left\{
\begin{array}{lll}
\displaystyle dY_t=Y_t\big(\beta  dt+\xi dW^Y_t\big),\quad Y_0=y_0\\
\displaystyle dW^Y_t d W^S_t= \rho,
\end{array}
\right.
$$
where the constant  $\beta$ is the mean rate of return of the company, the constant $\xi$ is the volatility,   $\{W^S_t\}$ and $\{W^Y_t\}$ are two correlated  Brownian motions,  and the constant $\rho$ is the correlation coefficient, with $|\rho|<1$.
Thus, we have
$$
Y_t=y_0e^{(\beta-\frac{\xi^2}{2})t+\xi W_t^Y}.
$$
We assume that once the company's  market  value falls down to  a pre-determined limit $\alpha^*>0$
(it is pre-assumed that $y_0> \alpha^*$),  a great switch will be involved in this company and the liquidation process will be forced to suspend.
Let
$$
m(y_0)=\frac{1}{\xi}\ln\left(\frac{y_0}{\alpha^*}\right)>0
\quad {\rm and} \quad
\alpha=\frac{1}{\xi}\left(\frac{\xi^2}{2}-\beta\right),
$$
and define $ \widehat{W}^Y_t=W^Y_t+\alpha t$.
According to the information, the time at which this switch   occurs is  defined by\footnote{We use the fact that $-W_t^Y$ has the same distribution as $W_t^Y$.}
$$
\kappa_{m(y_0)}= \inf\{t\ge0: \widehat{W}^Y_t= m(y_0)\}.
$$
The liquidation horizon is then given by $\tau_{m(y_0)}=\min\{\kappa_{m(y_0)}, T\}$,
which is a stopping time.
Due to the Markov property of the  Brownian motion  $\{\widehat W^Y_t\}$, given $\kappa_{m(y_0)}>t$ and $Y_t=y$, the conditional distribution of $\kappa_{m(y_0)}$ is given by
$$
\kappa_{m(y_0)}|\{\kappa_{m(y_0)}>t\lor Y_t=y\}=t+\kappa_{m(y)}.
$$
Therefore, we obtain
\begin{equation}\label{stop}
\tau_{m(y_0)}|\{\tau_{m(y_0)}>t\lor Y_t=y\}=t+\min\{\kappa_{m(y)}, T-t\}=:\tau_{t, m(y)}.
\end{equation}

\begin{proposition}\label{proposition3}
For the Brownian motion,
$\widehat{W}^Y_t = W^Y_t+\alpha t, \ \alpha\ne 0$, define
$$
\kappa_{m}= \inf\{t\ge0: \widehat{W}^Y_t= m\}, \ \ m>0.
$$
The Laplace transform of $\kappa_m$ is
$$
\mathbb E[e^{-u\kappa_m}]=e^{\alpha m-m\sqrt{2u+\alpha^2}}, \quad \mbox{for all $u>0$}.
$$
\end{proposition}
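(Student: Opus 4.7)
The plan is to compute the Laplace transform via the standard exponential martingale trick for Brownian motion with drift, combined with Doob's optional sampling theorem. First I would introduce the family of exponential martingales built from the underlying Brownian motion $W^Y_t$: for any $\lambda \in \mathbb{R}$, the process
\[
M_t \;=\; \exp\!\bigl(\lambda W^Y_t - \tfrac{\lambda^2}{2}t\bigr) \;=\; \exp\!\bigl(\lambda \widehat{W}^Y_t - (\lambda \alpha + \tfrac{\lambda^2}{2})t\bigr)
\]
is a continuous martingale with $M_0 = 1$, where I have used $W^Y_t = \widehat{W}^Y_t - \alpha t$ to rewrite $M_t$ in terms of the drifted process $\widehat{W}^Y_t$ that actually hits the level $m$.

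The next step is to tune $\lambda$ so that the time-decay rate equals $u$. The quadratic equation $\lambda \alpha + \tfrac{\lambda^2}{2} = u$ has two real roots; I would choose the positive one, namely $\lambda^{*} = -\alpha + \sqrt{\alpha^2 + 2u}$, which is strictly positive since $u > 0$. With this choice, $M_t = \exp\!\bigl(\lambda^{*} \widehat{W}^Y_t - u t\bigr)$. Because $\widehat{W}^Y_s \le m$ for all $s \le \kappa_m$ and $\lambda^{*} > 0$, the stopped process $M_{t \wedge \kappa_m}$ is uniformly bounded by $e^{\lambda^{*} m}$.

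I would then apply optional sampling to $M_{t \wedge \kappa_m}$ and send $t \to \infty$, justifying the passage to the limit via dominated convergence using that uniform bound. On $\{\kappa_m < \infty\}$ the stopped martingale converges to $M_{\kappa_m} = \exp(\lambda^{*} m - u \kappa_m)$ by continuity and the definition of the hitting time; on $\{\kappa_m = \infty\}$ (which can only occur when $\alpha < 0$) the bound $\widehat{W}^Y_t < m$ persists for all $t$, so $M_t \le e^{\lambda^{*} m - u t} \to 0$, and likewise $e^{-u\kappa_m} = 0$ there. Hence
\[
1 \;=\; \mathbb{E}[M_0] \;=\; \mathbb{E}\!\left[e^{\lambda^{*} m - u \kappa_m}\,\mathbf{1}_{\{\kappa_m < \infty\}}\right] \;=\; e^{\lambda^{*} m}\,\mathbb{E}\!\left[e^{-u\kappa_m}\right],
\]
which yields $\mathbb{E}[e^{-u\kappa_m}] = e^{-\lambda^{*} m} = e^{\alpha m - m\sqrt{\alpha^2 + 2u}}$, as claimed.

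The main obstacles are essentially bookkeeping rather than substantial: choosing the correct (positive) root so the stopped martingale is bounded, and handling the possibility that $\kappa_m = \infty$ when the drift is downward. Both are cleanly resolved by the observation that $u > 0$ forces the relevant exponential to vanish on $\{\kappa_m = \infty\}$, so the resulting identity is valid regardless of the sign of $\alpha$.
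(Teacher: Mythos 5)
Your proof is correct and is the standard exponential-martingale/optional-sampling argument for the Laplace transform of the first passage time of a drifted Brownian motion (as in Shreve's \emph{Stochastic Calculus for Finance II}, which the paper cites); the paper itself states Proposition 3 without proof, treating it as a known result. Your choice of the positive root $\lambda^{*}=-\alpha+\sqrt{\alpha^2+2u}$, the uniform bound $e^{\lambda^{*}m}$ on the stopped martingale, and the treatment of $\{\kappa_m=\infty\}$ when $\alpha<0$ are all handled correctly, so no gaps remain.
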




Recall that, at any time $t$  prior to the time horizon $\tau_{t, m(y)}$
(which is defined in Eq. (\ref{stop})) with initial value $Y_t=y$ and $X_t=q$,  the agent's objective is to find the optimal control for
\begin{equation}\label{problem21}
\begin{array}{lll}
 H(t, y,q)&=&\displaystyle \max_{\theta(\cdot)\in\Theta_t}\mathbb E\left[\int_t^{\tau_{t, m(y)}-}\Pi(\theta_r, X_r)dr-\phi X_{\tau_{t, m(y)}-}^2\Big|\mathcal F_t \right]\\
 &=& \displaystyle \max_{\theta(\cdot)\in\Theta_t}\mathbb E\left[\int_{[t, T)}\mathbb I_{\{r<\tau_{t, m(y)}\}}\Pi(\theta_r, X_r)dr-\phi X_{\tau_{t, m(y)}-}^2\Big|\mathcal F_t \right],
\end{array}
  \end{equation}
  where
  $$
\Pi(\theta_t, X_t)= g(\theta_t)\theta_t+f(\theta_t)X_t-\gamma \sigma^2 X_t^2.
$$
We verify comparison principles for viscosity solutions and  characterize the value function as the unique viscosity solution of the associated HJB equation.

\begin{remark}
Assume that the company's market value has not hit   the pre-determined level $\alpha^*$ by time $t$. By the definition of $m(y)$,  we have
$m(y)\to \infty$ as  $y\to \infty$.
 According to Proposition 2,   $\lim_{y\to\infty}\mathbb E[e^{-u\kappa_{m(y)}}]=0$, for all $u>0$. This implies that,
$$
\begin{array}{lll}
\displaystyle 0\le  e^{-uN}\mathbb P(\kappa_{m(y)}\le N)\le \mathbb E[e^{-u\kappa_{m(y)}}],
\end{array}
$$
 for any positive integer $N$.
Hence,  $\lim_{y\to\infty}\mathbb P(\kappa_{m(y)}\le N)=0$.
Passing to the limit   $N\to\infty$, we conclude that
 $$
 \lim_{y\to \infty} \mathbb P(\kappa_{ m(y)}< \infty)=0\quad \mbox{and}\quad \lim_{y\to \infty} \mathbb P(\kappa_{ m(y)}= \infty)=1.
 $$
Therefore, $\lim_{y\to \infty} \mathbb P(\tau_{t, m(y)}= T)=1$.
\end{remark}

\subsection{Dynamic Programming Approach }

In this section, we  discuss some analytical properties of the value function $H$ without proofs.
Some technical proofs will be  provided in Sections 5.3 and 5.4.

\begin{theorem}\label{theorem6}
Let
 $H(t,  y, q)$  denote  the value function in Eq. (\ref{problem21}) at any time $t$ before the process $\{Y_t, t\ge 0\}$ touches the pre-determined limit  $\alpha^*$ and $(Y_t,X_t)=(y,q)$.
Suppose the value function $H(t,y,q)$ is sufficiently smooth\footnote{$\mathcal C^{1,2,2}([0,T)\times (\alpha^*,\infty)\times (0,\infty))$ is the space of functions $f(t, y,q)$ which is continuously differentiable in $t$, and twice continuously differentiable in $y$ and $q$.}, i.e.,
$H\in \mathcal C^{1,2,2}([0,T)\times  (\alpha^*,+\infty)\times(0,+\infty) )$,
then $H(t,y,q)$ satisfies the HJB equation
 \begin{equation}\label{HJB2}
 \displaystyle -(\partial_t+\beta y\partial_y+\frac{1}{2}\xi^2y^2\partial_{yy})  H  + \gamma  \sigma^2 q^2 -  \max_{\theta_t\in\Theta_t}\Big\{ g(\theta_t)\theta_t+ f(\theta_t) q-\partial_{q}H\cdot \theta_t\Big\}=0,
\end{equation}
in the region
$$
\{(t, y,q): 0\le t<T, y> \alpha^*,  q>0\}
$$
and satisfies the boundary conditions
\begin{description}
\item[$(a)$] $\displaystyle H(T, y, q)=-\phi q^2,\quad\quad\quad y>\alpha^*$,  \ \quad\quad \quad {\rm (35.a)}
\item[$(b)$] $\displaystyle H(t, \alpha^*, q)=-\phi q^2,\quad\quad 0\le t< T$,  \quad \quad\quad {\rm (35.b)}
\item[$(c)$]  $\displaystyle \lim_{y\to \infty} H(t,y,q)=U(t,q)$,\quad \quad\quad \quad \quad \quad\quad \quad {\rm (35.c)}
\end{description}
where   $U(t,q)$ is the value function of the optimization problem that we  considered in  Model 1.
\end{theorem}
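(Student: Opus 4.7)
The plan is to establish the HJB via the dynamic programming principle (DPP) together with Itô's formula applied to $H(t,Y_t,X_t)$, and then verify each boundary condition directly from the structure of the stopping time $\tau_{t,m(y)}$ defined in Eq.~(\ref{stop}).

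First, I would apply the DPP to $H(t,y,q)$: for any admissible $\theta(\cdot)\in\Theta_t$ and small $h>0$ with $t+h<T$,
\begin{equation*}
H(t,y,q)\;\ge\;\mathbb{E}\Bigl[\int_t^{(t+h)\wedge\tau_{t,m(y)}}\!\!\Pi(\theta_r,X_r)\,dr\;+\;H\bigl((t+h)\wedge\tau_{t,m(y)},\,Y_{(t+h)\wedge\tau_{t,m(y)}},\,X_{(t+h)\wedge\tau_{t,m(y)}}\bigr)\Bigr],
\end{equation*}
with equality attained by the optimizer. On the event $\{\tau_{t,m(y)}\le t+h\}$ the boundary value (35.b) lets me replace $H$ by $-\phi X_{\tau-}^2$, matching the penalty in (\ref{problem21}). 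Since $Y$ is continuous and $y>\alpha^*$, this event has probability $o(h)$. Applying Itô's formula to $H(s,Y_s,X_s)$ on $[t,t+h]$ (noting $X_s$ has no diffusion and $dX_s=-\theta_s\,ds$), taking expectation, dividing by $h$, and sending $h\to 0^+$ produces
\begin{equation*}
0\;\ge\;\partial_t H+\beta y\partial_y H+\tfrac12\xi^2 y^2\partial_{yy}H-\gamma\sigma^2 q^2-\nu\theta^2-\eta q\theta-\theta\,\partial_q H,
\end{equation*}
with equality upon taking the supremum over $\theta\ge 0$. Reorganizing and using $g(\theta)\theta+f(\theta)q=-\nu\theta^2-\eta q\theta$ yields exactly (\ref{HJB2}).

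For the boundary conditions I argue directly from (\ref{problem21}). Condition (35.a) holds because at $t=T$ the horizon satisfies $\tau_{T,m(y)}=T$, so the integral vanishes and the penalty $-\phi X_{T-}^2$ reduces to $-\phi q^2$ (independent of the control). Condition (35.b) follows from the fact that $m(\alpha^*)=0$, which forces $\kappa_{m(\alpha^*)}=0$ a.s.\ and thus $\tau_{t,m(\alpha^*)}=t$; again the integral is empty and the payoff is $-\phi q^2$. For (35.c), Remark 1 gives $\mathbb{P}(\tau_{t,m(y)}=T)\to 1$ as $y\to\infty$, so the stopped integrand $\mathbb{I}_{\{r<\tau_{t,m(y)}\}}\Pi(\theta_r,X_r)$ converges pointwise to $\Pi(\theta_r,X_r)$ on $[t,T)$; using admissibility conditions (iv)--(v) to dominate $|\Pi(\theta_r,X_r)|$ in $L^1$ uniformly in $y$, dominated convergence interchanges limit and expectation, and the functional collapses to the Model~1 objective in (\ref{e1}).

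The main obstacle is the limit in (35.c): one needs to pass the limit $y\to\infty$ through the supremum over $\Theta_t$, not merely through the expectation for a fixed $\theta$. A clean way around this is to invoke the partial results of Sections~3--4, where the value functions are quadratic in $q$ with coefficients solving ODE systems of the type (\ref{ep}) and (\ref{ep2}); one argues that any quadratic candidate solving (\ref{HJB2}) with boundary data (35.a)--(35.b) automatically has $y$-marginal converging to the solution of (\ref{0}), and then uses a uniqueness/verification argument (of the type to be established in the forthcoming viscosity-solution sections) to identify this candidate with $H$. The HJB derivation itself is routine under the assumed $\mathcal{C}^{1,2,2}$ smoothness; the vanishing-stopping-probability estimate $\mathbb{P}(\tau_{t,m(y)}\le t+h)=o(h)$ is the only nontrivial input, and it follows from continuity of $Y$ together with $y>\alpha^*$.
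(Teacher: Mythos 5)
The first thing to note is that the paper never actually proves Theorem \ref{theorem6}: Section 5.2 announces that the properties of $H$ are presented ``without proofs,'' and the rigorous substitute is the viscosity-solution machinery of Theorems \ref{theorem8}--\ref{theorem9}, whose sub/super-solution arguments use exactly the DPP-plus-It\^{o} mechanics you employ, only applied to smooth test functions instead of to $H$ itself. So your classical derivation under the assumed $\mathcal C^{1,2,2}$ regularity is the natural proof of the theorem as stated, and the core of it is sound: the generator bookkeeping ($Y$ diffusive, $dX_s=-\theta_s\,ds$ with no diffusion), the direction of the DPP inequality, the identification $g(\theta)\theta+f(\theta)q=-\nu\theta^2-\eta q\theta$, and the observation that $\mathbb P(\kappa_{m(y)}\le h)=o(h)$ when $m(y)>0$ (indeed it is $O(e^{-m(y)^2/Ch})$) are all correct, as are the boundary conditions (35.a) and (35.b) via $\tau_{T,m(y)}=T$ and $m(\alpha^*)=0$.

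The genuine gap is in (35.c), and it is a bit worse than your own caveat suggests. Dominated convergence for a \emph{fixed} control is fine, but passing $y\to\infty$ through $\sup_{\theta\in\Theta_t}$ needs a bound on the discrepancy between the stopped and unstopped objectives that is uniform in $\theta$; admissibility condition (iv) only gives $\mathbb E\int\theta^2\,dt<\infty$ with no uniform constant, so the error term $\mathbb E\big[\int_{\tau_{t,m(y)}}^{T}\nu\theta_r^2\,dr\big]$ cannot be controlled over all of $\Theta_t$. Your fallback --- identifying $H$ with a quadratic ansatz via a uniqueness theorem to be proved later --- is circular, since the comparison principle (Theorem \ref{theorem9}) takes the boundary behaviour (35.c) as a hypothesis. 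A non-circular repair is a sandwich that avoids uniformity altogether. For the lower bound, plug the (deterministic, bounded) Model-1 optimal control into the stopped objective; dominated convergence plus Remark 1 gives $\liminf_{y\to\infty}H(t,y,q)\ge U(t,q)$. For the upper bound, use the device from the proof of Theorem \ref{theorem7}(i): given any $\theta$ for the stopped problem, extend it on $[\tau_{t,m(y)},T)$ by $\tilde\theta_r=\frac{2\phi-\eta}{2\nu}X_r$, which is admissible and, under condition (\ref{cond1}), makes the running reward $L$ of Eq.~(\ref{L}) nonnegative on the extension; hence the Model-1 objective of $\tilde\theta$ dominates the stopped objective of $\theta$, giving $H(t,y,q)\le U(t,q)$ for every $y>\alpha^*$. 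Together these yield $\lim_{y\to\infty}H(t,y,q)=U(t,q)$, at the price of assuming (\ref{cond1}), which the paper itself implicitly invokes when it asserts $H\le U$ immediately after the theorem.
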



It is worth noting  that  for any $y\ge \alpha^*$,  $ H(t,y,q)\le U(t,q)$, and for any
$y_1\ge y_2\ge \alpha^*$,  $ H(t, y_1, q)\ge H(t, y_2, q)$.
The rationale behind this is intuitive.
Compared with ``default-free'' model (Model 1), greater counterparty risk gives a smaller value function.

\subsection{Monotonicity and   Continuity of $H(t,y,q)$ }

In  Section 5.2, we presented without proof the primary analytical properties of the value function $H(t,y,q)$.
In this section, we prove the monotonicity, growth rate control and  continuity of $H(t,y,q)$ as follows.

\begin{theorem} \label{theorem7}
Assume that condition (\ref{cond1}) is satisfied:
$2\phi>\eta+2\sigma\sqrt{\gamma\nu}$.
Then, we have
\begin{description}
\item[(i)] {\bf(Monotonicity)}
$H(t,y,q)$ is an increasing function in $y$, and a decreasing function in $t$;
\item[(ii)]{\bf(Continuity)}
 $H(t,y,q)$ is locally H\"{o}lder continuous  in $t$ with exponent  $1/2$, and locally Lipschitz continuous in both $y$ and $q$   in  $[0,T)\times (\alpha^*,\infty)\times (0,\infty)$;
 \item[(iii)] {\bf(Growth Rate Control)} $H(t,y,q)$ satisfies a quadratic  growth condition with respect to the inventory variable $q$: for any $(t, y, q)\in [0,T)\times (\alpha^*, \infty)\times(0, \infty)$,
$$
H(t, y, q)\le \left(\phi + \big[\frac{(2\phi-\eta)^2}{4\nu}-\gamma\sigma^2\big](T-t)\right) q^2.
$$

 \end{description}
\end{theorem}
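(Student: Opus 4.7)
The plan is to tackle the three items with a common toolkit: explicit smooth super-solutions of the HJB equation~(\ref{HJB2}), pathwise coupling of the driving Brownian motion $W^{Y}$ for different starting values, and the dynamic programming principle applied locally in time. I would prove the growth bound~(iii) first, since it supplies a quadratic envelope on $H$ that is needed both for the coupling estimates in~(i) and the continuity arguments in~(ii). For~(iii) the natural ansatz is $\widetilde{H}(t,y,q)=A(t)\,q^{2}$ with $A(t)=\phi+\bigl[\tfrac{(2\phi-\eta)^{2}}{4\nu}-\gamma\sigma^{2}\bigr](T-t)$. Under~(\ref{cond1}) one has $A(t)\ge\phi>0$, so the unconstrained optimizer $-(\eta+2A(t))q/(2\nu)$ in~(\ref{HJB2}) is strictly negative and the constrained maximum over $\theta\in\mathbb{R}_{+}$ is attained at $\theta=0$. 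The HJB operator applied to $\widetilde H$ then reduces to $\tfrac{(2\phi-\eta)^{2}}{4\nu}q^{2}\ge 0$, so $\widetilde H$ is a classical super-solution; together with $\widetilde H(T,y,q)\ge-\phi q^{2}$, $\widetilde H(t,\alpha^{*},q)\ge-\phi q^{2}$, and $\widetilde H(t,y,q)\ge U(t,q)$ as $y\to\infty$, the comparison principle (verified in Section~5.4) yields $H\le\widetilde H$, which is~(iii).

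For~(i), the key is a carefully chosen extension strategy. Given $y_{1}>y_{2}\ge\alpha^{*}$ I would couple the two problems by using a common driving Brownian motion for $W^{Y}$, so that $\tau^{(1)}\ge\tau^{(2)}$ pathwise. For any admissible $\theta^{*}$ in the $y_{2}$-problem, I propose using $\theta^{*}$ on $[t,\tau^{(2)})$ in the $y_{1}$-problem and switching to the feedback $\widetilde\theta_{r}=c X_{r}$ on $[\tau^{(2)},\tau^{(1)})$ for a constant $c$ in the interval $[c_{-},c_{+}]$, where $c_{\pm}=\bigl((2\phi-\eta)\pm\sqrt{(2\phi-\eta)^{2}-4\nu\gamma\sigma^{2}}\bigr)/(2\nu)$; nonemptiness and positivity are guaranteed precisely by~(\ref{cond1}). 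For such $c$ the quantity $A(c):=(\nu c^{2}+\eta c+\gamma\sigma^{2})/(2c)$ is at most $\phi$, and a one-line integration of $\Pi(\widetilde\theta_{r},\widetilde X_{r})$ together with the terminal cost shows that the continuation payoff on $[\tau^{(2)},\tau^{(1)})$ dominates $-\phi X^{2}_{\tau^{(2)}-}$ pathwise; passing to expectations and then to the supremum over $\theta^{*}$ gives $H(t,y_{1},q)\ge H(t,y_{2},q)$. Monotonicity in $t$ follows by an analogous construction: for $t_{1}<t_{2}$, use the feedback $\widetilde\theta=cX$ on the ``extra'' interval $[t_{1},t_{2}]$ and concatenate with a time-shifted optimal control for the $(t_{2},y,q)$-problem, invoking the time-homogeneity of $Y$.

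For~(ii), the inventory direction is immediate from scaling: the map $(q,\theta)\mapsto(\lambda q,\lambda\theta)$ preserves admissibility and multiplies the objective by $\lambda^{2}$, so $H(t,y,q)=q^{2}H(t,y,1)$ and local Lipschitz continuity in $q$ follows from~(iii). Local Lipschitz continuity in $y$ is a quantitative version of the monotonicity step: applied with the $y_{1}$-optimal control, the same extension construction yields $0\le H(t,y_{1},q)-H(t,y_{2},q)\le C\,\mathbb{E}|\tau^{(1)}-\tau^{(2)}|$ on compact subsets of $\{y>\alpha^{*}\}$, and standard hitting-time estimates for drifted Brownian motion (together with Proposition~\ref{proposition3}) give $\mathbb{E}|\tau^{(1)}-\tau^{(2)}|\le C|y_{1}-y_{2}|$. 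Local H\"older-$1/2$ continuity in $t$ is standard: for $|t_{1}-t_{2}|=\delta$, apply the DPP over $[t_{1},t_{2}]$ with the control $\theta\equiv 0$, use the quadratic envelope from~(iii), the estimate $\mathbb{E}[\sup_{s\le\delta}|Y^{y}_{s}-y|]\le C y\sqrt{\delta}$, and the hitting probability bound $\mathbb{P}(\tau^{(1)}\le t_{1}+\delta)=O(\sqrt{\delta})$ for $y$ bounded away from $\alpha^{*}$, absorbing the DPP error terms against the Lipschitz bounds in $y$ and $q$ already established.

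The hard step will be the pathwise inequality at the coupling time $\tau^{(2)}$ in the monotonicity argument: the naive extension $\widetilde\theta\equiv 0$ strictly \emph{decreases} the payoff because of the residual $-\gamma\sigma^{2}X^{2}$ in the integrand, so one genuinely needs the fine-tuned exponential-decay feedback with $c\in[c_{-},c_{+}]$, and this is precisely the step that forces the model-parameter condition~(\ref{cond1}). Once this construction is in place, monotonicity in $t$ and the quantitative Lipschitz estimate in $y$ both reduce to essentially the same computation.
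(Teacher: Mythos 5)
Your central device for monotonicity --- extending a near-optimal control of the shorter-horizon problem by the exponential feedback $\theta=cX$ with $c\in[c_-,c_+]$ so that the continuation payoff plus the new terminal cost dominates $-\phi X^2_{\tau^{(2)}-}$ --- is correct and is, after the change of coordinates, exactly the paper's argument: the paper applies Dynkin's formula to absorb the bequest into a running reward $L(\theta,q)=-\nu\theta^2+(2\phi-\eta)q\theta-\gamma\sigma^2q^2$ and chooses $\theta=\frac{2\phi-\eta}{2\nu}q$ (the midpoint of your interval $[c_-,c_+]$) so that $L\ge0$; your condition $A(c)\le\phi$ is the same quadratic inequality. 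Your scaling identity $H(t,y,\lambda q)=\lambda^2H(t,y,q)$ is a genuinely cleaner route to the $q$-Lipschitz bound than the paper's comparison of trajectories with $X^{q_1}-X^{q_2}=q_1-q_2$, and your treatment of the $y$-Lipschitz bound via $\mathbb E|\tau^{(1)}-\tau^{(2)}|\lesssim|y_1-y_2|$ matches the paper's use of Proposition \ref{proposition3}.

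There are, however, three genuine gaps. First, proving (iii) by exhibiting $A(t)q^2$ as a classical super-solution and invoking the comparison principle of Theorem \ref{theorem9} is circular: to place $H$ on the sub-solution side you must already know that $H$ is (semi)continuous and satisfies the polynomial growth hypotheses --- i.e.\ parts (ii) and (iii) of the very theorem you are proving --- and the viscosity characterization (Theorem \ref{theorem8}) is established only afterwards. Since you explicitly want (iii) first as input to (i) and (ii), this ordering collapses. The fix is the paper's direct estimate: complete the square to get $L(\theta,X)\le\bigl[\frac{(2\phi-\eta)^2}{4\nu}-\gamma\sigma^2\bigr]X^2$, use $|X|\le q$ and $\mathbb E[\tau_{t,m(y)}-t]\le T-t$; no PDE machinery is needed. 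Second, your $t$-monotonicity construction places the feedback on the \emph{initial} segment $[t_1,t_2]$; but then the state entering the $(t_2,\cdot)$-problem is $(Y_{t_2},qe^{-c(t_2-t_1)})\ne(y,q)$, so the concatenated payoff does not compare with $H(t_2,y,q)$ without invoking continuity, which is not yet available. The extra time is really at the \emph{end} of the horizon: after a time shift using the homogeneity of $Y$, couple the hitting times so that $\min\{\kappa,T-t_1\}\ge\min\{\kappa,T-t_2\}$ and apply the same terminal extension as in the $y$-argument (this is what the paper does via the monotonicity of $U_T$ in $T$). Third, for the H\"older-$1/2$ bound in $t$ the control $\theta\equiv0$ over $[t_1,t_2]$ only yields a lower bound on $H(t_1,y,q)$, which is the direction already implied by monotonicity; the substantive direction, an upper bound on $H(t_1,y,q)-H(t_2,y,q)$, requires running a near-optimal control on $[t_1,t_2]$ and then controlling the resulting inventory displacement $\mathbb E\bigl[\int_{t_1}^{t_2}\theta^*_r\,dr\bigr]$, which cannot be bounded by $\sqrt{t_2-t_1}$ uniformly in $\theta^*$; the paper resolves this by absorbing $C_1(q)\,\mathbb E\bigl[\int\theta^*\bigr]$ into the running reward via another completing-the-square estimate, giving a term of order $|t_2-t_1|$, with the $\sqrt{|t_2-t_1|}$ coming only from $\mathbb E|\ln Y_{t_1,y}(t_2)-\ln y|$. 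This absorption step is the real content of the $t$-continuity proof and is missing from your sketch.
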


To avoid confusion, in the sections to follow,
we denote $X(t)$, the number of shares at time $t$, and
$\{X_{t,q}^{\theta}(u)\}_{u\ge t}$,  a trajectory of $X(\cdot)$  given $ X(t)=q$ and a trading strategy $\theta$.
To prove Theorem \ref{theorem7},
 we first convert the original control problem into a problem without terminal bequest function.  Since $g(x)=-\phi x^2$ is continuously differentiable, and $\mathbb E[\tau_{t,m(y)}|\mathcal F_t ]< T<\infty$, we can apply Dynkin's formula  to    $-\phi X^2(t)$ and rewrite the value  function $H$ as
$$
  H(t,y,q)=-\phi q^2+\max_{\theta(\cdot)\in\Theta_t}\mathbb E\left[\int_t^{\tau_{t, m(y)}-}L( \theta_r,X_{t,q}^{\theta}(r))dr \Big|\mathcal F_t\right]
 $$
where
 \begin{equation}\label{L}
  L(\theta,q)=\Pi(\theta, q)+2\phi q\theta.
 \end{equation}
 Define a new value function as
 $$
 \widehat{H}(t,y,q)=\max_{\theta(\cdot)\in\Theta_t}\mathbb E\left[\int_t^{\tau_{t,m(y)}-}L(  \theta_r, X_{t,q}^{\theta}(r))dr \Big|\mathcal F_t\right].
 $$

\noindent
{\bf Proof of Theorem 7.}

 \begin{description}
\item[(i)]  One approach to  verify the monotonicity in $y$  is  directly applying  the definition.    Let $\theta^{y_2,*}$ denote the optimal control process with respect to  the stopping time $\tau_{t, m( y_2)}$.  For any positive numbers $y_1\ge y_2> \alpha^*$, we have  $\tau_{ t, m(y_1)}\ge \tau_{t,m(y_2)}$. From this observation, we  have
$$
\begin{array}{lll}
 \widehat H(t,y_1,q)&=&\displaystyle \max_{\theta(\cdot)\in\Theta_t}\mathbb E\left[\int_t^{\tau_{t, m(y_1)}-}L(\theta_r, X^{\theta}_{t,q}(r))dr\Big|\mathcal F_t \right]\\
&\ge&\displaystyle   \mathbb E\left[\int_t^{\tau_{t, m(y_2)}-}L(\theta^{y_2,*}_r, X_{t, q}^{\theta^{y_2,*}})dr\big| \mathcal F_t \right]\\
&&+\displaystyle \max_{\theta(\cdot)\in\Theta_{\tau_{t, m(y_2)}}} \underbrace{\mathbb E  \left[\int_{\tau_{t, m(y_2)}}^{\tau_{t, m(y_1)}-}L(\theta_r, X_{\tau_{t,  m(y_2)}, X_{t,q}^{\theta^{y_2,*}}(\tau_{t,m(y_2)}-)}^{\theta}(r))dr\big|\mathcal F_t\right]}_{\mbox{(II)}}.\\
\end{array}
$$
Since
$$
\begin{array}{lll}
 L(\theta,q)=\Pi(\theta, q)+2\phi q\theta=-\nu\theta^2+( 2\phi-\eta) q\theta-\gamma\sigma^2 q^2.
 \end{array}
 $$
 Under the assumption that  $2\phi>\eta+2\sigma\sqrt{\gamma\nu}$, for any $q\in (0,+\infty)$,
 we can always choose  $\theta:=\frac{2\phi-\eta}{2\nu}q\ge0$  such  that $L(\theta, q)\ge0$, and hence,  (II) $\ge0$. Therefore,
$\widehat{H}(t,y_1,q)\ge  \widehat{H}(t,y_2,q)$, and hence  $H(t,y_1,q)\ge H(t,y_2,q)$.

Another  approach to this question  is to apply  the result in Section 3.2.  Let $U_T(t,q)$ denote the  value function of the optimization problem (\ref{e1})\footnote{The associated liquidation  problem without countparty risk.}   with time horizon $T$. Under condition (\ref{cond1}),
for any $T_1>T_2>t$,  we have Inequality (\ref{red}):
$$
U_{T_1}(t,q)>U_{T_2}(t,q).
$$
If we set $T_1=\tau_{t, m(y_1)}$ and $T_2=\tau_{t, m(y_2)}$, then
$$
H(t, y_1, q)=\mathbb E[U_{\tau_{t, m(y_1)}}(t, q)|\mathcal F_t]\ge\mathbb E[U_{\tau_{t, m(y_2)}}(t, q)|\mathcal F_t]=H(t, y_2,q).
$$

Similarly, we can verify   the monotonicity of $H(t, y,q)$ in $t$.   If we set   $\bar U(\iota, q)=U_T(t, q)$, where $\iota=T-t$ is the time   to  maturity, then, according to Proposition 1, for any $0\le \iota_2<\iota_1<T$,
$$
\bar U(\iota_1, q)>\bar U(\iota_2, q).
$$
For any  $0\le t_1<t_2< T$. Let $\iota_1=\tau_{t_1, m(y)}-t_1$ and $\iota_2=\tau_{t_2, m(y)}-t_2$. By the definition of $\tau_{t, m(y)}$, we have $\iota_1\ge \iota_2$, and hence
$$
H(t_1, y, q)=\mathbb E[\bar U(\iota_1, q)|\mathcal F_t]\ge \mathbb E[\bar U(\iota_2, q)|\mathcal F_t]=H(t_2, y, q).
$$

\item[(ii)] In order to prove the continuity of $ H$, it suffices to show that for  any two points $(t_1, y_1, q_1)$ and $(t_2, y_2, q_2)$ in the region
$$
\{(t, y,q): 0\le t<T, \alpha^* < y,  0 < q\},
$$
there exist three  $(t,y)$-independent, polynomial-growth (with respect to  $(q_1,q_2)$)   coefficients $K_1(q_1, q_2), K_2(q_1,q_2)$, and $K_3(q_1,q_2)$, such that
$$
\begin{array}{lll}
\displaystyle |\widehat H(t_1, y_1, q_1)-\widehat H(t_2, y_2, q_2)|
&\le &\displaystyle | \widehat H(t_1, y_1, q_1)-\widehat H(t_1, y_2, q_1)|\\
& &+|\widehat H(t_1, y_2, q_1)-\widehat H(t_1, y_2, q_2)|\\
&&\displaystyle +|\widehat H(t_1, y_2, q_2)-\widehat H(t_2, y_2, q_2)|\\
&\le&K_1 |y_1-y_2|+K_2|q_1-q_2|+K_3(|t_2-t_1|^{\frac{1}{2}}+|t_2-t_1|).
\end{array}
$$
We divide the proof into three parts: one is for the  variable $y$, another  one is for the  variable $q$, and the rest  is for the variable $t$.\\

{\bf Step 1 (Variable  $y$).} For any positive numbers $y_1\ge y_2>\alpha^*$,    we have $\tau_{t, m(y_1)}\ge \tau_{t, m(y_2)}$, and hence
 \begin{equation}\label{proofA1}
 \begin{array}{lll}
&& \widehat{H}(t, y_1,q)\\
 &=&\displaystyle   \mathbb E\left[\int_t^{\tau_{t, m(y_2)}-}L(\theta^{y_1,*}_r, X_{t, q}^{\theta^{y_1,*}})dr+  \int_{\tau_{t, m(y_2)}}^{\tau_{t, m(y_1)}-}L(\theta^{y_1, *}_r, X_{\tau_{t,  m(y_2)}, X_{t,q}^{\theta^{y_1,*}}(\tau_{t,m(y_2)}-)}^{\theta^{y_1, *}}(r))dr\big|\mathcal F_t\right]\\

 &\le&\displaystyle  \widehat{H}(t, y_2, q)+\max_{\theta(\cdot)\in\Theta_{\tau_{t, m(y_2)}}}\mathbb E  \left[\int_{\tau_{t, m(y_2)}}^{\tau_{t, m(y_1)}-}L(\theta_r, X_{\tau_{t,  m(y_2)}, X_{t,q}^{\theta^{y_1,*}}(\tau_{t,m(y_2)}-)}^{\theta}(r))dr\big|\mathcal F_t\right],

 \end{array}
 \end{equation}
 where $\theta^{y_1,*}$ is  the optimal control process with respect to  the stopping time $\tau_{t, m( y_1)}$.
 Thus, using part (i), we have
 \begin{equation}\label{kop}
 \begin{array}{lll}
 &&\displaystyle |\widehat{H}(t, y_1, q)-\widehat{H}(t, y_2, q)|=   \widehat{H}(t, y_1, q)-\widehat{H}(t, y_2, q)\\
 &\le& \displaystyle \max_{\theta(\cdot)\in\Theta_{\tau_{t, m(y_2)}}}\mathbb E  \left[\int_{\tau_{t, m(y_2)}}^{\tau_{t, m(y_1)}-}L(\theta_r, X_{\tau_{t,  m(y_2)}, X_{t,q}^{\theta^{y_1,*}}(\tau_{t,m(y_2)}-)}^{\theta}(r))dr\big|\mathcal F_t\right].
%
 \end{array}
 \end{equation}
 A completing square yields
 $$
 L(\theta, X)=-\nu\big[\theta-\frac{2\phi-\eta}{2\nu}X\big]^2+\big[\frac{(2\phi-\eta)^2}{4\nu}-\gamma\sigma^2\big] |X|^2,
 $$
 so\footnote{
  It is worth noting that condition (\ref{cond1}) implies that  $  \frac{(2\phi-\eta)^2}{4\nu}-\gamma\sigma^2>0$.
  }
$$
 \begin{array}{lll}
 &&\displaystyle   \displaystyle \max_{\theta(\cdot)\in\Theta_{\tau_{t, m(y_2)}}}\mathbb E  \left[\int_{\tau_{t, m(y_2)}}^{\tau_{t, m(y_1)}-}L(\theta_r, X_{\tau_{t,  m(y_2)}, X_{t,q}^{\theta^{y_1,*}}(\tau_{t,m(y_2)}-)}^{\theta}(r))dr\big|\mathcal F_t\right]\\
 & \le&\displaystyle  \big[\frac{(2\phi-\eta)^2}{4\nu}-\gamma\sigma^2\big] q^2\mathbb E\left[ \int_{\tau_{t, m(y_2)}}^{\tau_{t, m(y_1)}-}dr\big|\mathcal F_t\right]\\
 &=&\displaystyle  \big[\frac{(2\phi-\eta)^2}{4\nu}-\gamma\sigma^2\big] q^2\mathbb E\left[ \tau_{t, m(y_1)}-\tau_{t, m(y_2)}|\mathcal F_t\right]
 \end{array}
$$
 because
 $$
\left|X_{\tau_{t,  m(y_2)}, X_{t,q}^{\theta^{y_1,*}}(\tau_{t,m(y_2)}-)}^{\theta}(r)\right|^2\le \left|X_{t,q}^{\theta^{ y_1, *}}(\tau_{t,m(y_2)}-)\right|^2\le q^2.
 $$
By the definition of $\tau_{t, m(y)}$,  we have
$\tau_{t, m(y_1)}-\tau_{t, m(y_2)}\le \kappa_{m(y_1)}-\kappa_{m(y_2)}$.
According to Proposition 3 and the definition of $m(y)$, there exists a constant $c_0>0$ such that
$$
\mathbb E[\kappa_{m(y_1)}-\kappa_{m(y_2)}|\mathcal F_t]\le c_0 |\ln(y_1)-\ln(y_2)|,
$$
and hence,
\begin{equation}\label{hidden}
|\widehat{H}(t,y_1,q)-\widehat{H}(t,y_2,q)|\le c_0 \big[\frac{(2\phi-\eta)^2}{4\nu}-\gamma\sigma^2\big] q^2|\ln(y_1)-\ln(y_2)|.
\end{equation}

 Since $|\ln(y_1)-\ln(y_2)|\le \frac{1}{\alpha^*} |y_1-y_2|$, for any $y_1, y_2\in (\alpha^*, +\infty)$. There  exists   a  $(t,y)$-independent, quadratic-growth coefficient $K_1(q)$ so that
$$
 |\widehat{H}(t,y_1,q)-\widehat{H}(t,y_2,q)|\le K_1(q) |y_1-y_2|.
$$
\quad\\
 {\bf Step 2 (Variable $q$).}  Let $q_1, q_2\in (0,+\infty)$ satisfying $|q_1-q_2|\le 1$.
 Consider the value functions $\widehat{H}(t,y,q_1)$ and $\widehat{H}(t,y,q_2)$. By the definition and  the relation $|\max f-\max g|\le \max|f-g|$, we have
  \begin{equation}\label{proofA2}
 \begin{array}{lll}
 &&\displaystyle |\widehat{H}(t,y,q_1)-\widehat{H}(t,y,q_2)|\\
 &\le&\displaystyle \max_{\theta(\cdot)\in \Theta_t}\mathbb E\left[\int_t^T \mathbb I_{t\le r< \tau_{t, m(y)}}  \cdot \big|L(\theta_r, X_{t,q_1}^{\theta}(r))-L(\theta_r, X_{t,q_2}^{\theta}(r))\big|dr\Big|\mathcal F_t \right]\\
 &\le&\displaystyle \max_{\theta(\cdot)\in \Theta_t}\mathbb E\left[\int_t^T   \big|L(\theta_r,X_{t,q_1}^{\theta}(r))-L(\theta_r,X_{t,q_2}^{\theta}(r))\big|dr\Big|\mathcal F_t \right]\\
 &\le &\displaystyle K_2(q_1,q_2) |q_1-q_2|,
 \end{array}
 \end{equation}
  where $K_2(q_1,q_2)$ is a polynomial-growth  coefficient.  The last inequality follows from Definition 1 and   the fact that
  $$
  X_{t,q_1}^{\theta}(r)-X_{t,q_2}^{\theta}(r)=q_1-q_2,
  $$
  for any trading strategy $\theta\in\Theta_t$. \\

\noindent
{\bf Step 3 (Variable $t$).} Let $0\le t_1<t_2< T$, and $(y,q)\in (\alpha^*, \infty)\times (0,+\infty)$.
By the DP principle,
 $$
\begin{array}{lll}
\displaystyle  \widehat{H}(t_1,y,q)=\mathbb E\left[ \int_{t_1}^{t_2} L(\theta_r,X_{t_1,q}^{\theta^*}(r))dr+ \widehat{H}(t_2,Y_{t_1, y}(t_2), X_{t_1,q}^{\theta^*}(t_2))\Big|\mathcal F_{t_1}\right],
 \end{array}
 $$
 where $\theta^*$ is the optimal control process.
Therefore, by part (i),
 $$
 \begin{array}{lll}
 &&\displaystyle |\widehat{H}(t_1,y,q)-\widehat{H}(t_2,y,q)|= \widehat{H}(t_1,y,q)-\widehat{H}(t_2,y,q)\\
 &=&\displaystyle  \mathbb E\left[\int_{t_1}^{t_2} L(\theta_r^*,X_{t_1,q}^{\theta^*}(r)) dr\big|\mathcal F_{t_1}\right]+\mathbb E\big[  \widehat{H}(t_2,Y_{t_1, y}(t_2), X_{t_1,q}^{\theta^*}(t_2))-\widehat{H}(t_2,y,q)\big|\mathcal F_{t_1}\big]\\
 &=: &I_1+I_2.
 \end{array}
 $$
For the second term $I_2$, we have
 \begin{equation}\label{I2}
 \begin{array}{lll}
 I_2
 &\le&\displaystyle \mathbb E\left[\big| \widehat{H}(t_2,Y_{t_1, y}(t_2), X_{t_1,q}^{\theta^*}(t_2))-\widehat{H}(t_2,Y_{t_1, y}(t_2),q)\big|\Big|\mathcal F_{t_1}\right]\\
 &&\displaystyle +\mathbb E\left[\big| \widehat{H}(t_2,Y_{t_1, y}(t_2), q)-\widehat{H}(t_2,y,q)\big|\Big|\mathcal F_{t_1}\right]\\
& \le &\mathbb E[K_2(X_{t_1,q}^{\theta^*}(t_2), q)|X_{t_1,q}^{\theta^*}(t_2)-q||\mathcal F_{t_1}]+C_2(q) \mathbb E[|\ln(Y_{t_1,y}(t_2))-\ln(y)||\mathcal F_{t_1}]\\
& \le & C_1(q)\mathbb E[|X_{t_1,q}^{\theta^*}(t_2)-q||\mathcal F_{t_1}]+C_2(q) \mathbb E[|\ln(Y_{t_1,y}(t_2))-\ln(y)||\mathcal F_{t_1}],
 \end{array}
 \end{equation}
  where $C_1(q)$ and $C_2(q)$ are two  $(t,y)$-independent,    polynomial-growth coefficients.
The second-to-last inequality follows form the results in   Eq. (\ref{hidden}) and
Eq. (\ref{proofA2}).  The last inequality follows from the fact that
 $|X_{t_1,q}^{\theta^*}(t_2)|\le q$.
   Noticing that
  \begin{enumerate}
\item
by the completing square trick as used  in Step 1,
$$
\begin{array}{lll}
\displaystyle  I_1+C_1(q)\mathbb E[|X_{t_1,q}^{\theta^*}(t_2)-q||\mathcal F_{t_1}]
&=&\displaystyle  \mathbb E\left[ \int_{t_1}^{t_2} \left[L(\theta_r^*,X_{t_1,q}^{\theta^*}(r))+C_1(q)\theta^*_r\right]dr\Big|\mathcal F_{t_1}\right]\\
&\le &\displaystyle
 \Big( \frac{[(2\phi-\eta) +C_1(q)/q]^2}{4\nu}-\gamma \sigma^2\Big)q^2 |t_2-t_1|.
 \end{array}
$$
 and that
\item $\displaystyle \mathbb E[|\ln(Y_{t_1, y}(t_2)-\ln(y)| |\mathcal F_{t_1}]=\mathbb E\left[|Z_{t_2-t_1}| \right]$, where
$$
Z_{t_2-t_1}=(\beta-\frac{\xi^2}{2})(t_2-t_1)+\xi W^Y_{t_2-t_1}
$$
is a normally distributed random variable with mean $(\beta-\frac{\xi^2}{2})(t_2-t_1)$ and variance $\xi^2(t_2-t_1)$.  Let $f_z(x)$ be the  probability density function of $Z_{t_2-t_1}$, then
$$
\begin{array}{lll}
\mathbb E\left[|Z_{t_2-t_1}|\right]&= &\displaystyle\int_{-\infty}^{\infty} |x|f_z(x)dx\\

  &\le &\displaystyle\sqrt{\int_{-\infty}^{\infty} x^2f_z(x)dx}\cdot \sqrt{\int_{-\infty}^{\infty} f_z(x) dx}  \\
  &=& \displaystyle\sqrt{ \mathbb E[Z_{t_2-t_1}^2]}\\
  &=&\displaystyle \sqrt{Var(Z_{t_2-t_1})+\mathbb E[Z_{t_2-t_1}]^2}\\
  &=&\displaystyle\sqrt{ \xi^2(t_2-t_1)+(\beta-\frac{\xi^2}{2})^2(t_2-t_1)^2}.
\end{array}
$$
Therefore, by Inequality (\ref{I2}),  there exists some polynomial-growth  coefficient $K_3(q)$ such that
$$
|\widehat{H}(t_1, y,q)-\widehat{H}(t_2, y,q)|=I_1+I_2\le K_3(q)(\sqrt{|t_2-t_1|}+|t_2-t_1|).
$$
\end{enumerate}
Combining the results in Steps 1, 2 and 3, we conclude that $\widehat{H}$ is locally H\"{o}lder continuous in $t$ with exponential 1/2, and locally Lipschitz continuous in both $y$ and $q$.
Since $H(t,y,q)=-\phi q^2+\widehat{H}(t,y,q)$, we conclude that
$H$ has the same continuity property in $[0, T)\times(\alpha^*, \infty)\times (0,+\infty)$.

\item[(iii)]
Since
$$
H(t,y,q)=-\phi q^2+\max_{\theta(\cdot)\in\Theta_t}\mathbb E\left[\int_t^{\tau_{t, m(y)}-}L( \theta_r,X_{t,q}^{\theta}(r))dr \Big|\mathcal F_t\right],
$$
by the completing square trick as used in part (ii), Step 1, we have
$$
\begin{array}{lll}
|H(t,y,q)|&\le&\displaystyle  \phi q^2+ \big[\frac{(2\phi-\eta)^2}{4\nu}-\gamma\sigma^2\big] q^2\mathbb E\left[ \tau_{t, m(y)}-t|\mathcal F_t\right]\\
&\le&\displaystyle  \left(\phi + \big[\frac{(2\phi-\eta)^2}{4\nu}-\gamma\sigma^2\big](T-t)\right) q^2.
\end{array}
$$
That is,  $H(t,y,q)$ satisfies a quadratic  growth condition with respect to  the inventory variable $q$, and is bounded in any compact subset of $[0,T)\times (\alpha^*, \infty)\times (0, \infty)$.

\end{description}

\subsection{Viscosity Solutions}

In Section 5.3, we discussed in detail the continuity of $H(t, y,q)$.
Since we do not expect  the value function $H$ to be  continuously differentiable,
we cannot discuss the solution to the HJB equation (\ref{HJB2}) in the classical sense.
Therefore we would like to introduce the concept of a viscosity solution.

\begin{definition}\label{definition2}
 A continuous function $ H(\cdot,\cdot,\cdot)$, on $[0,T)\times (\alpha^*,\infty)\times (0,+\infty)$ is a viscosity sub-solution (resp. super-solution) of the HJB equation (\ref{HJB2}), if  for any  $\mathcal C^{1,2,2}([0,T)\times (\alpha^*,\infty)\times (0,+\infty))$ function $\psi$ and  $( \bar{t},\bar{y},\bar{q})\in [0,T)\times (\alpha^*,\infty)\times (0,+\infty)$ such that $H( t,y,q)-\psi(t,y,q)$ attaints its local  maximum (resp. minimum) at  $( \bar{t},\bar{y},\bar{q})$, we have
$$
\begin{array}{lll}
\displaystyle -\Big(\partial_t+\beta y\partial_y+\frac{1}{2}\xi^2y^2\partial_{yy}\Big) \psi( \bar{t},\bar{y},\bar{q})+ \gamma  \sigma^2 \bar{q}^2\\\
\displaystyle\; -  \max_{\theta_t\in\Theta_t}\Big\{ g(\theta_t)\theta_t+ f(\theta_t) \bar{q}-\partial_{q}\psi( \bar{t},\bar{y},\bar{q})\cdot \theta_t\Big\}\le 0; \quad\mbox{(resp. $\ge 0$)}.
\end{array}
 $$
The continuous function $H$ is a viscosity solution if it is both a viscosity sub-solution and a viscosity super-solution.
\end{definition}

For the value function $H(t,y,q)$, we have the following results:
\begin{theorem}\label{theorem8}
The value function  $H$ is a viscosity solution of the HJB equation (\ref{HJB2}).
\end{theorem}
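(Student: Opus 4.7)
The plan is to follow the classical two-step viscosity program: use the Dynamic Programming Principle (DPP) for $H$ together with Itô's formula applied to a smooth test function $\psi$, and check the sub-solution and super-solution inequalities separately. Continuity of $H$ on $[0,T)\times(\alpha^*,\infty)\times(0,\infty)$ is already secured by Theorem \ref{theorem7}, so the viscosity framework of Definition \ref{definition2} is legitimate. I would first record the DPP in the form that for any stopping time $\rho$ with values in $[t,\tau_{t,m(y)}]$,
$$H(t,y,q)=\sup_{\theta(\cdot)\in\Theta_t}\mathbb E\!\left[\int_t^{\rho}\Pi(\theta_r,X_r)\,dr+H(\rho,Y_\rho,X_\rho)\mathbb I_{\{\rho<\tau_{t,m(y)}\}}-\phi X_{\rho-}^2\mathbb I_{\{\rho=\tau_{t,m(y)}\}}\,\Big|\,\mathcal F_t\right],$$
which follows from the tower property and the Markov structure of $(Y,X)$ in the usual way.

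For the sub-solution property, let $\psi\in\mathcal C^{1,2,2}$ and $(\bar t,\bar y,\bar q)\in[0,T)\times(\alpha^*,\infty)\times(0,\infty)$ be a local maximizer of $H-\psi$, normalized so $H(\bar t,\bar y,\bar q)=\psi(\bar t,\bar y,\bar q)$. Fix an arbitrary $\theta\in\mathbb R_+$ and use the constant control $\theta_r\equiv\theta$ on a small random interval $[\bar t,\rho_h]$, where $\rho_h=(\bar t+h)\wedge\tau_{\bar t,m(\bar y)}\wedge\inf\{s\geq\bar t:(Y_s,X_s)\notin B\}$ for a small neighborhood $B$ of $(\bar y,\bar q)$ whose closure stays in $(\alpha^*,\infty)\times(0,\infty)$. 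Using this admissible control in the DPP and the local bound $H\leq\psi$ gives
$$\psi(\bar t,\bar y,\bar q)\geq\mathbb E\!\left[\int_{\bar t}^{\rho_h}\Pi(\theta,X_r)\,dr+\psi(\rho_h,Y_{\rho_h},X_{\rho_h})\right].$$
Applying Itô's formula to $\psi(r,Y_r,X_r)$ with $dX_r=-\theta\,dr$ and the diffusion of $Y$ kills the martingale term on average, so dividing by $h$ and letting $h\downarrow 0$ (dominated convergence is justified because $\psi$ and its relevant derivatives are bounded on $B$) yields
$$0\geq\Pi(\theta,\bar q)+\bigl(\partial_t+\beta\bar y\partial_y+\tfrac12\xi^2\bar y^2\partial_{yy}\bigr)\psi(\bar t,\bar y,\bar q)-\theta\,\partial_q\psi(\bar t,\bar y,\bar q).$$
Taking the supremum in $\theta$ and rearranging delivers exactly the sub-solution inequality of Definition \ref{definition2}.

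For the super-solution property, let $(\bar t,\bar y,\bar q)$ now be a local minimizer of $H-\psi$ with $H(\bar t,\bar y,\bar q)=\psi(\bar t,\bar y,\bar q)$. For each $\varepsilon>0$ and each $h>0$ I would invoke the DPP with a measurable $\varepsilon$-optimal control $\theta^\varepsilon\in\Theta_{\bar t}$ and the same localized stopping time $\rho_h$, giving
$$\psi(\bar t,\bar y,\bar q)+\varepsilon h\leq\mathbb E\!\left[\int_{\bar t}^{\rho_h}\Pi(\theta_r^\varepsilon,X_r)\,dr+\psi(\rho_h,Y_{\rho_h},X_{\rho_h})\right].$$
Expanding the $\psi$-term by Itô, dividing by $h$, and sending $h\downarrow 0$ produces, for almost every scenario of $\theta^\varepsilon_{\bar t}$,
$$\varepsilon\leq\sup_{\theta\in\mathbb R_+}\!\Bigl\{\Pi(\theta,\bar q)-\theta\,\partial_q\psi(\bar t,\bar y,\bar q)\Bigr\}+\bigl(\partial_t+\beta\bar y\partial_y+\tfrac12\xi^2\bar y^2\partial_{yy}\bigr)\psi(\bar t,\bar y,\bar q),$$
and letting $\varepsilon\downarrow 0$ gives the super-solution inequality.

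The principal obstacles I anticipate are two. First, on the super-solution side one must produce measurable $\varepsilon$-optimal controls and justify interchanging the limit in $h$ with the supremum; the standard remedy is a measurable selection argument together with uniform integrability of $\theta^\varepsilon$ on $[\bar t,\bar t+h]$, using the $L_\infty$-integrability condition (v) in Definition 1. Second, the stopping time $\tau_{t,m(y)}$ mixes the diffusive hitting of $\alpha^*$ by $Y$ with the deterministic horizon $T$, so one must ensure $\rho_h$ stays strictly below $\tau_{\bar t,m(\bar y)}$ with probability tending to $1$ as $h\downarrow 0$; this follows because $\bar y>\alpha^*$, so $Y$ must travel a positive distance before being absorbed, and the continuity of paths together with $\mathbb P(\kappa_{m(\bar y)}=\bar t)=0$ makes the contribution of paths that hit the barrier before time $\bar t+h$ of order $o(1)$, harmless after dividing by $h$.
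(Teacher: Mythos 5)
Your proposal assembles the right ingredients (DPP, localized stopping times, It\^o's formula, and the distinction between testing with an arbitrary fixed control versus an $\varepsilon$-optimal control), but it attaches them to the \emph{wrong halves} of Definition \ref{definition2}, and as written both halves break down. At a local \emph{maximum} of $H-\psi$ you have $H\le\psi$, so from the one-sided DPP bound $H(\bar t,\bar y,\bar q)\ge\mathbb E\bigl[\int_{\bar t}^{\rho_h}\Pi\,dr+H(\rho_h,\cdot,\cdot)\bigr]$ for a fixed control you \emph{cannot} replace $H(\rho_h,\cdot,\cdot)$ by $\psi(\rho_h,\cdot,\cdot)$ and keep the inequality — that substitution requires $H\ge\psi$, which holds at a local \emph{minimum}. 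Moreover, even granting the display, the conclusion $0\ge\sup_\theta\{\Pi(\theta,\bar q)-\theta\,\partial_q\psi\}+\mathcal L\psi$ is the super-solution inequality ($\ge 0$ in the paper's convention), not the sub-solution one you claim to have obtained. Symmetrically, in your second paragraph the bound $H\ge\psi$ at a local minimum goes the wrong way for replacing $H(\rho_h,\cdot,\cdot)$ by $\psi(\rho_h,\cdot,\cdot)$ inside an upper bound coming from the $\varepsilon$-optimal control, and the inequality you end with ($0\le\mathcal L\psi+\sup_\theta\{\cdots\}$) is the sub-solution inequality, not the super-solution one. (There is also a sign slip: an $\varepsilon$-optimal control gives $H\le\text{value}+\varepsilon h$, not $H+\varepsilon h\le\text{value}$.)

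The repair is to swap the two arguments, which is exactly what the paper does: the arbitrary-constant-control argument proves the \textbf{super}-solution property at a local minimum (Step 1 of the paper's proof), while the sub-solution property at a local maximum is proved using near-optimal controls — the paper phrases this as a proof by contradiction, assuming the inequality fails by a margin $\xi$ on a small ball and extracting a $\tfrac{\xi}{2}\Delta t$-optimal control from the DPP to force $-\tfrac{\xi}{2}\le-\xi$. Once swapped, your outline is essentially the paper's, and your remarks on measurable selection and on $\mathbb P(\tau_{\bar t,m(\bar y)}>\bar t+\Delta t)\to1$ as $\Delta t\downarrow0$ (used to show $\mathbb E[\tau'(\theta',\Delta t)-\bar t]/\Delta t\to1$) address the same secondary issues the paper handles at the end of its Step 2.
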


\begin{proof}
We will prove that $H$ is a viscosity super-solution and sub-solution of
Eq. (\ref{HJB2}) in Steps 1 and 2, respectively.

 \begin{description}
\item[ Step 1:  $H$ is a viscosity super-solution of the HJB equation (\ref{HJB2}).]\quad\\
  Without loss of generality,
 let
 \begin{equation}\label{Viscosity}
 \min_{(t,y,q)\in[0,T)\times (\alpha^*,\infty)\times (0,+\infty)}(  H-\psi)(t,y,q)=(  H-\psi)(\bar{t},\bar{y},\bar{q})=0.
 \end{equation}
Assume that $\delta$ is sufficiently small   such that
$$
B_{\delta}(\bar{y},\bar{q}):=\{(y,q): \sqrt{(y-\bar y)^2+(q-\bar q)^2} <\delta\}
\subseteq (\alpha^*,+\infty)\times (0, +\infty).
$$
For any arbitrary  constant control $  \bar{\theta} \in \Theta_{\bar{t}}$, define
 $$
 \widehat{\tau}(\bar\theta)=\inf\{t\ge \bar{t}: (Y_{\bar{t}, \bar{y}}(t),X_{\bar{t}, \bar{q}}^{\bar \theta}(t))\notin B_\delta(\bar{y},\bar{q})\}.
 $$
 For any $0<\Delta t<T-\bar{t}$, define the stopping time
$$
\tau(\bar \theta, \Delta t)=(\bar{t}+\Delta t)\land\widehat{\tau}(\bar \theta)\land \tau_{\bar t, m(\bar{y})}.
$$
By DP principle,
 $$
 \begin{array}{lll}
  H(\bar{t},\bar{y},\bar{q})&\ge&\displaystyle   \mathbb E\left[\int_{\bar{t}}^{\tau(\bar \theta, \Delta t)-} \big[ g(\bar\theta)\bar\theta+f(\bar\theta)X_{\bar{t},\bar{q}}^{\bar\theta}(r)-\gamma \sigma^2 (X_{\bar{t},\bar{q}}^{\bar\theta}(r))^2\big]dr\right]\\
 &&\displaystyle  +\mathbb E\left[H(\tau(\bar\theta, \Delta t), Y_{\bar{t},\bar{y}}(\tau(\bar \theta, \Delta t)), X_{\bar{t},\bar{q}}^{\bar\theta}(\tau(\bar \theta, \Delta t)))\right].
 \end{array}
 $$
Eq. (\ref{Viscosity}) implies that $H(t,y,q)\ge \psi(t,y,q)$
and $H(\bar{t},\bar{y},\bar{q})=\psi(\bar{t},\bar{y},\bar{q})$, thus
 $$
 \begin{array}{lll}
 \psi(\bar{t},\bar{y},\bar{q})&\ge&\displaystyle   \mathbb E\left[\int_{\bar{t}}^{\tau(\bar\theta, \Delta t)-} \big[ g(\bar\theta)\bar\theta+f(\bar\theta)X_{\bar{t},\bar{q}}^{\bar\theta}(r)-\gamma \sigma^2 (X_{\bar{t},\bar{q}}^{\bar\theta}(r))^2\big]dr\right]\\
 &&\displaystyle  +\mathbb E\left[\psi(\tau(\bar \theta, \Delta t), Y_{\bar{t},\bar{y}}(\tau(\bar\theta, \Delta t)), X_{\bar{t},\bar{q}}^{\bar\theta}(\tau(\bar \theta, \Delta t)))\right].
 \end{array}
 $$
 Applying It\^{o}'s formula to $\psi(t, Y_{\bar{t},\bar{y}}(t), X_{\bar{t},\bar{q}}^{\bar\theta}(t))$ between $\bar{t}$ and $\tau(\bar\theta, \Delta t)$, we obtain
 \begin{equation}\label{inside}
 \begin{array}{lll}
\displaystyle   \mathbb E\Big[\frac{1}{\tau(\bar\theta, \Delta t)-\bar t}\int_{\bar{t}}^{\tau(\bar\theta, \Delta t)} \big[ g(\bar\theta)\bar\theta+f(\bar\theta)X_{\bar{t},\bar{q}}^{\bar\theta}(r)-\gamma \sigma^2 (X_{\bar{t},\bar{q}}^{\bar\theta}(r))^2\\
 \displaystyle\quad\quad\quad\quad \quad\quad+\mathcal L\psi(r, Y_{\bar{t},\bar{y}}(r), X_{\bar{t},\bar{q}}^{\bar\theta}(r))-\partial_q\psi(r, Y_{\bar{t},\bar{y}}(r), X_{\bar{t},\bar{q}}^{\bar\theta}(r))\cdot \bar\theta\;\big]dr\Big]\le 0,
  \end{array}
 \end{equation}
 where
 $$
 \mathcal L=\partial_t+\beta y\partial_y+\frac{1}{2}\xi^2y^2\partial_{yy}.
 $$
By the mean-value theorem, the random variable in the expectation (\ref{inside}) converges a.s. to
 $$
\mathcal L\psi(\bar{t}, \bar{y}, \bar{q})  -\gamma \sigma^2 \bar{q}^2+g(\bar\theta)\bar\theta+f(\bar\theta) \bar{q}  -\partial_q \psi(\bar{t},\bar{y},\bar{q})\cdot \bar{\theta}
 $$
as\footnote{For any arbitrary  constant control $  \bar{\theta} \in \Theta_{\bar{t}}$,
$\lim_{\Delta t\to0} \tau(\bar\theta, \Delta t)=\bar t$.} $\Delta t\to 0^+$. We then obtain
$$
  \mathcal L\psi(\bar{t}, \bar{y}, \bar{q})  -\gamma \sigma^2 \bar{q}^2+g(\bar\theta)\bar\theta+f(\bar\theta) \bar{q} -\partial_q \psi(\bar{t},\bar{y},\bar{q})\cdot \bar{\theta} \le 0.
$$
We conclude the proof from the arbitrariness of $\bar\theta\in \Theta_{\bar{t}}$. \\

\item[Step 2: $H$ is a viscosity sub-solution of the HJB equation (\ref{HJB2}).]\quad \\
Without loss of generality,  let
 \begin{equation}\label{Viscosity1}
 \max_{(t,y,q)\in[0,T)\times (\alpha^*,\infty)\times (0,+\infty)}(  H-\psi)(t,y,q)=(  H-\psi)(\bar{t},\bar{y},\bar{q})=0.
 \end{equation}
 We will show the result by contradiction.  Assume on the contrary that
$$
  \displaystyle \mathcal L \psi( \bar{t},\bar{y},\bar{q}) - \gamma  \sigma^2 \bar{q}^2 +  \max_{\theta_t}\Big\{ g(\theta_t)\theta_t+ f(\theta_t) \bar{q}-\partial_{q}\psi( \bar{t},\bar{y},\bar{q})\cdot \theta_t\Big\}< 0.
$$
Since $\psi\in \mathcal C^{1,2,2}([0,T)\times (\alpha^*,+\infty)\times (0,+\infty))$,
there exist $\delta>0$ and $\xi>0$ such that
 \begin{equation}\label{as10}
  \displaystyle \mathcal L \psi(t,y,q)- \gamma  \sigma^2 q^2 +  \max_{\theta_t}\Big\{ g(\theta_t)\theta_t+ f(\theta_t) q-\partial_{q}\psi( t,y,q)\cdot \theta_t\Big\}<- \xi,
 \end{equation}
for any   $(t,y,q)\in B_\delta(\bar t,\bar y,\bar q)$.  Here
$$
\begin{array}{lll}
\displaystyle B_\delta(\bar t,\bar y,\bar q)&:=&\displaystyle \{(t, y,q): \sqrt{(t-\bar t)^2+(y-\bar y)^2+(q-\bar q)^2} <\delta\}
\end{array}
$$
 is a 3-dimensional ball of radius $\delta$.
Without loss of generality, we can always choose $\delta$  to be sufficiently  small
so  that $B_\delta(\bar t,\bar y,\bar q) \subseteq [0,T)\times (\alpha^*,\infty)\times (0,+\infty)$.
For any arbitrary control process $\theta\in \Theta_{\bar t}$,  we define
 $$
 \widetilde{\tau}(\theta)=\inf\{t\ge \bar{t}: (t, Y_{\bar{t}, \bar{y}}(t),X_{\bar{t}, \bar{q}}^{\theta}(t))\notin B_\delta(\bar t, \bar{y},\bar{q})\}.
 $$
 For  any $0<\Delta t<T-\bar{t}$, define
$$
\tau^\prime(\theta, \Delta t) =(\bar{t}+\Delta t)\land\widetilde{\tau}(\theta)\land \tau_{\bar t,m(\bar{y})}.
$$
By the  DP principle, there exists a control process $\theta^\prime\in \Theta_{\bar t}$ such that
 $$
 \begin{array}{lll}
\displaystyle H(\bar t,\bar y, \bar q)-\frac{\xi}{2}\Delta t&\le &\displaystyle   \mathbb E\left[\int_{\bar{t}}^{\tau^\prime(\theta^\prime,\Delta t)-} \big[ g(\theta_r^\prime)\theta_r^\prime+f(\theta^\prime_r)X_{\bar{t},\bar{q}}^{\theta^\prime}(r)-\gamma \sigma^2 (X_{\bar{t},\bar{q}}^{\theta^\prime}(r))^2\big]dr\right]\\
 &&\displaystyle  +\mathbb E\left[H(\tau^\prime(\theta^\prime, \Delta t), Y_{\bar{t},\bar{y}}(\tau^\prime(\theta^\prime,\Delta t)), X_{\bar{t},\bar{q}}^{\theta^\prime}(\tau^\prime(\theta^\prime, \Delta t )))\right].
 \end{array}
 $$
Eq. (\ref{Viscosity1}) implies that $H(t,y,q)\le \psi(t,y,q)$
and $H(\bar{t},\bar{y},\bar{q})=\psi(\bar{t},\bar{y},\bar{q})$, thus
$$
 \begin{array}{lll}
 \displaystyle \psi(\bar t,\bar y, \bar q)-\frac{\xi}{2}\Delta t&\le &\displaystyle   \mathbb E\left[\int_{\bar{t}}^{\tau^\prime(\theta^\prime,\Delta t)-} \big[ g(\theta_r^\prime)\theta_r^\prime+f(\theta^\prime_r)X_{\bar{t},\bar{q}}^{\theta^\prime}(r)-\gamma \sigma^2 (X_{\bar{t},\bar{q}}^{\theta^\prime}(r))^2\big]dr\right]\\
 &&\displaystyle  +\mathbb E\left[\psi(\tau^\prime(\theta^\prime,\Delta t), Y_{\bar{t},\bar{y}}(\tau^\prime(\theta^\prime,\Delta t)), X_{\bar{t},\bar{q}}^{\theta^\prime}(\tau^\prime(\theta^\prime,\Delta t)))\right].
 \end{array}
 $$
Applying It\^{o}'s formula to $\psi(t, Y_{\bar{t},\bar{y}}(t), X_{\bar{t},\bar{q}}^{\theta^\prime}(t))$ between $\bar{t}$ and $\tau^\prime(\theta^\prime,\Delta t)$, we obtain
 \begin{equation}\label{inside1}
 \begin{array}{lll}
\displaystyle-\frac{\xi}{2}\le    \mathbb E\Big[ \frac{1}{\Delta t}\int_{\bar{t}}^{\tau^\prime(\theta^\prime, \Delta t)}\big[ g(\theta_r^\prime)\theta_r^\prime+f(\theta^\prime_r)X_{\bar{t},\bar{q}}^{\theta^\prime}(r)-\gamma \sigma^2 (X_{\bar{t},\bar{q}}^{\theta^\prime}(r))^2\\
\quad \quad\quad\quad\quad\quad\quad \displaystyle +\mathcal L\psi(r, Y_{\bar{t},\bar{y}}(r), X_{\bar{t},\bar{q}}^{\theta^\prime}(r))-\partial_q\psi(r, Y_{\bar{t},\bar{y}}(r), X_{\bar{t},\bar{q}}^{\theta^\prime}(r))\cdot \theta^\prime_r\big]dr\Big].
  \end{array}
 \end{equation}
 Letting $\Delta t\to0$,  Eq. (\ref{as10}) and Eq. (\ref{inside1})  imply that
  \begin{equation}\label{ed}
  -\frac{\xi}{2}\le -\lim_{\Delta t\to0}\frac{\mathbb E[\tau^\prime(\theta^\prime, \Delta t)-\bar t\; ]}{\Delta t}\xi.
  \end{equation}
  By the definition, we have
  $$
  \begin{array}{lll}
  \displaystyle 1\ge \frac{\mathbb E[\tau^\prime(\theta^\prime, \Delta t)-\bar t\; ]}{\Delta t}&\ge&\displaystyle  \frac{ \Delta t\times \mathbb P(\widetilde{\tau}(\theta^\prime)\land \tau_{\bar t, m(\bar{y})}>\bar t+\Delta t)}{\Delta t}\\
  &=&\displaystyle \mathbb P(\widetilde{\tau}(\theta^\prime) \land \tau_{\bar t, m(\bar{y})}>\bar t+\Delta t) \to1,
  \end{array}
  $$
  as $\Delta t\to 0^+$, which implies that
  $$
 \lim_{\Delta t\to 0} \frac{\mathbb E[\tau^\prime(\theta^\prime, \Delta t)-\bar t]}{\Delta t}=1.
 $$
 Eq. (\ref{ed}) then yields $-\frac{\xi}{2}\le -\xi$, which is  a contradiction. Therefore
 $$
 \displaystyle \mathcal L \psi( \bar{t},\bar{y},\bar{q}) - \gamma  \sigma^2 \bar{q}^2 +  \max_{\theta_t\in\Theta_t}\Big\{ g(\theta_t)\theta_t+ f(\theta_t) \bar{q}-\partial_{q}\psi( \bar{t},\bar{y},\bar{q})\cdot \theta_t\Big\}\ge0.
 $$
 \end{description}
Since the value function $H$ is both a viscosity sub-solution and a viscosity super-solution, we conclude that it is a viscosity solution of the HJB equation (\ref{HJB2}).

\end{proof}

\subsection{Comparison Principle and  Uniqueness}

The dynamic programming (DP) method is a powerful tool to study stochastic control problems by means of the HJB equation. However, in the classical approach, the method is used when it is assumed a priori that the value function is sufficiently smooth.
This, however, is not necessarily true even in some very simple cases.

To circumvent this difficulty, we adopt the  viscosity solutions approach in Section 5.4. In this section, we combine the results obtained in the previous sections with comparison principles for viscosity solutions. We characterize the value function as the unique viscosity solution of the associated dynamic programming equation, Eq. (\ref{HJB2}), and this can then be used to obtain further results.

\begin{theorem}\label{theorem9}
{\bf(Comparison Principle).}
Let $J^{sub}(t, y,q)$ (resp.  $J^{sup}(t,y,q)$)
 be  an  upper semi-continuous viscosity sub-solution  (resp. lower semi-continuous viscosity super-solution)  to Eq. (\ref{HJB2}), satisfying a polynomial growth condition with respect to  $q$.
Suppose  there exists a positive constant $r<1$ such  that   the growth rate of the  solution with respect to   $y$  can be controlled by  $\left[\ln\left(y\right)\right]^{r}$.
If
$$
\displaystyle J^{sub} \le J^{sup}  \ \ {\rm on} \ \
\{t=T\}\cup\{y=\alpha^*\}\cup\{q=0\},
$$
then
$$
J^{sub}\le J^{sup} \  \  {\rm in} \ \  [0,T)\times (\alpha^*, +\infty)\times (0,+\infty).
$$
\end{theorem}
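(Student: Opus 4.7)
The plan is to argue by contradiction using the Crandall--Ishii--Lions doubling-of-variables method, adapted to the unbounded domain $[0,T)\times(\alpha^*,\infty)\times(0,\infty)$. Suppose toward contradiction that
$$
M:=\sup_{[0,T)\times(\alpha^*,\infty)\times(0,\infty)}\bigl(J^{sub}-J^{sup}\bigr)>0.
$$
I would first convert $J^{sub}$ into a strict viscosity sub-solution by subtracting small penalty terms matched to the boundary behaviour at the non-compact ends of the domain: set
$$
\tilde J^{sub,\mu}(t,y,q):=J^{sub}(t,y,q)-\frac{\mu}{T-t}-\mu(\ln y)^{r'}-\mu\, q^{2K},
$$
with $r<r'<1$ chosen so the log-penalty dominates the assumed $(\ln y)^r$ growth and with $K$ exceeding the polynomial-growth exponent in $q$. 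The degeneracy at infinity of $\tfrac12\xi^2 y^2\partial_{yy}$ makes the first choice delicate: because $y^2\partial_{yy}\bigl[(\ln y)^{r'}\bigr]$ involves $(\ln y)^{r'-2}\to 0$ as $y\to\infty$, the operator $\mathcal L:=\partial_t+\beta y\partial_y+\tfrac12\xi^2 y^2\partial_{yy}$ applied to $(\ln y)^{r'}$ stays bounded, so $\tilde J^{sub,\mu}$ inherits a strict sub-solution property with a small negative residual.

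For $\mu>0$ sufficiently small, $\sup(\tilde J^{sub,\mu}-J^{sup})>0$ is still positive, and the three penalties force this supremum to be attained at an \emph{interior} point rather than approached as $y\to\infty$, $q\to\infty$, or $t\to T^-$; the boundary hypothesis handles $\{t=T\}\cup\{y=\alpha^*\}\cup\{q=0\}$. I would then double variables via
$$
\Phi_\epsilon(t,y,q,s,w,p):=\tilde J^{sub,\mu}(t,y,q)-J^{sup}(s,w,p)-\frac{1}{2\epsilon}\bigl[(t-s)^2+(y-w)^2+(q-p)^2\bigr],
$$
obtaining a maximizer $(t_\epsilon,y_\epsilon,q_\epsilon,s_\epsilon,w_\epsilon,p_\epsilon)$ with the usual Crandall--Ishii estimates and a common cluster point $(\hat t,\hat y,\hat q)$ in the interior as $\epsilon\to 0$. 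Ishii's lemma then supplies subgradients together with matrices $X_\epsilon\le Y_\epsilon$ that feed into the viscosity sub- and super-solution inequalities for $\tilde J^{sub,\mu}$ and $J^{sup}$, respectively.

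Subtracting the two inequalities, the cross first-order-penalty gradients cancel at leading order; the second-order contribution in $y$ is absorbed via the matrix inequality $X_\epsilon\le Y_\epsilon$ applied against the symmetric structure of $\tfrac12\xi^2 y^2$; and the Hamiltonian difference is controlled by observing that $\theta\mapsto -\nu\theta^2-(\eta q+p_q)\theta$ is strictly concave with explicit maximizer $\theta^{\ast}=\max\{0,-(\eta q+p_q)/(2\nu)\}$, making the Hamiltonian locally Lipschitz in $(q,p_q)$. Sending $\epsilon\to 0$ leaves only the $\mu$-residual, whose leading term $\mu/(T-\hat t)^2$ is strictly positive while $\mu\,\mathcal L\bigl[(\ln y)^{r'}\bigr]\big|_{\hat y}$ and the $q$-penalty contributions remain uniformly bounded; this contradicts the paired sub-/super-solution inequalities once $\mu$ is small enough. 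The main obstacle is precisely this calibration of the logarithmic penalty against the degenerate-at-infinity diffusion $\tfrac12\xi^2 y^2\partial_{yy}$: one must verify that $\mathcal L\bigl[(\ln y)^{r'}\bigr]$ stays uniformly bounded on $(\alpha^*,\infty)$ and that the Hamiltonian perturbation introduced by $\mu\,\partial_q(q^{2K})=2K\mu\, q^{2K-1}$ does not destroy the strict sub-solution property. Both follow from elementary estimates once the exponents $r'\in(r,1)$ and $K$ are fixed, but the restriction $r<1$ in the hypothesis is essential precisely so that $(\ln y)^{r'-2}\to 0$ rather than blowing up.
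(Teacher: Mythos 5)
Your overall architecture matches the paper's: argue by contradiction, penalize the two non-compact ends of the domain with a sub-unit power of $\ln y$ (this is exactly where the hypothesis $r<1$ enters, since $\beta y\partial_y+\tfrac12\xi^2y^2\partial_{yy}$ applied to $(\ln y)^{r'}$ produces only $(\ln y)^{r'-1}$ and $(\ln y)^{r'-2}$ terms) and a high polynomial power of $q$, then double variables, invoke Ishii's lemma, and control the Hamiltonian difference through the explicit maximizer $\theta^{\ast}=\max\{0,-(\eta q+p_q)/(2\nu)\}$ (the paper's Inequality (\ref{ok23})). However, the mechanism you rely on for the final contradiction fails because your time penalty has the wrong sign for this \emph{backward} equation. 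Under the paper's Definition \ref{definition2}, a subsolution satisfies $-(\partial_t+\beta y\partial_y+\tfrac12\xi^2y^2\partial_{yy})\psi+\gamma\sigma^2 q^2-\max_{\theta}\{\cdots\}\le 0$ at maxima of $H-\psi$. If you set $\tilde J^{sub,\mu}=J^{sub}-\mu/(T-t)-\cdots$, the relevant test function for $J^{sub}$ is $\psi+\mu/(T-t)+\cdots$, and since this contributes $-\partial_t\left[\mu/(T-t)\right]=-\mu/(T-t)^2$ to the left-hand side, the perturbed inequality reads $-(\partial_t+\cdots)\psi+\cdots\le +\mu/(T-t)^2+O(\mu)$: the subsolution property is \emph{relaxed}, not made strict. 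Subtracting the supersolution inequality and letting $\epsilon\to0$ then gives $0\le \mu/(T-\hat t)^2+O(\mu)+o(1)$, which is vacuously true and yields no contradiction. The device $u-\mu/(T-t)$ is calibrated for \emph{forward} equations $u_t+F=0$ with data at $t=0$, where it simultaneously produces strictness and repels the maximum from the dataless end $t=T$; here the data sit at $t=T$ and the time derivative enters with the opposite sign, so the same device backfires.

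The repair is standard but must be made explicit. Either (i) use a time penalty $\phi(t)$ with $\phi'(t)\ge\delta>0$, e.g.\ $\tilde J^{sub,\mu}=J^{sub}-\mu(T-t)-\cdots$, which preserves the comparison on $\{t=T\}$ and produces the residual $-\mu$ --- but then note that the spatial penalty corrections $\mu\,\mathcal L\left[(\ln y)^{r'}\right]$, $\mu\,\mathcal L\left[q^{2K}\right]$ and the Hamiltonian perturbation from $2K\mu q^{2K-1}$ are also of order $\mu$ with sign you do not control, so sending $\mu\to0$ does not help and you must compare constants (take the strictness coefficient larger than the uniform bound on these corrections); or (ii) follow the paper and rescale $\widetilde J=e^{\varrho t}J$, which inserts a zeroth-order term $\varrho J$ into the equation, choose $\varrho$ large enough that the penalization $\Upsilon$ is itself a supersolution ($\mathcal L\Upsilon\le0$), and conclude with the clean contradiction $\varrho M\le0$ against $M>0$. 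A further small point: near $y=\alpha^*$ the quantity $\ln y$ may vanish or be negative, so the derivatives $(\ln y)^{r'-1}$, $(\ln y)^{r'-2}$ are not bounded up to the boundary; the penalty should be written as $\left[\ln(ay)\right]^{r'}$ with $a>1/\alpha^*$, exactly as the paper does in its Step 2.
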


\begin{proof} We complete the proof in the following steps:
\begin{description}
\item[Step 1. ]
Let  $\varrho>0$. Define $\widetilde J^{sub}=e^{\varrho t}J^{sub}$ and $\widetilde J^{sup}=e^{\varrho t}J^{sup}$.
A straightforward calculation shows that $\widetilde J^{sub}$ (resp. $\widetilde J^{sup}$) is a
sub-solution (resp. super-solution) to
\begin{equation}\label{bns}
\begin{array}{lll}
\displaystyle -\Big(\partial_t+\beta y\partial_y+\frac{1}{2}\xi^2y^2\partial_{yy}\Big) J( t, y,q )+e^{\varrho t} \gamma  \sigma^2q^2+\varrho J(t, y,q)\\
\displaystyle\; -  \max_{\theta_t\in\Theta_t}\Big\{ e^{\varrho t}[g(\theta_t)\theta_t+ f(\theta_t)q]-\partial_qJ( t,y,q)\cdot \theta_t\Big\}= 0.
\end{array}
 \end{equation}
 Define
$$
 \mathcal H(t, y,q,p, N)=\displaystyle \displaystyle \max_{\theta\ge0}\left\{b(y,q, \theta)\cdot p+\frac{1}{2} tr(\Sigma\Sigma^\prime(y) N)+f(t,y,q,\theta, p,N) \right\}
$$
where
  $$
  \left\{
\begin{array}{ll}
\displaystyle b(y,q,\theta) =\displaystyle \left(
\begin{array}{r}
\beta y\\
-\theta
\end{array}
\right)\\
\Sigma(y)=\displaystyle \left(
\begin{array}{cc}
\xi y& 0\\
0&0
\end{array}
\right) \\
f(t,y,q,\theta, p,N)=\displaystyle- e^{\varrho t}\gamma\sigma^2 q^2+e^{\varrho t}[g(\theta)\theta+f(\theta)q],
\end{array}
\right.
$$
$\Sigma^\prime$  is the transpose of $\Sigma$,  and $(p,\Sigma)=\big((p_1, p_2), \Sigma\big)\in\mathbb R^2\times \mathcal S_2$. Here
$\mathcal S_2$ is the set of symmetric $2\times 2$ matrices.
Eq. (\ref{bns})  can then  be rewritten as\footnote{We denote the gradient vector and matrix of second-order partial derivatives
of $J$, respectively,  by
$$
D_{(y,q)}J=\left(
\begin{array}{c}
J_y\\
 J_q
\end{array}
\right)\quad \mbox{and}\quad D^2_{(y,q)}J=\left(\begin{array}{cc}
J_{yy}&J_{yq}\\
J_{qy}&J_{qq}
\end{array}
\right).
$$}
 \begin{equation}\label{bns1}
-\partial_t J(t,y,q)+\varrho J(t,y,q)-\mathcal H(t,y,q, D_{(y,q)}J, D^2_{(y,q)}J)=0.
\end{equation}

\item[Step 2.] {\bf(Penalization and perturbation of super-solution)}
From the boundary and polynomial growth conditions on
$J^{sub}$ and $J^{sup}$, we may choose   an  integer $r_2>1$,
a positive constant $r_1<1$ and  $a>1/\alpha^*$   so that
 $$
 \sup_{(t,y, q)\in [0,T)\times(\alpha^*, +\infty)\times (0, +\infty)}\frac{|J^{sub}(t,y,q)|+|J^{sup}(t,y,q)|}{1+  \left[\ln\left(ay\right)\right]^{r_1}+ q^{r_2}}<\infty.
 $$
We then consider the function
$$
\Upsilon(t, y,q)=e^{-\varrho t}(1+\left[\ln(ay)\right]^{\frac{r_1+1}{2}} +q^{2r_2}).
$$
 A direct calculation shows that
 $$
 \begin{array}{lll}
&& \displaystyle (\partial_t+\beta y\partial_y+\frac{1}{2}\xi^2y^2\partial_{yy})  \Upsilon   \\
&=&\displaystyle -e^{-\varrho t}\Big\{\varrho-\frac{\beta(1+r_1)}{2[\ln(ay)]^{\frac{1-r_1}{2}}} +  \frac{\xi^2}{4} \left[\frac{1-r_1^2}{2}[\ln(ay)]^{\frac{r_1-3}{2}}+(1+r_1)[\ln(ay)]^{\frac{r_1-1}{2}}\right] \\
&&\displaystyle   + \varrho [\ln(ay)]^{\frac{r_1+1}{2}} +\varrho q^{2r} \Big\}\le 0,
\end{array}
 $$
as long as
$ \varrho>\max\left\{0,\frac{\beta(1+r_1)}{2[\ln(a\alpha^*)]^{\frac{1-r_1}{2}}}\right\}$.
This implies that for all $\omega>0$, the function $J^{sup}_\omega=J^{sup}+\omega \Upsilon$ is, as $J^{sup}$, a  super-solution to Eq. (\ref{HJB2}).  Furthermore, according to  the    growth conditions  on $J^{sub}$, $J^{sup}$ and   $\Upsilon$, we have, for all $\omega>0$,
 $$
 \lim_{\tiny
 \begin{array}{l}
 (y,q)\in (\alpha^*, +\infty)\times (0, +\infty)\\
  y+p\to +\infty
  \end{array}} \sup _{[0,T]} (J^{sub}-J^{sup}_\omega)(t,y,q)=-\infty.
 $$
 \item[Step 3.]
 Let $\mathcal D=(\alpha^*, +\infty)\times (0, +\infty)$. Without loss of generality, we may assume that the
 supremum of the upper semi-continuous function   $\widetilde J^{sub}-\widetilde J^{sup}$ on $\mathcal  [0,T]\times\bar{\mathcal D}$ is attained on $[0,T]\times (\mathcal O\cup \{y=\alpha^*\}\cup\{q=0\})$ for some   bounded open  set $\mathcal O$ of $(\alpha^*, +\infty)\times (0, +\infty)$. Here $\bar{\mathcal D}$ denotes the closure of  $\mathcal D$.   Otherwise, we can consider $\widetilde J^{sub}-\widetilde J_\omega^{sup}$  (that we described in Step 2)  instead of the original  $\widetilde J^{sub}-\widetilde J^{sup}$, and then passing to the limit $\omega\to 0^+$ at the end of the argument below to obtain the same result.

 Given  that $\widetilde J^{sub} \le \widetilde J^{sup} $ on $\{t=T\}\cup\{y=\alpha^*\}\cup\{q=0\}$, we need to prove that  $\widetilde J^{sub}\le \widetilde J^{sup}$ on $[0,T]\times [\alpha^*, \infty)\times[0,+\infty)$.  A rough framework for the  proof is  through  proof by contradiction.  We assume\footnote{ It is worth noting that under the hypothesis $\widetilde J^{sub} \le \widetilde J^{sup}$ on $\{t=T\}\cup\{y=\alpha^*\}\cup\{q=0\}$, the positive  maximum of $\widetilde J^{sub}-\widetilde{J}^{sup}$ cannot be attained on the boundary   $\{t=T\}\cup\{y=\alpha^*\}\cup\{q=0\}$.}
\begin{equation}\label{def}
M:=\sup_{[0,T]\times \bar{\mathcal D}}(\widetilde J^{sub}-\widetilde J^{sup})=\max_{[0,T)\times \mathcal O}(\widetilde J^{sub}-\widetilde J^{sup})>0.
\end{equation}

We now use the doubling of variables technique by considering,  for any $\epsilon >0$,
the function (for more details, please refer to Pham \cite{Pham09} for instance)
$$
 \Psi_\epsilon(t_1, t_2, y_1,y_2, q_1,q_2)=\displaystyle \widetilde J^{sub}(t_1, y_1,q_1)-\widetilde J^{sup}(t_2,y_2,q_2)-\psi_\epsilon(t_1, t_2, y_1,y_2, q_1,q_2)
$$
where
$$
\psi_\epsilon(t_1, t_2, y_1,y_2, q_1,q_2)=\displaystyle \frac{1}{\epsilon^2}(t_1-t_2)^2+\frac{1}{\epsilon}\left[(y_1-y_2)^2+(q_1-q_2)^2\right].
$$
The upper semi-continuous function $\Psi_\epsilon$ attains its maximum, denoted by $M_\epsilon$, on the compact set $[0,T]^2\times \bar{\mathcal O}^2$ at $(t^\epsilon_1, t^\epsilon_2, (y^\epsilon_1, q^\epsilon_1), (y^\epsilon_2, q^\epsilon_2))$.
Notice that
\begin{equation}\label{epsilon}
\begin{array}{lll}
M\le M_\epsilon&=&\displaystyle \Psi_\epsilon(t^\epsilon_1, t^\epsilon_2, y^\epsilon_1,y^\epsilon_2, q^\epsilon_1,q^\epsilon_2)\\
&=& \widetilde J^{sub}(t^\epsilon_1, y^\epsilon_1, q^\epsilon_1)-\widetilde J^{sup}( t^\epsilon_2, y^\epsilon_2, q^\epsilon_2)-\psi_\epsilon(t^\epsilon_1, t^\epsilon_2, y^\epsilon_1,y^\epsilon_2, q^\epsilon_1,q^\epsilon_2)\\
&\le& \widetilde J^{sub}(t^\epsilon_1, y^\epsilon_1, q^\epsilon_1)-\widetilde J^{sup}( t^\epsilon_2, y^\epsilon_2, q^\epsilon_2),
\end{array}
\end{equation}
for any $\epsilon >0$. It follows from the Bolzano-Weierstrass theorem  that there exists a subsequence of
 $(t^\epsilon_1, t^\epsilon_2, (y^\epsilon_1,q^\epsilon_1), (y^\epsilon_2,q^\epsilon_2))_\epsilon$  converging  to some point
$(\bar t_1, \bar t_2, (\bar y_1, \bar q_1),  (\bar y_2, \bar q_2))$ $\in[0,T]^2\times \bar{\mathcal O}^2$.  From  now on, we will consider such a convergent subsequence when necessary.
Moreover, since
the sequence $( \widetilde J^{sub}(t^\epsilon_1, y^\epsilon_1, q^\epsilon_1)-\widetilde J^{sup}( t^\epsilon_2, y^\epsilon_2, q^\epsilon_2))_\epsilon$ is bounded,  we see from Eq. (\ref{epsilon}) that the sequence $(\psi_\epsilon(t^\epsilon_1, t^\epsilon_2, y^\epsilon_1,y^\epsilon_2, q^\epsilon_1,q^\epsilon_2))_\epsilon$ is also bounded, and hence,  $\bar t_1=\bar t_2=\bar t$,  $\bar y_1=\bar y_2=\bar y$,  $\bar q_1=\bar q_2=\bar q$.
Passing to the limit  $\epsilon\to 0^+$ in Eq. (\ref{epsilon}), we obtain
 \begin{enumerate}
\item[(i)]  $M\le (\widetilde J^{sub}-\widetilde J^{sup})(\bar t_1, \bar y_1, \bar q_1)\le M$,
and hence $M=(\widetilde J^{sub}-\widetilde J^{sup})(\bar t_1, \bar y_1, \bar q_1)$ with $(\bar t_1, \bar y_1, \bar q_1) \in [0,T)\times \mathcal O$ by Eq. (\ref{def});
\item[(ii)]  As $\epsilon\to 0^+$, $M_\epsilon\to M$ and $\psi_\epsilon(t^\epsilon_1, t^\epsilon_2, y^\epsilon_1,y^\epsilon_2, q^\epsilon_1,q^\epsilon_2)\to 0$.
\end{enumerate}

Hence, by definition of $(t^\epsilon_1, t^\epsilon_2, y^\epsilon_1,y^\epsilon_2, q^\epsilon_1,q^\epsilon_2)$,
 \begin{description}
\item[(a)] $(t_1^\epsilon, y_1^\epsilon, q^\epsilon_1)$ is a local maximum of
$$
(t, y,q)\to \widetilde J^{sub}(t, y,q)-\psi_\epsilon(t, t^\epsilon_2, y, y^\epsilon_2, q, q^\epsilon_2)
$$
on $[0,T)\times \mathcal D $; and
\item[(b)] $(t_2^\epsilon, y_2^\epsilon, q^\epsilon_2)$ is a local minimum  of
$$
(t, y,q)\to \widetilde J^{sup}(t, y,q)+\psi_\epsilon( t^\epsilon_1, t,  y^\epsilon_1, y,  q^\epsilon_1, q)
$$
on $[0,T)\times \mathcal D $.
\end{description}

We define the second-order {\it superjets} $\mathcal P^{+, (1,2)} \widetilde J^{sub}(t,y,q)$ of $\widetilde J^{sub}$ at point $(t,y, q)\in [0,T)\times \mathcal D$, and the  second-order {\it subjets} $\mathcal P^{-, (1,2)} \widetilde J^{sup}(t,y,q)$ of $\widetilde J^{sup}$  as follows:
{\scriptsize
 $$
 \begin{array}{lll}
 \quad \displaystyle \mathcal P^{+,(1,2)} \widetilde J^{sub}(t, y,q)\\
 =\Big\{ (p_1, p_2, N)\in \mathbb R\times \mathbb R^2\times \mathcal S_2: \\
 \;\; \displaystyle \limsup_{\tiny\begin{array}{lll}
 ( \delta_t,  \delta_y, \delta_q)\to0\\
  (t+\delta_t, y+\delta_y, q+\delta_q)\in [0,T)\times\mathcal D
 \end{array}}\frac{ \widetilde J^{sub}(t+ \delta_t, y+\delta_y, q+\delta_q)-\widetilde J^{sub}(t, y,q)-p_1 \delta_t-p_2\cdot (\delta_y, \delta_q)-\frac{1}{2}N(\delta_y,\delta_q)^\prime\cdot(\delta_y,\delta_q)}{|\delta_t|+|\delta_y|^2+|\delta_q|^2}\le0\Big\},
 \end{array}
 $$
}and
{\scriptsize
 $$
 \begin{array}{lll}
 \quad \displaystyle \mathcal P^{-,(1,2)} \widetilde J^{sup}(t, y,q)\\
 =\Big\{ (p_1, p_2, N)\in \mathbb R\times \mathbb R^2\times \mathcal S_2: \\
 \;\; \displaystyle \limsup_{\tiny\begin{array}{lll}
 ( \delta_t,  \delta_y, \delta_q)\to0\\
 (t+\delta_t, y+\delta_y, q+\delta_q)\in [0,T)\times \mathcal D
 \end{array}}\frac{ \widetilde J^{sup}(t+ \delta_t, y+\delta_y, q+\delta_q)-\widetilde J^{sup}(t, y,q)-p_1 \delta_t-p_2\cdot (\delta_y, \delta_q)-\frac{1}{2}N(\delta_y,\delta_q)^\prime\cdot(\delta_y,\delta_q)}{|\delta_t|+|\delta_y|^2+|\delta_q|^2}\ge0\Big\},
 \end{array}
 $$
}
where
$\mathcal S_2$ is the set of symmetric $2\times 2$ matrices. From the definitions, we obtain
\begin{equation}\label{uo}
\begin{array}{l}
\displaystyle \left(\partial_{t_1} \psi_\epsilon, D_{(y_1,q_1)}\psi_\epsilon, D_{(y_1,q_1)}^2 \psi_\epsilon\right)(t_1^\epsilon, t_2^\epsilon, y^\epsilon_1, y^\epsilon_2, q^\epsilon_1, q_2^\epsilon)  \in \mathcal P^{+, (1,2)} \widetilde J^{sub}(t_1^\epsilon, y^\epsilon_1, q^\epsilon_1)\\
\displaystyle \left(-\partial_{t_2} \psi_\epsilon, -D_{(y_2,q_2)}\psi_\epsilon, -D_{(y_2,q_2)}^2 \psi_\epsilon\right)(t_1^\epsilon, t_2^\epsilon, y^\epsilon_1, y^\epsilon_2, q^\epsilon_1, q_2^\epsilon)  \in \mathcal P^{-, (1,2)} \widetilde J^{sup}(t_2^\epsilon, y^\epsilon_2, q^\epsilon_2).
\end{array}
\end{equation}
It is because (similar analysis can be applied to $\widetilde J^{sup}(t,y,q)$),
$$
\begin{array}{lll}
\widetilde J^{sub}(t,y,q)&\le&\displaystyle  \widetilde J^{sub}(t_1^\epsilon, y_1^\epsilon, q_1^\epsilon)+\psi_\epsilon(t, t_2^\epsilon, y, y_2^\epsilon, q, q_2^\epsilon)-\psi_\epsilon(t_1^\epsilon, t_2^\epsilon, y_1^\epsilon, y_2^\epsilon, q_2^\epsilon, q_2^\epsilon)\\
&=& \displaystyle  \widetilde J^{sub}(t_1^\epsilon, y_1^\epsilon, q_1^\epsilon)+\partial_{t_1}\psi_\epsilon(t_1^\epsilon, t_2^\epsilon, y_1^\epsilon, y_2^\epsilon, q_2^\epsilon, q_2^\epsilon)(t-t_1^\epsilon)\\
&&\displaystyle +D_{(y_1, q_1)}\psi_\epsilon(t_1^\epsilon, t_2^\epsilon, y_1^\epsilon, y_2^\epsilon, q_2^\epsilon, q_2^\epsilon) \cdot (y-y_1^\epsilon, q-q_1^\epsilon)\\
&&\displaystyle +\frac{1}{2}D^2_{(y_1, q_1)} \psi_\epsilon (t_1^\epsilon, t_2^\epsilon, y_1^\epsilon, y_2^\epsilon, q_2^\epsilon, q_2^\epsilon)(y-y_1^\epsilon, q-q_1^\epsilon)\cdot (y-y_1^\epsilon, q-q_1^\epsilon)\\
&&\displaystyle +o(|t-t_1^\epsilon|+|y-y_1^\epsilon|^2+|q-q_1^\epsilon|^2).
\end{array}
$$
Actually,   Eq. (\ref{uo}) holds  true for any  test    function $\psi\in \mathcal C^{1,2,2}([0,T)\times(\alpha^*, +\infty)\times (0, +\infty))$  of $\widetilde J^{sub}$ and $\widetilde J^{sup}$,   and the converse property also holds true: for any $(p_1, p_2, N)\in \mathcal P^{+, (1,2)} \widetilde J^{sub}(t, y,q)$, there exists $\psi\in \mathcal C^{1,2,2}([0,T)\times (\alpha^*, +\infty)\times (0,+\infty))$ satisfying
$$
(p_1,p_2, N)= \left(\partial_{t} \psi, D_{(y,q)}\psi, D_{(y,q)}^2 \psi\right)(t,  y, q)  \in \mathcal P^{+, (1,2)} \widetilde J^{sub}(t, y, q).
$$
We refer to Lemma 4.1 in Chapter V of  \cite{Fso06} for more details.  An equivalent definition of viscosity solutions in terms of {\it superjets} and {\it subjets} are given in the following proposition.

\begin{proposition}\label{proposition4}
An upper semi-continuous (resp. lower semi-continuous)  function $\omega$ on $[0,T)\times \mathcal D$ is a viscosity sub-solution (resp.   super-solution) of Eq. (\ref{bns1})  on   $[0,T)\times \mathcal D$ if and only if for all $(t,y,q)\in [0,T)\times \mathcal D$ and  all $(p_1, p_2, N)\in \mathcal P^{+, (1,2)} \omega(t,y,q)$ (resp.  $\mathcal P^{-, (1,2)} \omega(t,y,q)$),
$$
-p_1+\varrho \omega(t,y,q)-\mathcal H(t,y,q, p_2, N)\le \mbox{(resp. $\ge$)} \quad 0.
$$
\end{proposition}

The key tool in the comparison proof for second-order equations in the theory of viscosity solutions is a lemma in analysis due to Ishii.
We state this lemma without proof, and refer the reader to
Lemma 4.4.6 in \cite[P. 80]{Pham09} and
Lemma 3.6 in \cite[P. 32]{Koike} for more details.

\begin{lemma}{\bf(Ishii's lemma)}
Let $U$ (resp. $V$) be a upper semi-continuous (resp.  lower semi-continuous) function on $[0,T)\times \mathbb R^n$, $\psi\in \mathcal C^{1,1,2,2}([0,T)^2\times \mathbb R^n\times \mathbb R^n)$, and $(\bar t, \bar s, \bar x, \bar y)\in [0,T)^2\times \mathbb R^n\times \mathbb R^n$ a local maximum of $U(t, x)-V(s, y)-\psi(t,s,x,y)$.
Then, for all $\eta>0$, there exist $N_1, N_2\in \mathcal S_n$ satisfying
$$
\begin{array}{lll}
\displaystyle \left(\partial_t \psi, D_x\psi, N_1\right)(\bar t, \bar s, \bar x, \bar y)\in \mathcal P^{+, (1,2)}U(\bar t , \bar x),\\
\displaystyle \left(-\partial_s \psi, -D_y\psi, N_2\right)(\bar t, \bar s, \bar x, \bar y)\in \mathcal P^{-, (1,2)}V(\bar s , \bar y),\\
\end{array}
$$
and
$$
\begin{pmatrix}
N_1&0\\
0&-N_2
\end{pmatrix}
\le D_{(x, y)}^2\phi(\bar t, \bar s, \bar x, \bar y)+\eta \left(D_{(x,y)}^2\phi(\bar t ,\bar s, \bar x, \bar y)\right)^2.
$$
Here, $\mathcal S_n$ is the set of symmetric $n\times n$ matrices.
\end{lemma}

We shall use Ishii's lemma with
$$
\psi_\epsilon(t_1, t_2, y_1, y_2, q_1, q_2)= \frac{1}{\epsilon^2}(t_1-t_2)^2+\frac{1}{\epsilon}\left[(y_1-y_2)^2+(q_1-q_2)^2\right].
$$
Then, direct differentiations yield
\begin{equation}\label{eq1}
\left\{
\begin{array}{ll}
\displaystyle \partial_{t_1}\psi_\epsilon(t_1, t_2, y_1, y_2, q_1, q_2)=-\partial_{t_2}\psi_\epsilon(t_1, t_2, y_1, y_2, q_1, q_2)=\frac{2}{\epsilon^2}(t_1-t_2)\\
\displaystyle  \partial_{y_1}\psi_\epsilon(t_1, t_2, y_1, y_2, q_1, q_2)=-\partial_{y_2}\psi_\epsilon(t_1, t_2, y_1, y_2, q_1, q_2)=\frac{2}{\epsilon}(y_1-y_2)\\
\displaystyle  \partial_{q_1}\psi_\epsilon(t_1, t_2, y_1, y_2, q_1, q_2)=-\partial_{q_2}\psi_\epsilon(t_1, t_2, y_1, y_2, q_1, q_2)=\frac{2}{\epsilon}(q_1-q_2),
\end{array}
\right.
\end{equation}
$$
D_{((y_1, q_1),(y_2,q_2))}^2\psi(t_1, t_2, y_1, y_2, q_1, q_2)=\frac{2}{\epsilon}
\begin{pmatrix}
  I_2&-I_2\\
  -I_2&I_2
\end{pmatrix},
$$
and
$$
\begin{array}{lll}
\displaystyle \big(D_{(y_1, q_1), (y_2, q_2)}^2\psi(t_1, t_2, y_1, y_2, q_1, q_2)\big)^2
= \frac{8}{\epsilon^2}\begin{pmatrix}
I_2&-I_2\\
-I_2&I_2
\end{pmatrix}.
\end{array}
$$
Furthermore, by choosing $\eta=\epsilon$ in Lemma 1, there exist matrices $N_1$ and $N_2 \in S_2$ such that
\begin{equation}\label{eq2}
\begin{pmatrix}
N_1&0\\
0&-N_2
\end{pmatrix}
\le\frac{10}{\epsilon}
\begin{pmatrix}
 I_2&-I_2\\
  -I_2&I_2
\end{pmatrix}.
\end{equation}
In view of Ishii's lemma and Eq. (\ref{eq1}),
$$
\begin{array}{l}
\displaystyle \left(\frac{2}{\epsilon^2}(t_1^\epsilon-t_2^\epsilon), \; \; \Big(\frac{2}{\epsilon}(y_1^\epsilon- y_2^\epsilon)\;, \frac{2}{\epsilon}(q_1^\epsilon- q_2^\epsilon)\Big),\;\; N_1 \right)\in \mathcal P^{+, (1,2)} \widetilde J^{sub}(t_1^\epsilon, y_1^\epsilon, q_1^\epsilon),\\
\displaystyle \left(\frac{2}{\epsilon^2}(t_1^\epsilon-t_2^\epsilon), \; \; \Big(\frac{2}{\epsilon}(y_1^\epsilon-y_2^\epsilon)\;, \frac{2}{\epsilon}(q_1^\epsilon- q_2^\epsilon)\Big),\;\; N_2 \right)\in \mathcal P^{-, (1,2)} \widetilde J^{sup}(t_2^\epsilon, y_2^\epsilon, q_2^\epsilon).
\end{array}
$$

From viscosity sub-solution and super-solution characterization (i.e., Proposition 4) of $\widetilde J^{sub}$ and $\widetilde J^{sup}$ in terms of {\it superjets} and {\it subjets}, we then have\footnote{We use  the facts that
$\displaystyle \psi_\epsilon(t_1^\epsilon, t^\epsilon_2, y_1^\epsilon, y^\epsilon_2, q_1^\epsilon, q^\epsilon_2)=\displaystyle \widetilde J^{sub}(t_1^\epsilon, y_1^\epsilon, q_1^\epsilon)$
and
$ -\psi_\epsilon(t_1^\epsilon, t^\epsilon_2, y_1^\epsilon, y^\epsilon_2, q_1^\epsilon, q^\epsilon_2)=\displaystyle \widetilde J^{sup}(t_2^\epsilon, y_2^\epsilon, q_2^\epsilon) $.}
$$
\begin{array}{l}
\displaystyle
-\frac{2}{\epsilon^2}(t_1^\epsilon-t_2^\epsilon)  +\varrho \widetilde J^{sub}(t_1^\epsilon, y_1^\epsilon, q_1^\epsilon)-\mathcal H(t_1^\epsilon, y_1^\epsilon, q_1^\epsilon,   \Big(\frac{2}{\epsilon}(y_1^\epsilon- y_2^\epsilon), \frac{2}{\epsilon}(q_1^\epsilon- q_2^\epsilon)\Big), N_1)\le0,\\
\displaystyle-\frac{2}{\epsilon^2}(t_1^\epsilon-t_2^\epsilon) +\varrho \widetilde J^{sup}(t_2^\epsilon, y_2^\epsilon, q_2^\epsilon)-\mathcal H(t_2^\epsilon, y_2^\epsilon, q_2^\epsilon,   \Big(\frac{2}{\epsilon}(y_1^\epsilon- y_2^\epsilon), \frac{2}{\epsilon}(q_1^\epsilon-q_2^\epsilon)\Big), N_2)\ge0.
\end{array}
$$
By subtracting the above two inequalities, we get
\begin{equation}\label{ghj}
 \begin{array}{lll}
\displaystyle \varrho [\widetilde J^{sub}(t_1^\epsilon, y_1^\epsilon, q_1^\epsilon)-\widetilde J^{sup}(t_2^\epsilon, y_2^\epsilon, q_2^\epsilon)]
&\le& \displaystyle  \mathcal H\left(t_1^\epsilon, y_1^\epsilon, q_1^\epsilon,   \Big(\frac{2}{\epsilon}(y_1^\epsilon-y_2^\epsilon), \frac{2}{\epsilon}(q_1^\epsilon- q_2^\epsilon)\Big), N_1\right)\\
&&\displaystyle - \mathcal H\left(t_2^\epsilon, y_2^\epsilon, q_2^\epsilon,   \Big(\frac{2}{\epsilon}(y_1^\epsilon- y_2^\epsilon), \frac{2}{\epsilon}(q_1^\epsilon-q_2^\epsilon)\Big), N_2\right).
 \end{array}
 \end{equation}
Let us recall that
$$
 \mathcal H(t, y,q,p, N)= \displaystyle \frac{1}{2} tr(\Sigma\Sigma^\prime(y) N)+\beta y p_1-e^{\varrho t}\gamma\sigma^2 q^2+ \widehat {\mathcal H}(t, q, p_2),
$$
where $g(\theta)=-\nu\theta$, $f(\theta)=-\eta\theta$ and
 $$
 \begin{array}{lll}
\displaystyle  \widehat{\mathcal H}(t, q, p_2)&=&\displaystyle \max_{\theta\ge0}\left\{ e^{\varrho t}[g(\theta)\theta+f(\theta)q]-p_2\theta\right\}\\
&=&\displaystyle \left\{
 \begin{array}{lll}
 \displaystyle 0,&& \mbox{ if  $p_2\ge- \eta q e^{\varrho t}$}  \\
 \displaystyle \frac{1}{4\nu}e^{\varrho t}\left[\eta q+e^{-\varrho t}  p_2\right]^2,
&& \mbox{otherwise}.
 \end{array}
 \right.
 \end{array}
 $$

 The consequence on $\widehat {\mathcal H}$ is the crucial inequality
  \begin{equation}\label{ok23}
\left|\widehat{ \mathcal H}(t,  q, p_2)-\widehat{\mathcal H}(t^\prime, q^\prime, p_2)\right| \le C((1+|p_2|^2)|t-t^\prime| +(1+|p_2|)|q-q^\prime|),
 \end{equation}
 for all $(t,t^\prime,q,q^\prime,p_2)\in[0,T]^2\times(0, +\infty)^2\times \mathbb R$, where    $C$  is a constant depending  on    $\eta$, $\nu$, $\rho$, $T$ and $\mathcal O$.  We refer interested readers to Appendix A   for  the proof of Inequality (\ref{ok23}).
 Therefore,
 \begin{equation}\label{cont}
 \begin{array}{lll}
&&\displaystyle  \varrho [\widetilde J^{sub}(t_1^\epsilon, y_1^\epsilon, q_1^\epsilon)-\widetilde J^{sup}(t_2^\epsilon, y_2^\epsilon, q_2^\epsilon)]\\
 &\le&\displaystyle \frac{1}{2} tr(\Sigma\Sigma^\prime(y_1^\epsilon)N_1-\Sigma\Sigma^\prime(y_2^\epsilon)N_2)+ \frac{2 }{\epsilon}\beta|y_1^\epsilon-y_2^\epsilon|^2
 -e^{\varrho t_1^\epsilon}\gamma \sigma^2q_1^\epsilon+e^{\varrho t_2^\epsilon}\gamma \sigma^2q_2^\epsilon  \\
 &&\displaystyle +C \left(\frac{4}{\epsilon^2}|t_1^\epsilon-t_2^\epsilon|\cdot |q_1^\epsilon-q_2^\epsilon|^2+ |t_1^\epsilon-t_2^\epsilon|+|q_1^\epsilon-q_2^\epsilon|+\frac{2}{\epsilon}|q_1^\epsilon-q_2^\epsilon|^2\right).
 \end{array}
 \end{equation}
Due  to the  fact that  $\psi_\epsilon\to0$, as $\epsilon \to 0^+$, we have
$$
 |t_1^\epsilon-t_2^\epsilon|=o(\epsilon),\quad  |y_1^\epsilon-y_2^\epsilon|=o(\epsilon^{1/2})\quad \mbox{and}\quad |q_1^\epsilon-q_2^\epsilon|=o(\epsilon^{1/2}).
 $$
Hence,  as $\epsilon\to 0^+$,
$$
 \frac{4}{\epsilon^2}|t_1^\epsilon-t_2^\epsilon|\cdot |q_1^\epsilon-q_2^\epsilon|^2+|t_1^\epsilon-t_2^\epsilon|+|q_1^\epsilon-q_2^\epsilon|+\frac{2}{\epsilon}|q_1^\epsilon-q_2^\epsilon|^2 \to 0.
$$
We now use  Eq. (\ref{eq2}) to obtain
 \begin{equation}\label{cont1}
 \begin{array}{lll}
&&\displaystyle  \frac{1}{2} tr(\Sigma\Sigma^\prime(y_1^\epsilon)N_1-\Sigma\Sigma^\prime(y_2^\epsilon)N_2)\\
 &\le & \displaystyle \frac{1}{2} tr\left(\left[
 \begin{array}{cc}
 \Sigma\Sigma^\prime(y_1^\epsilon)& \Sigma(y_1^\epsilon)\Sigma^\prime(y_2^\epsilon)\\
 \Sigma(y_2^\epsilon)\Sigma^\prime(y_1^\epsilon)&  \Sigma\Sigma^\prime(y_2^\epsilon)
 \end{array}
 \right]\left[
 \begin{array}{cc}
 N_1&0\\
 0&N_2
 \end{array}
 \right]
 \right)\\
 &\le& \displaystyle \frac{5}{\epsilon}   tr\left(\left[
 \begin{array}{cc}
 \Sigma\Sigma^\prime(y_1^\epsilon)& \Sigma(y_1^\epsilon)\Sigma^\prime(y_2^\epsilon)\\
 \Sigma(y_2^\epsilon)\Sigma^\prime(y_1^\epsilon)&  \Sigma\Sigma^\prime(y_2^\epsilon)
 \end{array}
 \right]\left[
 \begin{array}{cc}
 I_2&-I_2\\
 -I_2&I_2
 \end{array}
 \right]
 \right)\\
 &=& \displaystyle \frac{5}{\epsilon} tr \big(\Sigma\Sigma^\prime(y_1^\epsilon)-\Sigma(y_1^\epsilon)\Sigma^\prime(y_2^\epsilon)- \Sigma(y_2^\epsilon)\Sigma^\prime(y_1^\epsilon)+ \Sigma\Sigma^\prime(y_2^\epsilon\big)\\
 &=&\displaystyle \frac{5}{\epsilon} tr\big([ \Sigma(y_1^\epsilon)-\Sigma(y_2^\epsilon)] [ \Sigma^\prime(y_1^\epsilon)-\Sigma^\prime(y_2^\epsilon)]
)\\
&=&\displaystyle \frac{5}{\epsilon} ||\Sigma(y_1^\epsilon)-\Sigma(y_2^\epsilon)||^2\\
&=&\displaystyle \frac{5}{\epsilon}  \xi^2 |y_1^\epsilon-y_2^\epsilon|^2.
  \end{array}
\end{equation}
 Combining the results in Eq. (\ref{cont}) and (\ref{cont1}),
  we conclude via passing to the limit $\epsilon\to0^+$ that $\varrho M\le 0$, which  contradicts to Eq.  (\ref{def}).  As a result, the assumption Eq. (\ref{def}) is false, and hence, the comparison principle $J^{sub}\le J^{sup}$ on $[0,T)\times\mathcal D$ holds.

 \end{description}
 \end{proof}

In Theorem \ref{theorem8}, we prove  that  $H(t,y,q)$ is a viscosity solution of the HJB equation (\ref{HJB2}).
In the proof of Theorem \ref{theorem7}, we verify that
\begin{description}
\item[(g1)] $H(t,y,q)$ satisfies a polynomial growth condition with respect to the inventory variable $q$; and that
\item[(g2)]  $H(t,y,q)$   can be controlled by a $y$-independent term:
$$
\left(\phi + \big[\frac{(2\phi-\eta)^2}{4\nu}-\gamma\sigma^2\big](T-t)\right) q^2.
$$
\end{description}
  Combining these results  with  the  comparison principle, we  can prove that  the value function is  the unique viscosity solution of Eq. (\ref{HJB2}). We provide  this result in the following corollary:
\begin{corollary} {\bf(Uniqueness).}
The value function  $H$ is the unique  viscosity solution of the HJB equation (\ref{HJB2}), satisfying the boundary and terminal conditions Eq. (35.a), (35.b) and (35.c), as well as the growth conditions (g1) and (g2).
\end{corollary}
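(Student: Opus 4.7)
The plan is to deduce uniqueness by a direct double application of the comparison principle (Theorem~\ref{theorem9}), using as inputs the facts that $H$ is already known to be a viscosity solution (Theorem~\ref{theorem8}) and that it satisfies the required growth conditions (Theorem~\ref{theorem7}). First I would let $\widetilde H$ be any continuous viscosity solution of the HJB equation~\eqref{HJB2} on $[0,T)\times(\alpha^*,\infty)\times(0,\infty)$ satisfying the boundary/terminal conditions (35.a)--(35.c) together with the growth conditions (g1) and (g2), and then show $\widetilde H\equiv H$.

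The first thing to verify is that both $H$ and $\widetilde H$ meet the hypotheses of Theorem~\ref{theorem9}. Continuity of $H$ is part (ii) of Theorem~\ref{theorem7}, while $\widetilde H$ is continuous by being a viscosity solution in the sense of Definition~\ref{definition2}. The polynomial growth in $q$ demanded by Theorem~\ref{theorem9} is precisely condition (g1). For the growth in $y$, the key observation is that condition (g2) furnishes a bound on $|H|$ (and, by assumption, on $|\widetilde H|$) that is independent of $y$, and any $y$-independent quantity is trivially dominated by $[\ln(ay)]^{r}$ for any chosen $r\in(0,1)$ and any $a>1/\alpha^*$; so the sub-logarithmic growth hypothesis of Theorem~\ref{theorem9} is automatically satisfied.

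Next I would check the boundary inequality $H=\widetilde H$ on $\{t=T\}\cup\{y=\alpha^*\}\cup\{q=0\}$. Equalities at $t=T$ and at $y=\alpha^*$ are immediate from (35.a) and (35.b), respectively, which give $H(T,y,q)=\widetilde H(T,y,q)=-\phi q^{2}$ and $H(t,\alpha^{*},q)=\widetilde H(t,\alpha^{*},q)=-\phi q^{2}$. On the face $\{q=0\}$, starting from zero inventory the only admissible strategy is $\theta\equiv 0$, so the running functional and the terminal penalty both vanish, forcing $H(t,y,0)=0=\widetilde H(t,y,0)$. Consequently the boundary equality $H=\widetilde H$, and in particular both $H\le\widetilde H$ and $\widetilde H\le H$, hold on the parabolic boundary.

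With these ingredients in place, the conclusion is a two-line invocation: regarding $H$ as an upper semi-continuous viscosity sub-solution and $\widetilde H$ as a lower semi-continuous viscosity super-solution, Theorem~\ref{theorem9} yields $H\le\widetilde H$ throughout $[0,T)\times(\alpha^{*},\infty)\times(0,\infty)$; exchanging the roles of sub- and super-solution gives $\widetilde H\le H$, and therefore $\widetilde H\equiv H$. The proof is essentially bookkeeping; the only mild point requiring attention is confirming that the $y$-uniform bound from (g2) falls under the sub-logarithmic growth regime of Theorem~\ref{theorem9}, which is immediate. All of the genuine analytic work has already been discharged in Theorems~\ref{theorem7}, \ref{theorem8}, and~\ref{theorem9}.
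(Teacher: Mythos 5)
Your proposal is correct and follows essentially the same route as the paper: a double application of the comparison principle (Theorem~\ref{theorem9}), using Theorem~\ref{theorem8} for existence and the growth/continuity facts from Theorem~\ref{theorem7}, with equality on the parabolic boundary $\{t=T\}\cup\{y=\alpha^*\}\cup\{q=0\}$ supplying the hypothesis of the comparison theorem. Your explicit verification that the $y$-independent bound from (g2) is dominated by $[\ln(ay)]^{r}$ is a useful detail the paper leaves implicit, but it does not change the argument.
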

\begin{proof}
Suppose  $H_1$ and $H_2$ are two viscosity solutions of the HJB equation,
Eq. (\ref{HJB2}), then
$H_1(T, y,q)=H_2(T,y,q)=-\phi q^2$,  $H_1(t, \alpha^*, q)=H_2(t, \alpha^*, q)=-\phi q^2$ and $H_1(t, y, 0)=H_2(t, y,0)=0$.
 According to Theorem \ref{theorem9}, if  we view $H_1$ as the sub-solution and $H_2$ as the super-solution, then
$$
H_1(\cdot,\cdot,\cdot)\le H_2(\cdot,\cdot,\cdot)
$$
holds over $[0,T)\times \mathcal D$; on the other hand,  if we view $H_2$ as the sub-solution and $H_1$ as the super-solution, then
$$
H_2(\cdot,\cdot,\cdot)\le H_1(\cdot,\cdot,\cdot)
$$
holds over $[0,T)\times \mathcal D$.  Hence $H_1(\cdot,\cdot,\cdot)\equiv H_2(\cdot,\cdot,\cdot)$.  According to Theorem \ref{theorem8}, the value function  $H(t,y,q)$ is a viscosity solution of Eq. (\ref{HJB2}), so it is the unique one.
\end{proof}

\subsection{Numerical Scheme }
In this section, we mainly discuss the application of finite difference methods on optimal liquidation  problems.

\subsubsection{The Value Function}

Similar to Section 3.2,  we consider an    ansatz of $H(t,y,q)$ that is quadratic in the variable $q$:
 \begin{equation}\label{H}
 H =  h^H(t,y) q^2.
 \end{equation}
With Assumption (\ref{H}), our  optimal liquidation strategy for the associated  unconstrained problem in Eq. (\ref{HJB2})  can   be written in the following feedback form:
 \begin{equation}\label{uc}
 \theta_t^{\phi,*}=-\frac{1}{2\nu} [2h^H(t,y)+\eta]q.
 \end{equation}
 According to the results in  Theorem \ref{theorem6} and Theorem \ref{theorem7}, we have
$$
H(t,\alpha^*, q)\le  H(t,y,q)\le U(t, q), \quad \mbox{for any  $y> \alpha^*$}.
$$
Therefore,
$$
 \theta_t^{\phi,*}=-\frac{1}{2\nu} [2h^H(t,y)+\eta]q\ge -\frac{1}{2\nu} [2c(t)+\eta]q\ge0.
$$
Similar arguments can then be applied here to prove that,
$$
 \int_{[0, T)} \theta_t^{\phi,*}dt =Q -X_{T-}^{\phi, *} \le Q.
 $$
 That is, the unconstrained optimal liquidation strategy (\ref{uc}) is the optimal one for the original  constrained problem.
The  unknown function  $h^H(t,y)$  then solves the following  PDE:
$$
\begin{array}{l}
\mbox{(I)}\quad\quad
\left\{
\begin{array}{lll}
 \displaystyle  \Big(\partial_t+\beta y\partial_y+\frac{1}{2}\xi^2y^2\partial_{yy}\Big) h^H  - \gamma  \sigma^2
+\frac{1}{4\nu} [2h^H(t,y)+\eta]^2=0\\
 \displaystyle h^H(T, y)=-\phi,\quad\quad\quad y>\alpha^*\\
 \displaystyle h^H(t, \alpha^*)=-\phi,\quad\quad\;\;\; 0\le t\le T\\
  \displaystyle \lim_{y\to \infty} h^H(t,y)=c(t)
\end{array}
\right.
\end{array}
$$
where   $c(t)$ is  given in Eq. (\ref{r1}).

\subsubsection{Numerical Scheme for $h^H$}

Define variables $x=\log \frac{y}{\alpha^*}$ and $\tau=T-t$. Set $\widetilde{h}^H=2h^H+\eta$.
The PDE satisfied by the unknown function $\widetilde{h}^H$ (similar methods can be used to the function $f^H$)  can then be written as
$$
\begin{array}{l}
\mbox{(I')}\quad\quad
\left\{
\begin{array}{lll}
 \displaystyle \partial_\tau \widetilde{h}^H=(\beta- \frac{\xi^2}{2})\partial_x \widetilde{h}^H+\frac{\xi^2}{2}\partial_{xx}\widetilde{h}^H  - 2\gamma  \sigma^2
+\frac{1}{2\nu} (\widetilde{h}^H)^2\\
\displaystyle \widetilde{h}^H(0,x)=-2\phi+\eta \\
\displaystyle \widetilde{h}^H(\tau, 0)=-2\phi+\eta\\
 \displaystyle \lim_{x\to \infty} \widetilde{h}^H(t,x)=2c(t)+\eta.
\end{array}
\right.
\end{array}
$$
To approximate the solution of the  PDE,
we discretize variables $\tau $ and $x$ with step sizes $\Delta \tau$  and $\Delta x$, respectively.
The value of $\widetilde h^H$ at a grid point $\tau_i=i\Delta \tau$ ($i=0,1,\cdots, N$ with $\tau_N=T$) and $x_j=j\Delta x$ is denoted by $\widetilde{h}^{H,i}_{j}$.
To approximate the infinite boundary condition
$$
\lim_{x\to \infty} \widetilde{h}^H(t,x)=2c(t)+\eta,
$$
we choose a sufficiently  large  integer $M$, and set $\widetilde h^{H,i}_{M}=2c^i_{M}+\eta$.
In this section,  we use the explicit difference method to perform  these numerical simulations:
$$
\left\{
 \begin{array}{l}
 \displaystyle \partial_\tau \widetilde{h}^H=(\widetilde{h}^{H, i+1}_j-\widetilde{h}^{H, i}_{j})/\Delta\tau\\
 \displaystyle \partial_x \widetilde{h}^H=(\widetilde{h}^{H,i}_{ j+1}-\widetilde{h}^{H, i}_{j-1})/(2\Delta x)\\
 \displaystyle \partial_{xx} \widetilde{h}^H=(\widetilde{h}^{H,i}_{j+1}+\widetilde{h}^{H, i}_{j-1}-2\widetilde{h}^{H,i}_{j})/\Delta x^2.
 \end{array}
\right.
 $$
To numerically solve the nonlinear PDE, we use a single Picard iteration, i.e., approximating a nonlinear term like
$(\widetilde{h}^{H,i}_{j})^2$ by $ \widetilde{h}^{H,i}_{j}\widetilde{h}^{H, i+1}_j$.
We define
$$
r=\Delta \tau\xi^2/\Delta x^2, \ \
v=r/2-\Delta\tau\big(\beta-\frac{\xi^2}{2}\big)/(2\Delta x), \ \
u=r/2+ \Delta\tau\big(\beta-\frac{\xi^2}{2}\big)/(2\Delta x).
$$
After implementing the scheme, we end up with a problem of solving linear system of equations:
   \begin{equation}
  B_i  \widetilde{{\bold h}}^{H, i+1}=A\widetilde{\bold h}^{H, i}-2\Delta \tau \gamma\sigma^2 \bold e
   \end{equation}
   where $\bold e=(1,1,\cdots, 1)^T$ is a $(M+1)$-dimensional  vector,  $B_i$ and $A$  are   square matrices of size $(M+1)$ given by
      $$
   B_i=\begin{pmatrix}
    1-\frac{\Delta \tau\widetilde h^{H,i}_{0}}{2\nu}&&\\
       &\ddots&\\
   && 1-\frac{\Delta \tau\widetilde h^{H, i}_{M}}{2\nu}
   \end{pmatrix}
   \quad {\rm and} \quad
    A=\begin{pmatrix}
 \displaystyle 1-r&u&&\\
   v&\displaystyle 1-r&u&\\
    &\ddots&\ddots&\\
   &&v&\displaystyle 1-r
   \end{pmatrix}.
 $$
The above system is solved for every time step moving forward in time given an initial conditions $\widetilde {\bold h}^{H, 0}=(h^{H, 0}_0,\cdots, h_{ M}^{H,0})^T$ and $B_0$.
Since the associated problem  has defined boundary conditions, for the solution vector $\widetilde{ \bold h}^{H, i}$ of size $(M+1)$, we only need to solve for the middle $M-1$ entries (i.e., $\widetilde h^{H, i}_{0}$ and $\widetilde h^{H, i}_{ M}$ are given by the boundary conditions) at every time iteration.

\subsubsection{Numerical Experiments}

To analyze the effectiveness of this numerical method, we will provide  a comparison of the numerical solution to the closed-form solution of   Model 1.
\begin{table}[H]
\begin{center}
\begin{tabular}{|l|c|c|}
\hline
Length of Time Interval & $T$ & $1$\\
\hline
Time Steps &$N$ & $1000$\\
  \hline
\end{tabular}
\caption{Parameters used in the implementation of the numerical schemes.}
\end{center}
\end{table}
\begin{figure}
\begin{center}
\includegraphics[width=6.3in, height=3.5in]{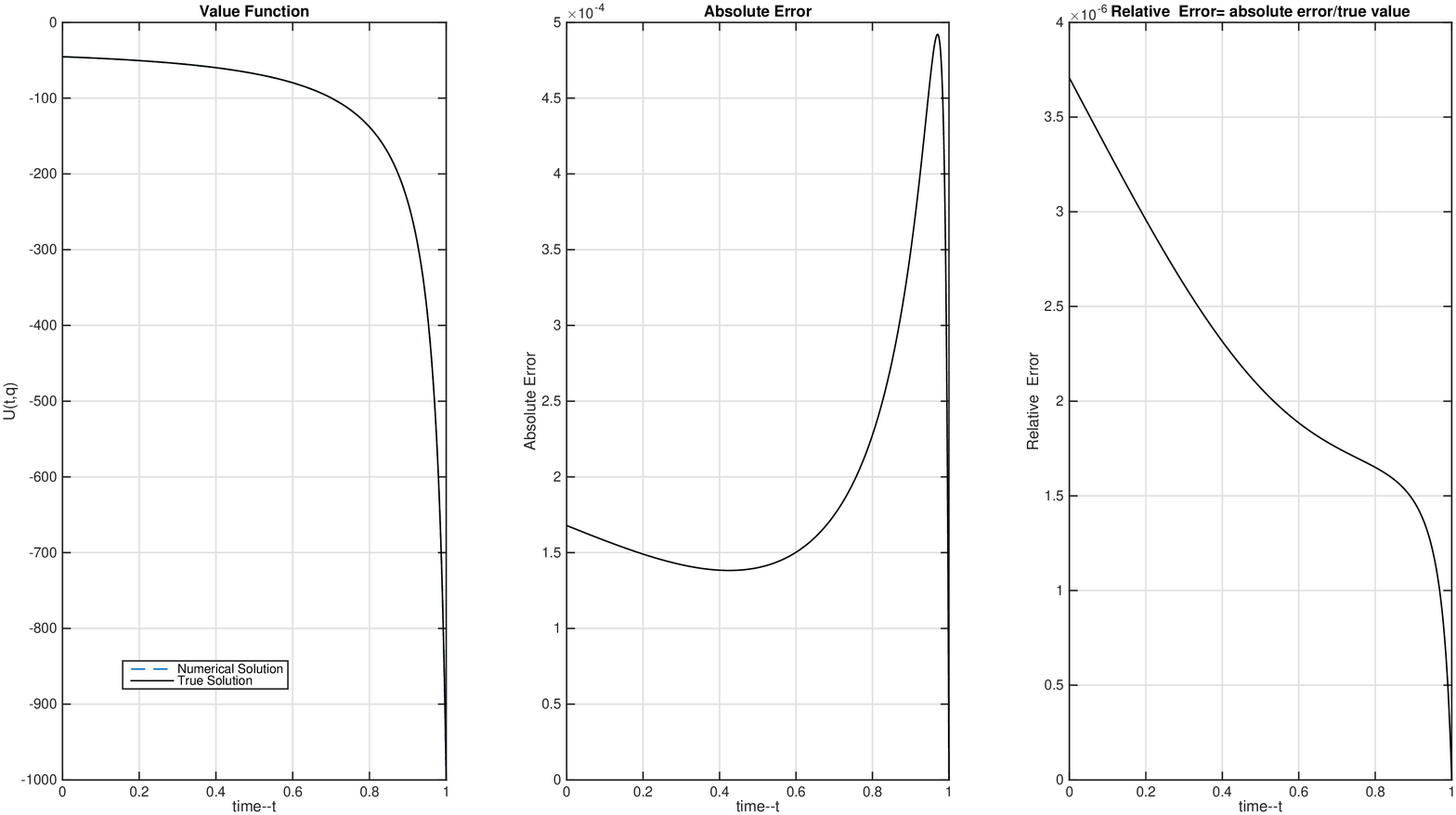}
\caption{ Comparison of the numerical and true solutions for Model 1 with $X_t\equiv Q=100$, for any $t\in[0,T]$. All parameters are the same with those used in Figure 1. }
 \label{error}
\end{center}
\end{figure}
 In the left plot\footnote{Corresponding  to the strategy  of inactive traders, such as buy and hold strategy. } given in Figure \ref{error}, we can observe that it is difficult to observe the discrepancies between the true solution of the HJB equation which is a decreasing function of time $t$,   and the numerical solution provided by our scheme. The plots of the  absolute  error and the corresponding relative error between the two solutions shows that, although there is some difference between the solutions, this  difference has a magnitude of $10^{-4}$ which is negligible given that  the actual solution is of magnitude $100$.  This motivates us to apply the numerical scheme to the optimal liquidation problem with default risk.

\subsubsection{Discretization of the Continuous Process}
Consider the stock issuer's market value $Y_t$.
Denote
$$
\wp_k=\left\{(k-\frac{1}{2})\Delta x<\log\left(\frac{Y_t}{\alpha^*}\right)<(k+\frac{1}{2})\Delta x\right\}.
$$
Set $\log\left(\frac{Y_t}{\alpha^*}\right)=k\Delta x$, while $\log\left(\frac{Y_t}{\alpha^*}\right)\in \wp_k$.
Suppose the initial market value is  $y=1000 \;m$,
where `$m$' represents the unit `million'.
We assume that the barrier $\alpha^*=10\; m$.
Table 2 displays the parameters used in the implementation of the numerical schemes.
Other model parameters not listed in Table 2
are the same with those used in Figure  1.

\begin{table}[H]
\begin{center}
\begin{tabular}{|l|c|c|}
\hline
Length of Time Interval &$T$ & $1$\\
\hline
Time Steps & $N$ & $1000$\\
\hline
Space Steps & $M$ &$10000$\\
\hline
Drift of Stock issuer's Market Value & $\beta$ & $-0.5$\\
\hline
Volatility of Stock issuer's Market Value & $\xi$ & $2$\\
\hline
\end{tabular}
\caption{Parameters used in the implementation of the numerical schemes.}
\end{center}
\label{1}
\end{table}

\begin{figure}
\begin{center}
\includegraphics[width=6.3in, height=3.5in]{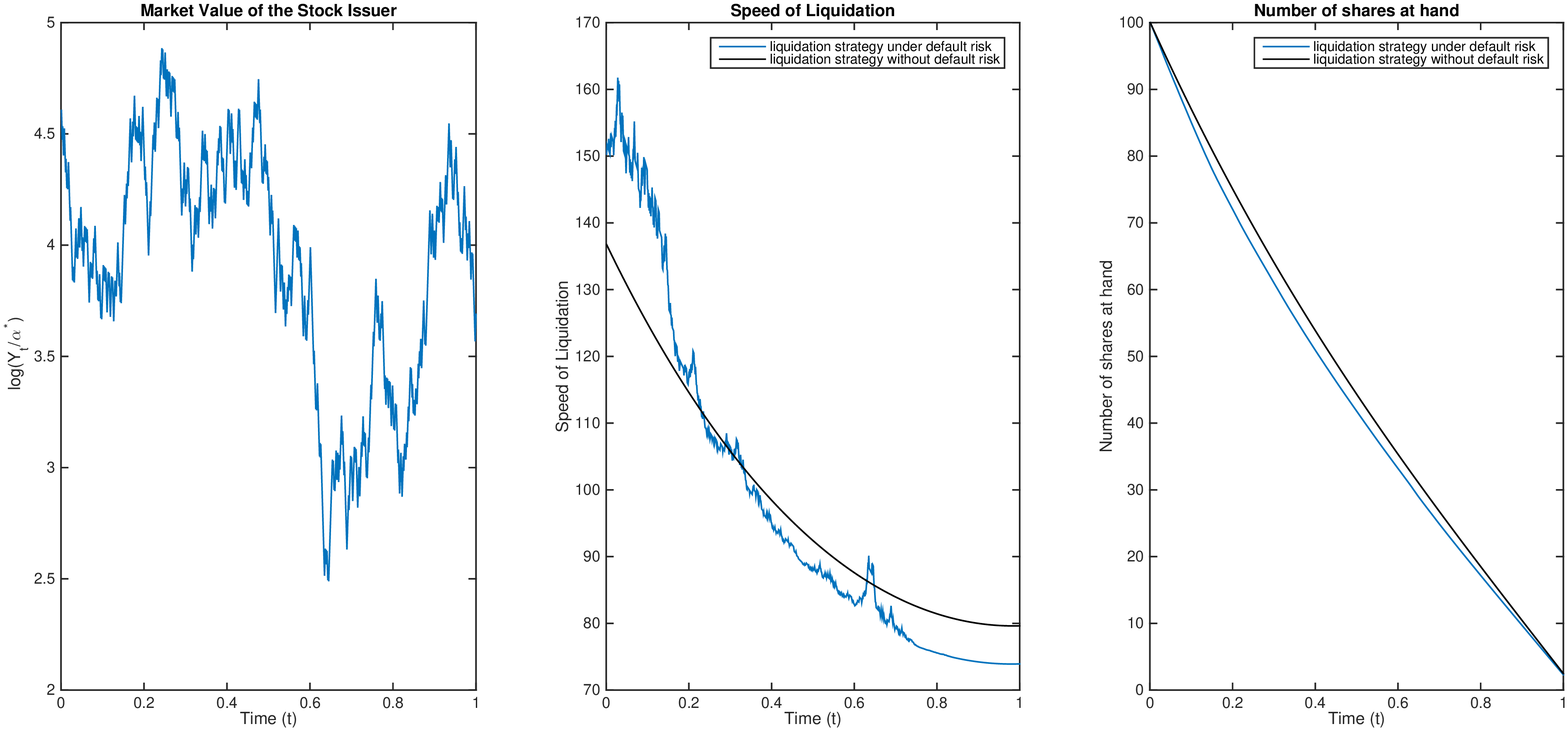}
\caption{A comparison of liquidation strategies with/without counter-party default risk. }
 \label{risk}
\end{center}
\end{figure}
 Figure \ref{risk} shows that the optimal liquidation strategy and the corresponding optimal number of shares in the risky asset  for one simulation of the  market value  path,  $\{\log\big(\frac{Y_t}{\alpha^*}\big),t\in[0,T]\}$.
We observe that strategies under default risk are   inter-temporarily updated. This update process, different from the first model,  depends  not only      on the  remaining time   to liquidate, but also  on the stock issuer's  market value.   Notice that, at time $t=0.6$, there is a sharp drop in   the stock issuer's market value:
 $$
\Delta  \log\left(\frac{Y_{0.6}}{\alpha^*}\right)=\log\left(\frac{Y_{0.6+}}{\alpha^*}\right)-\log\left(\frac{Y_{0.6}}{\alpha^*}\right) =2.5-4=-1.5,
$$
about 424  millions\footnote{
$Y_{0.6}\times(1- e^{-1.5})=\alpha^* e^4\times(1- e^{-1.5})$.}.
To reduce the exposure to the potential position risk incurred by the counter-party's default risk, the agent would therefore speed up his/her liquidation speed. This explains the local peak near $t=0.6$ in the middle plot of Figure \ref{risk}.

\section{Conclusions}

In this paper, we adopt Almgren-Chriss's market impact model and
 relax the assumption of a known pre-determined time horizon to study the  optimal liquidation problem  under a randomly-terminated setting. In some situation, it is more realistic to assume that the liquidation horizon  depends on some of the stochastic factors of the model.
For example, some financial markets adopt the circuit-breaking mechanism, which makes
the horizon of the investor subject to  stock price movement.
Once the stock price touches the daily limits, all transactions of the stock will be suspended.

Optimal liquidation strategy of large trades in a known pre-determined  time horizon is first discussed as a benchmark case.
We then extend our basic model to a randomly-terminated time horizon.
In particular, two different liquidation scenarios are analyzed to shed light on the relation between optimal liquidation strategies and potential liquidity risk subject to either
\begin{enumerate}
\item  an exogenous trigger event controlled by the hazard rate $\{l(t),t\ge0\}$; or
\item a  counterparty risk.
\end{enumerate}
For cases where no closed-form solutions can be obtained,
we study the problem via  the stochastic control approach.
By combining our results with comparison principles for viscosity solutions,
we characterize the value function as the unique viscosity solution of the associated HJB equation, and hence,
  the  optimal liquidation strategies that we found numerically serve as good approximations of the unique solutions according to the theory of viscosity solutions.

\section*{Acknowledgements}

This research work was supported by Research Grants Council of
Hong Kong under Grant Number 17301214 and  HKU Strategic Research Theme in Information and Computing  and National Natural Science Foundation of China under Grant number 11671158.
T. K. Wong is partially supported by the HKU Seed Fund for Basic Research under the project  code 201702159009, and the Start-up Allowance for Croucher Award Recipients.

\section*{Appendix. A}

In this appendix we will prove Inequality (\ref{ok23}), which is a technical and crucial inequality that we used in the proof of Theorem 9.\\

\noindent
{\bf Proof of Inequality (\ref{ok23}).}\\
There are three cases that we need to consider:
\begin{description}
\item[Case 1.] Both $\widehat{\mathcal H}(t, q, p_2)$ and   $\widehat{\mathcal H}(t^\prime, q^\prime, p_2)$ are equal to zero;
\item[Case 2.] Neither  $\widehat{\mathcal H}(t, q, p_2)$ nor    $\widehat{\mathcal H}(t^\prime, q^\prime, p_2)$ are   equal to zero;
\item[Case 3.] Only one of $\widehat{\mathcal H}(t, q, p_2)$ and   $\widehat{\mathcal H}(t^\prime, q^\prime, p_2)$  equals to zero.
\end{description}
It is obvious that  Inequality (\ref{ok23}) holds  under {\bf Case 1}, because the left hand side of (\ref{ok23}) equals to 0 in this case.  Therefore, we will only  focus on the proofs
for Cases 2 and 3.
\begin{description}
\item[Case 2.]  Without loss of generality, we assume that
$\widehat{\mathcal H}(t, q, p_2)\ge \widehat{\mathcal H}(t^\prime, q^\prime, p_2)>0$.
Hence,
\begin{equation}\label{lp}
\begin{array}{lll}
\displaystyle \widehat{\mathcal H}(t, q, p_2)- \widehat{\mathcal H}(t^\prime, q^\prime, p_2)
=  \frac{1}{4\nu}e^{\varrho t}[\eta q +e^{-\varrho t}p_2]^2- \frac{1}{4\nu}e^{\varrho t^\prime}[\eta q^\prime +e^{-\varrho t^\prime}p_2]^2\\
\quad\quad\quad\quad \quad\quad\quad\quad\quad\quad\;\; =\displaystyle \frac{1}{4\nu}\left[ \eta^2 [q^2 e^{\varrho t}-(q^\prime)^2e^{\varrho t^\prime}]
+2\eta p_2(q-q^\prime)+p_2^2\cdot [e^{-\varrho t}-e^{-\varrho t^\prime}]
\right].
\end{array}
\end{equation}
Notice that
$$
 q^2e^{\varrho t}-(q^\prime)^2 e^{\varrho t^\prime}=  q^2[e^{\varrho t}-e^{\varrho t^\prime}]+  e^{\varrho t^\prime}(q-q^\prime)(q+q^\prime).
$$
It follows  from the mean value theorem that there exist $\tilde{t}$ and $\hat{t}$ in between $t$ and $t^\prime$ such that
\begin{equation}\label{pol}
\left\{
\begin{array}{lll}
|e^{\varrho t}-e^{\varrho t^\prime}|=\displaystyle \left[\frac{d}{d t}e^{\varrho t}\right]\Big|_{t=\tilde t} |t-t^\prime|\le \varrho e^{\varrho T} |t-t^\prime|\\
|e^{-\varrho t}-e^{-\varrho t^\prime}|=\displaystyle\left| \left[\frac{d}{d t}e^{-\varrho t}\right]\Big|_{t=\hat t}\right|  |t-t^\prime|\le \varrho  |t-t^\prime|.
\end{array}
\right.
\end{equation}

Since $(t,t^\prime, (y, q), (y^\prime, q^\prime))$ belongs to the bounded set $[0, T)^2\times \bar{\mathcal O}^2$,
there exists a constant $\hat Q$ so that $|q|, |q^\prime|\le \hat Q$.
Therefore,
\begin{equation}\label{pl0}
|q^2e^{\varrho t}-(q^\prime)^2 e^{\varrho t^\prime}| \le  \varrho e^{\varrho T}  \hat Q^2 |t-t^\prime| +  2 \hat Qe^{\varrho T}|q-q^\prime|.
\end{equation}
Combining the results in Eqs. (\ref{lp}), (\ref{pol}),  and (\ref{pl0}),  we conclude that
$$
\widehat{\mathcal H}(t,q, p_2)-\widehat{\mathcal H}(t^\prime,q^\prime, p_2)\le C((1+ |p_2|^2)|t-t^\prime|+(1+|p_2|)|q-q^\prime|),
$$
where $C$ is a constant depending on $\eta$, $\nu$, $\rho$, $T$ and $\hat Q$.

\item[Case 3.] Without loss of generality, we assume that
$$
\widehat{\mathcal H}(t^\prime, q^\prime, p_2)=0\quad {\rm and }\quad \widehat{\mathcal H}(t, q, p_2)>0.
$$
That is,
\begin{equation}\label{fg}
-\eta q^\prime e^{\varrho t^\prime}\le p_2<-\eta q e^{\varrho t}<0.
\end{equation}
Recall that
$$
\begin{array}{lll}
\widehat{\mathcal H}(t,q, p_2)&=&\displaystyle \frac{1}{4\nu}e^{\varrho t}[\eta q +e^{-\varrho t}p_2]^2\\
&=&\displaystyle \frac{1}{4\nu}\left[e^{\varrho t} \eta^2 q^2 +2 \eta q p_2 +e^{-\varrho t} p_2^2\right]  \\
&=&  \displaystyle \frac{1}{4\nu}\left[\eta q[e^{\varrho t} \eta q + p_2]+ p_2[\eta q +p_2e^{-\varrho t^\prime} ]+p_2^2[e^{-\varrho t}-e^{-\varrho t^\prime}]\right].
\end{array}
$$
From the second inequality in Eq. (\ref{fg}),
we have   $e^{\varrho t} \eta q + p_2<0$.
From the first inequality in Eq. (\ref{fg}), we have
$\eta q+p_2 e^{-\varrho t^\prime}\ge \eta (q-q^\prime)$.
Since $p_2<0$, we  obtain
$$
p_2(\eta q+p_2 e^{-\varrho t^\prime})\le \eta p_2(q-q^\prime).
$$
Therefore,
\begin{equation}\label{kk0}
\begin{array}{lll}
\widehat{\mathcal H} (t, q, p_2) &\le&\displaystyle  \frac{1}{4\nu}\left[ \eta p_2(q-q^\prime)+p_2^2[e^{-\varrho t}-e^{-\varrho t^\prime}]\right]\\
&\le &\displaystyle  \frac{1}{4\nu}\left[ \eta |p_2|  |q-q^\prime|+\varrho |p_2|^2 |t-t^\prime|\right].
\end{array}
\end{equation}
The last inequality in Eq. (\ref{kk0}) is due to the result in Eq. (\ref{pol}).
Hence,  $\widehat{\mathcal H}$ satisfies the inequality
$$
\big|\widehat{\mathcal H}(t,q, p_2)-\widehat{\mathcal H}(t^\prime,q^\prime, p_2)\big|\le\frac{\eta}{4\nu} |p_2||q-q^\prime|+\frac{\rho}{4\nu}|p_2|^2|t-t^\prime|,
$$
which implies Inequality (\ref{ok23}).
\end{description}

\end{document}